\renewcommand{\cref}{\Cref}
\def\thmt@refnamewithcomma #1#2#3,#4,#5\@nil{%
  \@xa\def\csname\thmt@envname #1utorefname\endcsname{#3}%
  \ifcsname #2refname\endcsname
    \csname #2refname\expandafter\endcsname\expandafter{\thmt@envname}{#3}{#4}%
  \fi
}
\declaretheorem[within=section, name=Theorem, Refname=Theorem]{thm}
\crefname{thm}{Theorem}{Theorems}
\declaretheorem[within=section, name=Conjecture, Refname=Conjecture]{conjecture}
\crefname{conjecture}{Conjecture}{Conjectures}
\declaretheorem[sibling=thm, name=Claim, Refname=Claim]{claim}
\crefname{claim}{Claim}{Claims}
\declaretheorem[sibling=thm, name=Lemma, Refname=Lemma]{lem}
\crefname{lem}{Lemma}{Lemmas}
\declaretheorem[sibling=thm, name=Definition, Refname=Definition]{define}
\crefname{define}{Definition}{Definitions}
\declaretheorem[sibling=thm, name=Corollary, Refname=Corollary]{cor}
\crefname{cor}{Corollary}{Corollaries}
\declaretheorem[sibling=thm, name=Proposition, Refname=Proposition]{prop}
\crefname{prop}{Proposition}{Proposit	ions}
\declaretheorem[sibling=thm, name=Remark, Refname=Remark]{remark}
\crefname{remark}{Remark}{Remarks}
\renewcommand{\leq}{\leqslant}
\renewcommand{\geq}{\geqslant}
\renewcommand{\ge}{\geqslant}
\renewcommand{\le}{\leqslant}
\newcommand{\remove}[1]{}
\newcommand{\poly}{{\rm poly}}
\newcommand{\Tr}{{\rm Tr}}
\newcommand{\eps}{{\varepsilon}}
\renewcommand{\l}{\left}
\renewcommand{\r}{\right}
\newcommand{\de}{{\delta}}
\newcommand{\comments}[1]{}
\newcommand{\rank}{\textnormal{rank}}
\newcommand{\depth}{\textnormal{depth}}
\renewcommand{\deg}{\textnormal{deg}}
\renewcommand{\dim}{\textnormal{dim}}
\newcommand{\LDR}{\textnormal{LDR}}
\newcommand{\dist}{\textnormal{dist}}
\newcommand{\bias}{\textnormal{bias}}
\newcommand{\calF}{\mathcal{F}}
\newcommand{\calK}{\mathcal{K}}
\newcommand{\RM}{\textnormal{RM}}
\newcommand{\K}{\mathbb{K}}
\newcommand{\set}[1]{\{ #1 \}}
\newcommand{\zo}{\{0,1\}}
\def\F{{\mathbb{F}}}
\newcommand{\R}{\mathbb{R}}
\newcommand{\N}{\mathbb{N}}
\newcommand{\T}{\mathbb{T}}
\DeclareMathOperator*{\E}{\mathbb{E}}
\newcommand{\Z}{\mathbb{Z}}
\newcommand{\U}{\mathbb{U}}
\newcommand{\De}{\Delta}
\renewcommand{\P}{\mathcal{P}}
\newcommand{\calL}{\mathcal{L}}
\newcommand{\calH}{\mathcal{H}}
\newcommand{\calA}{\mathcal{A}}
\newcommand{\B}{\mathcal{B}}
\newcommand{\cC}{\mathcal{C}}
\newcommand{\C}{\mathbb{C}}
\renewcommand{\Pr}{\mathbf{Pr}}
\renewcommand{\char}{\textnormal{char}}
\def\draft{0}   
    \def\ShowAuthNotes{1}
    \def\ShowAuthNotes{0}
\renewcommand{\todo}[2][]{%
    \@todo[caption={#2}, #1]{\begin{spacing}{0.5}#2\end{spacing}}%
} 
\newcounter{mynotes}
\newcommand{\mnote}[2]{\addtocounter{mynotes}{1}{{}}\todo[color=blue!20!white]{[\arabic{mynotes}] \tiny  {{#1}: {\sf {#2}}}}}
\newcommand{\mnote}[2]{}
\newcommand{\ArnabNote}{\mnote{Arnab}}
\newcommand{\wt}{\mathsf{wt}}
\begin{document}
\title{Using higher-order Fourier analysis over general fields}

\author{
Arnab Bhattacharyya\thanks{Supported in part by a DST Ramanujan Fellowship.}\\
Department of Computer Science \& Automation\\
Indian Institute of Science\\
\texttt{arnabb@csa.iisc.ernet.in}
\and
Abhishek Bhowmick\thanks{Research supported in part by NSF Grant CCF-1218723.}\\
Department of Computer Science\\
The University of Texas at Austin\\
\texttt{bhowmick@cs.utexas.edu}
}

\maketitle

\begin{abstract}
Higher-order Fourier analysis, developed over prime fields, has been recently used in different areas of computer science, including list decoding, algorithmic decomposition and testing. We extend the tools of higher-order Fourier analysis to analyze functions over general fields. Using these new tools, we revisit the results in the above areas.
\begin{enumerate*}
\item[(i)]
For any fixed finite field $\K$, we show that the list decoding radius of the generalized Reed Muller code over $\K$ equals the minimum distance of the code. Previously, this had been proved over prime fields \cite{BL14} and for the case when $|\K|-1$ divides the order of the code \cite{GKZ08}.
\item[(ii)]
For any fixed finite field $\K$, we give a polynomial time algorithm to decide whether a given polynomial $P: \K^n \to \K$ can be decomposed as a particular composition of lesser degree polynomials. This had been previously established over prime fields \cite{B14, BHT15}.
\item[(iii)]
For any fixed finite field $\K$, we prove that all locally characterized affine-invariant properties of functions $f: \K^n \to \K$ are testable with one-sided error.  The same result was known when $\K$ is prime \cite{BFHHL13} and when the property is linear \cite{KS08}. Moreover, we show that for any fixed finite field $\F$, an affine-invariant property of functions $f: \K^n \to \F$, where $\K$ is a growing field extension over $\F$, is testable if it is locally characterized by constraints of bounded weight.
\end{enumerate*}
\end{abstract}

\section{Introduction}

Fourier analysis over finite groups has played a central role in the development of theoretical computer science. Examples of its applications are everywhere: analysis of random walks on graphs \cite{CDG87}, fast integer multiplication algorithms \cite{SS71}, learning algorithms \cite{KM93}, the Kahn-Kalai-Linial theorem \cite{KKL88}, derandomization \cite{NN93}, tight inapproximability results using probabilistically checkable proofs \cite{Hastad01}, social choice theory \cite{MOO10},  and coding theory \cite{NS05}. See the surveys of De Wolf \cite{deWolf} and \v{S}tefankovi\v{c} \cite{Ste00}. 

Higher-order Fourier analysis is a recent generalization of some aspects of Fourier analysis. Consider functions over the integers $\Z$. While classical Fourier analysis over $\Z$ studies correlations of functions with linear phases $e^{i \theta n}$, higher-order Fourier analysis over $\Z$ analyzes the correlation of functions with polynomial phases such as $e^{i \theta n^2}$. The modern\footnote{In retrospect, Weyl's results on equidistribution of polynomial phases \cite{Weyl14} laid the foundations of this theory.} work on higher-order Fourier analysis over $\Z$ began with the spectacular proof by Gowers of Szemer\'edi's theorem \cite{Gow98, Gow01}, where the {\em Gowers norm} was introduced, and with the ergodic theory work of Host and Kra \cite{HK05}. Subsequently, Green, Tao and Ziegler through several works \cite{GT05, GT06, GTZ11, GTZ12} largely completed the research program of understanding the relationships between different aspects of the theory over $\Z$. This work was applied to solve several longstanding open problems in additive number theory, including the celebrated result showing the existence of arbitarily long arithmetic progressions in the primes \cite{GT06}.  The book \cite{Tao12} by Tao on the subject surveys the current state of knowledge.

In an influential article \cite{Green05b}, Green popularized the idea that it is useful to rephrase the problems arising in additive number theory into problems on vector spaces over fixed finite fields. The motivation was that many of the techniques in higher-order Fourier analysis over $\Z$ simplify over finite fields, because of the presence of subspaces and of algebraic notions such as orthogonality and linear independence. However, it was soon realized that these questions over finite fields are also intrinsically interesting because of their connections to theoretical computer science. In particular, the Gowers norm for functions on $\F^n$ for a finite prime field $\F$ is directly related to low-degree testing, a problem intensely studied by computer scientists since the early 90's. 

Thanks to the sequence of works \cite{GT09,KL08,TZ10,BTZ10,TZ12}, the apparatus of higher-order Fourier analysis over $\F^n$ for any fixed prime order field $\F$ is also now largely complete. The theory has subsequently found several interesting applications in computer science that we detail below and has become part of the mainstream theorist toolkit. However, in all of these applications, the finite field in consideration was restricted to be a field of {\em prime} order, while the problems themselves are interesting over general finite fields. In this work, we show how the techniques of higher-order Fourier analysis continue to apply even when the underlying field is a non-trivial extension of a prime order field. 

\subsection{Applications}
In this section, we describe three different problems involving a finite field $\K$, which previously had been solved only when $|\K|$ was prime but which we can now solve for arbitrary finite $\K$.

Throughout, let $\F$ be a fixed prime order field, and let $\K$ be a finite field that extends $\F$. Let $q = |\K|$, $p = |\F|$ and $q = p^r$ for $r>0$. 

\subsubsection{List-decoding Reed-Muller codes}
The notion of \emph{list decoding} was introduced by Elias \cite{Elias} and Wozencraft \cite{Woz} to decode \emph{error correcting codes} beyond half the minimum distance. The goal of a list decoding algorithm is to produce all the codewords within a specified distance from the received word. At the same time one has to find the right radius for which the number of such codewords is small, otherwise there is no hope for the algorithm to be efficient. After the seminal results of Goldreich and Levin \cite{GL} and Sudan \cite{Sudan} which gave list decoding algorithms for the Hadamard code and the Reed-Solomon code respectively, there has been tremendous progress in designing list decodable codes. See the survey by Guruswami \cite{Venkat:book, Venkat:thesis} and Sudan \cite{Sudan:survey}.

List decoding has applications in many areas of computer science including hardness amplification in complexity theory \cite{STV, luca-xor}, derandomization \cite{Vad}, construction of hard core predicates from one way functions \cite{GL, AGS}, construction of extractors and pseudorandom generators \cite{TZS, SU} and computational learning \cite{KM,Jackson}. However, the largest radius up to which list decoding is tractable is still a fundamental open problem even for well studied codes like Reed-Solomon (univariate polynomials) and Reed-Muller codes (multivariate polynomials). 
The goal of this work is to analyse Reed-Muller codes over small fields (possible non prime) and small degree.

Reed-Muller codes (RM codes) were discovered by Muller in 1954. Let $d \in \N$. The RM code $\RM_{\K}(n,d)$ is defined as follows. The message space consists of degree $\leq d$ polynomials in $n$ variables over $\K$ and the codewords are evaluation of these polynomials on $\K^n$. Let $\de_q(d)$ denote the normalized distance of $\RM_{\K}(n,d)$. Let $d=a(q-1)+b$ where $0 \leq b<q-1$. We have
$$
\de_{\K}(d)=\frac{1}{q^a}\l(1-\frac{b}{q}\r).
$$

RM codes are one of the most well studied error correcting codes. Many applications in computer science involve low degree polynomials over small fields, namely RM codes. Given a received word $g:\K^n \rightarrow \K$ the objective is to output the list of codewords (e.g. low-degree polynomials) that lie within some distance of $g$. Typically we will be interested in regimes where list size is either independent of $n$ or polynomial in the block length $q^n$.

 Let $\P_{d}(\K^n)$ denote the class of degree $\leq d$ polynomials $f:\F^n \rightarrow \F$. Let $\dist$ denote the normalized Hamming distance.
For $\RM_{\K}(n,d)$, $\eta>0$, let
$$
\ell_{\F}(n,d,\eta):=\max_{g:\F^n \rightarrow \F}\l|\{f \in \P_d(\F^n):\dist(f,g) \leq \eta\}\r|.
$$
Let $\LDR_{\K}(n,d)$ (short for \emph{list decoding radius}) be the maximum $\rho$ for which $\ell_{\K}(n,d,\rho-\eps)$ is upper bounded by a constant depending only on $\eps, |\K|,d$ for all $\eps>0$.

It is easy to see that $\LDR_{\K}(n,d)\leq \de_{\K}(d)$. The difficulty lies in proving a matching lower bound. We review some previous work next. The first breakthrough result was the celebrated work of Goldreich and Levin \cite{GL} who showed that in the setting of $d=1$ over $\F_2$ (Hadamard Codes) $\LDR_{\F_2}(n,1)=\de_{\F_2}(1)=1/2$. Later, Goldreich, Rubinfield and Sudan \cite{GRS} generalized the field to obtain $\LDR_{\K}(n,1)=\de_{\K}(1)=1-1/|\K|$. In the setting of $d<|\K|$, Sudan, Trevisan and Vadhan \cite{STV} showed that $\LDR_{\K}(n,d) \ge 1-\sqrt{2d/|\K|}$ improving previous work by Arora and Sudan \cite{AroraSudan}, Goldreich \emph{et al} \cite{GRS} and Pellikaan and Wu \cite{PellikaanWu}. Note that this falls short of the upper bound which is $\de_{\K}(d)$.

In 2008, Gopalan, Klivans and Zuckerman \cite{GKZ08} showed that $\LDR_{\F_2}(n,d)=\de_{\F_2}(d)$.They posed the following conjecture.
\begin{conjecture}[\cite{GKZ08}]\label{conj:1}For fixed $d$ and finite field $\K$, $\LDR_{\K}(n,d)=\de_{\K}(d)$.
\end{conjecture}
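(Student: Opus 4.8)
The plan is to prove the two inequalities separately. The bound $\LDR_{\K}(n,d)\le\de_{\K}(d)$ is the easy direction already noted: fix $\rho>\de_{\K}(d)$ and $\eps>0$ with $\rho-\eps>\de_{\K}(d)$; then the received word $g\equiv 0$ has every minimum-weight codeword of $\RM_{\K}(n,d)$ within distance $\de_{\K}(d)<\rho-\eps$, and the number of such codewords is $q^{\Omega(n)}$, which is not bounded in terms of $\eps,q,d$. So the content is the matching lower bound $\LDR_{\K}(n,d)\ge\de_{\K}(d)$, i.e.\ that for every $\eps>0$ the quantity $\ell_{\K}(n,d,\de_{\K}(d)-\eps)$ is bounded by a function of $\eps,q,d$ only.

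For the lower bound I would run the argument of \cite{BL14}, which settles the prime case, over the general field $\K=\F_{p^r}$; this is exactly where higher-order Fourier analysis over general fields is needed. Concretely, the first step is to establish the three standard pillars of the theory for polynomials $\K^n\to\K$: (a) an \emph{inverse theorem} — a degree-$\le d$ polynomial $P$ whose bias $\big|\E_x\,\omega^{\Tr(P(x))}\big|$ (with $\omega$ a primitive $p$-th root of unity and $\Tr:\K\to\F_p$ the trace) is bounded away from $0$ must be \emph{low rank}, i.e.\ computable from boundedly many polynomials of degree $<d$; (b) a \emph{regularity lemma} — any finite collection of polynomials, and any single function, can be captured by a polynomial factor $\B=(P_1,\dots,P_C)$ of bounded complexity $C$ and arbitrarily high rank (up to small $L^1$ error for functions); and (c) \emph{equidistribution} — the atoms of a high-rank factor have essentially equal measure and $(P_1(x),\dots,P_C(x))$ is near-uniform over its range for random $x$.

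With these in hand, suppose $f_1,\dots,f_M$ are distinct degree-$\le d$ polynomials, each agreeing with a fixed $g:\K^n\to\K$ on at least a $(1-\de_{\K}(d)+\eps)$ fraction of points. Apply the regularity lemma to obtain a high-rank factor $\B$ of complexity $C=C(M,q,d)$ with respect to which every $f_i$ is measurable and $g$ is measurable up to small $L^1$ error. On each atom $A$, write $f_i\equiv v_i(A)$ and $g\equiv w(A)$; the agreement hypothesis becomes: for each $i$, the fraction of atoms with $v_i(A)=w(A)$ is at least $1-\de_{\K}(d)+\eps-o(1)$. Writing $f_i=\Gamma_i(P_1,\dots,P_C)$, distinctness of the $f_i$ forces distinctness of the $\Gamma_i$, and the constraint $\deg f_i\le d$ restricts each $\Gamma_i$ to a structured family determined by the degrees of the $P_j$; crucially, for distinct $i,j$ the polynomial $f_i-f_j$ is a nonzero codeword of $\RM_{\K}(n,d)$, so by equidistribution $\Gamma_i$ and $\Gamma_j$ agree on at most a $\big(1-\de_{\K}(d)+o(1)\big)$ fraction of atoms. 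Combining the per-$i$ agreement lower bounds, the pairwise agreement upper bounds, the degree restriction on the $\Gamma_i$, and the known structure of minimum-weight codewords of $\RM_{\K}(n,d)$ (which pins down exactly when such agreements can be near-extremal), one derives a contradiction once $M$ exceeds some $C(\eps,q,d)$, giving the desired bound on $\ell_{\K}(n,d,\de_{\K}(d)-\eps)$.

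The hard part will be two-fold. The primary obstacle is establishing pillars (a)--(c) over a non-prime field: a degree-$d$ polynomial $\K^n\to\K$ is \emph{not} the same object as a degree-$d$ polynomial $\F_p^{rn}\to\F_p$, and tracking how $\K$-degree interacts with the additive $\F_p$-structure after composition with $\Tr$ — in particular handling the Frobenius map $x\mapsto x^p$, which is $\F_p$-linear but not $\K$-linear, together with the resulting subtleties in the notions of rank and of non-classical polynomials — is precisely the new technical content this paper supplies. The secondary obstacle, on the combinatorial side, is that the closing step relies on the classification of low-weight codewords (and of low-rank degree-$d$ polynomials) of $\RM_{\K}(n,d)$, which over non-prime fields is considerably more delicate than the $\F_2$ case of \cite{GKZ08} and must be controlled carefully so that the near-extremal agreement analysis goes through for \emph{all} $\eps>0$, not just large $\eps$.
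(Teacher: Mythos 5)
Your high-level diagnosis is right — the content is the lower bound, and the route is to port the machinery of \cite{BL14} (inverse theorem, regularity, equidistribution) from prime fields to $\K=\F_{p^r}$ via the trace map and additive (non-classical) polynomials — but the combinatorial core of your argument is not the paper's, and as written it has a circularity. You propose to regularize a factor in which \emph{every} $f_i$ is measurable, giving complexity $C=C(M,q,d)$, and then to bound $M$ by a packing argument over the atoms using per-$i$ agreement $\geq 1-\de+\eps$, pairwise agreement $\leq 1-\de$, and the degree restriction on the compositions $\Gamma_i$. But the number of atoms (and the number of admissible degree-$\leq d$ compositions $\Gamma_i$) grows with $C$, which grows with $M$: the quantities you would use to derive a contradiction depend on the very $M$ you are trying to bound, and a Johnson-type packing bound cannot close this gap because the radius $\de_{\K}(d)-\eps$ is far above the Johnson radius. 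Nothing in your sketch explains how the final counting step is uniform in $M$.

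The paper breaks the circularity differently. It first applies the weak regularity lemma of \cite{BL14} to the \emph{received word} $g$ alone, producing $c\leq 4/\eps^2$ polynomials $h_1,\dots,h_c$ whose traces, after regularization, form a factor $\mathcal{B}'$ of complexity bounded purely in terms of $\eps,q,d$. Then, for each candidate $f$ \emph{individually}, it adjoins the traces $\Tr(\alpha_i f)$ and re-regularizes to get a syntactic refinement $\mathcal{B}''$ of $\mathcal{B}'$; the extra polynomials from $f$ play the role of ``new directions.'' The decisive step is a disjoint-variable trick combined with a Schwartz--Zippel type claim: replace each polynomial in $\mathcal{B}''$ by a copy on its own fresh block of variables, use the new equidistribution theorem over affine constraints of $\K^n$ to transfer the agreement probability, use the degree-preservation theorem (\cref{proppreserve}, proved via locally-characterized properties rather than iterated derivatives, since iterated derivatives mischaracterize degree when $|\K|$ is non-prime) to conclude the transformed $\tilde f$ still has degree $\leq d$, and then apply Schwartz--Zippel to conclude $\tilde f$ cannot depend on the fresh blocks coming from $f$'s own traces. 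This forces $f$ to be measurable with respect to $\mathcal{B}'$ alone, whose complexity does not depend on $M$, so $M$ is bounded by the number of $\mathcal{B}'$-measurable degree-$d$ polynomials. In short: you regularize jointly and hope a packing argument finishes; the paper regularizes $g$ first and uses degree-preservation plus Schwartz--Zippel to collapse each $f$ onto the $g$-factor, which is what actually yields a bound independent of $M$. The degree-preservation lemma is also the piece of the non-prime technology your sketch underweights — it is precisely what replaces the ``$(d+1)$ derivatives vanish iff degree $\leq d$'' fact that is false over $\K$ when $p\leq d$.
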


It is believed \cite{GKZ08, Gopalan10} that the hardest case is the setting of small $d$.
An important step in this direction was taken in \cite{Gopalan10} that considered quadratic polynomials and showed that $\LDR_{\K}(n,2) = \de_{\K}(2)$ for all fields $\K$ and thus proved the conjecture for $d=2$. Recently, Bhowmick and Lovett \cite{BL14} resolved the conjecture for prime $\K$.

Our main result for list decoding is a resolution of Conjecture~\ref{conj:1}.

\begin{thm}\label{thm:main}Let $\K$ be a finite field. Let $\eps>0$ and $d,n \in \N$. Then, $$\ell_{\K}(d,n,\de_{\K}(d)-\eps) \leq c_{|\K|,d,\eps}.$$ Thus, $$\LDR_{\K}(n,d)=\de_{\K}(d).$$
\end{thm}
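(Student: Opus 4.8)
The plan is to follow the structure of the Bhowmick--Lovett argument for prime fields \cite{BL14}, but carried out using the extended higher-order Fourier machinery over general finite fields $\K$ developed earlier in this paper. The upper bound $\LDR_{\K}(n,d) \le \de_{\K}(d)$ being routine, the whole content is the matching lower bound: given $\eps > 0$ and a received word $g : \K^n \to \K$, we must show that the number of degree $\le d$ polynomials within normalized distance $\de_{\K}(d) - \eps$ of $g$ is bounded by a constant $c_{|\K|,d,\eps}$ independent of $n$. The starting point is a Ramsey-type / sunflower argument: suppose for contradiction there are many (more than the claimed constant) such polynomials $f_1, \dots, f_m$; since each agrees with $g$ on at least a $(1 - \de_{\K}(d) + \eps)$-fraction of points, any two of them $f_i, f_j$ agree on at least a $(1 - 2\de_{\K}(d) + 2\eps)$-fraction of points, so the polynomial $f_i - f_j$ of degree $\le d$ has $\Pr_x[(f_i - f_j)(x) = 0] \ge 1 - 2\de_{\K}(d) + 2\eps$, which is strictly larger than $1 - \de_{\K}(d)$ once the list is large enough relative to how close we are; more carefully, large agreement forces $f_i - f_j$ to be structured.

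The core step is the following dichotomy for a nonzero degree $\le d$ polynomial $Q$ over $\K$: either $\Pr_x[Q(x) = 0] \le 1 - \de_{\K}(d)$ (the generic case, matching the minimum distance), or $Q$ has ``large bias'' in the higher-order sense and hence, by the inverse theorem / regularity lemma over $\K$ (as extended in this paper), $Q$ is a function of a bounded number of lower-degree polynomials, i.e.\ $Q$ lies in a low-complexity polynomial factor. I would first establish this dichotomy, then argue that if $g$ had too many close codewords, the pairwise differences $f_i - f_j$ would all have to be ``compressible'' in this way, which lets one place all the $f_i$ themselves (not just their differences) inside a single bounded-complexity polynomial factor $\B$ of degree $\le d$; the refinement steps of the regularity lemma over $\K$ are what allow the individual $f_i$ to be captured once enough of their differences are. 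Within a factor of bounded complexity there are only $O_{|\K|,d,\eps}(1)$ many degree $\le d$ polynomials measurable with respect to it, giving the desired bound.

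The main obstacle --- and the reason this needs the whole paper's worth of new tools --- is that over a non-prime field $\K = \F^r$ the clean correspondence between ``degree $d$'', ``Gowers $U^{d+1}$ norm'', and ``polynomial factors'' that underpins the prime-field proof does not transfer verbatim. One must work with the notion of degree appropriate to $\K$-valued (or, via the trace, $\F$-valued) polynomials, prove the corresponding inverse Gowers theorem over $\K$, and --- crucially --- control the distance $\de_{\K}(d)$ exactly rather than up to constants, which is where the precise formula $\de_{\K}(d) = q^{-a}(1 - b/q)$ for $d = a(q-1) + b$ enters. Concretely, the subtle point is proving that a degree $\le d$ polynomial over $\K$ with $\Pr_x[Q(x)=0] > 1 - \de_{\K}(d)$ necessarily has bounded higher-order complexity; this is the exact-threshold analogue of the Kasami--Tokura / Kaufman--Lovett type structural results, and getting the threshold to land precisely at $\de_{\K}(d)$ (rather than some weaker bound) over an extension field is the delicate part. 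Once that structural statement is in hand, the counting argument assembling it into a list-size bound is essentially bookkeeping with the parameters of the $\K$-regularity lemma.
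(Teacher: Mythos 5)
Your high-level instinct---reduce to a bounded-complexity polynomial factor via the regularity/inverse-Gowers machinery over $\K$, then count measurable functions---is aligned with the paper, and you correctly flag that the degree-preserving technology over non-prime $\K$ is the crux. But the specific mechanism you propose does not work and differs materially from what the paper actually does.

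The pairwise-differences step is where it breaks. If $f_i$ and $f_j$ each agree with $g$ on a $(1-\de_{\K}(d)+\eps)$-fraction, the union bound gives $\Pr_x[f_i(x)=f_j(x)]\ge 1-2\de_{\K}(d)+2\eps$, and this is \emph{less} than $1-\de_{\K}(d)$ whenever $\eps<\de_{\K}(d)/2$ (and can even be negative); it is never ``strictly larger than $1-\de_{\K}(d)$'' in the regime of interest. Moreover the ``dichotomy'' you state for a nonzero degree-$\le d$ polynomial $Q$ is vacuous: every nonzero codeword satisfies $\Pr_x[Q(x)=0]\le 1-\de_{\K}(d)$ by definition of the minimum distance, so the second branch never triggers. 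This pairwise/deletion approach, even when repaired, is the route to the Johnson bound, not to list decoding up to the minimum distance.

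The paper's proof is structurally different. It first applies the \emph{weak regularity lemma} (Corollary~\ref{cor:pseudorandom}, from \cite{BL14}) to the received word $g$ itself, obtaining a bounded set $h_1,\dots,h_c\in\RM_{\K}(n,d)$ so that every close codeword $f$ is approximated by some $\Gamma_f(h_1,\dots,h_c)$. It then passes to $\F$-valued additive polynomials via the trace, regularizes to a factor $\B'$, and for each $f$ regularizes $\B'\cup\{\Tr(\alpha_i f)\}$ to a syntactic refinement $\B''$. The pivotal new steps are: (i) the ``disjoint variables'' replacement, where each factor polynomial is evaluated at its own fresh block of variables---proving that the resulting $\tilde f$ still has degree $\le d$ is exactly where the new degree-preserving theorem (\cref{proppreserve}, relying on the weight-closed equidistribution theorem \cref{unifsubspace}) is invoked, since naive iterated-derivative arguments fail over non-prime $\K$; and (ii) a Schwartz--Zippel variant (Claim~\ref{clm:sz1}) which, because $\tilde f$ agrees with a function of only the first block of variables on more than a $(1-\de_{\K}(d))$-fraction, forces $\tilde f$ to be independent of the remaining blocks. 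Undoing the disjoint-variables substitution shows $f$ is measurable with respect to the fixed factor $\B'$ of bounded complexity, and the list-size bound follows. None of this goes through pairwise differences, and the exact threshold $\de_{\K}(d)$ enters through the Schwartz--Zippel step, not through a bias dichotomy for individual codewords.
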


\begin{remark}[Algorithmic Implications]Using the blackbox reduction of algorithmic list decoding to combinatorial list decoding in \cite{GKZ08} along with Theorem~\ref{thm:main}, for fixed finite fields, $d$ and $\eps>0$, we now have list decoding algorithms in both the global setting (running time polynomial in $|\K|^n$) and the local setting (running time polynomial in $n^d$).
\end{remark}

\subsubsection{Algorithmic polynomial decomposition}

Consider the following family of
properties of functions over a finite field $\K$.
\begin{define}
Given a positive integer $k$, a vector of positive integers $\bm{\Delta} = (\Delta_1, \Delta_2,
\dots, \Delta_k)$ and a function $\Gamma: \K^k \to \K$, we say that a
function $P: \K^n \to \K$ is {\em  $(k,\bm{\Delta},\Gamma)$-structured} if there exist polynomials
$P_1, P_2, \dots, P_k: \K^n \to \K$ with each $\deg(P_i) \leq \Delta_i$ such that
for all $x \in \K^n$,  $$P(x) = \Gamma(P_1(x), P_2(x), \dots,
P_k(x)).$$ The polynomials $P_1, \dots, P_k$ are said to form a
{\em $(k,\bm{\Delta}, \Gamma)$-decomposition}. 
\end{define}
For instance, an $n$-variate polynomial over the field
$\K$ of total degree $d$  factors nontrivially exactly when it is $(2, (d-1,d-1),
\mathsf{prod})$-structured where $\mathsf{prod}(a,b) = a\cdot
b$. We shall use the term {\em degree-structural property} to refer to a property
from the family of $(k,\bm{\Delta}, \Gamma)$-structured properties.

The problem here is, for arbitrary fixed $k, \K, \bm(\Delta), \Gamma$, given a polynomial, decide efficiently if it is degree structural and if yes, output the decomposition.
An efficient algorithm for the above would imply
a (deterministic) $\poly(n)$-time algorithm for factoring an $n$-variate polynomial
of degree $d$ over $\K$. Also, it implies a polynomial
time algorithm for deciding whether a $d$-dimensional tensor over $\K$
has rank at most $r$.
\remove{
As noted in \cite{B14}, these results on factoring and tensor rank are not
new, in the sense 
that there were already algorithms known for stronger versions of
these two problems. See \cite{KS09, GK85, Sud12}.}
Also, it would give polynomial time algorithms for a wide range of
problems not known to have non-trivial solutions previously, such as
whether a polynomial of degree $d$ can be expressed as $P_1\cdot P_2 +
P_3 \cdot P_4$ where each $P_1,P_2,P_3,P_4$ are of degree $d-1$ or
less.

This problem was solved for prime $\K$, satisfying $d<|\F|$ by Bhattacharyya~\cite{B14} and later for all $d$ and prime $|\K|$ by Bhattacharyya, Hatami and Tulsiani \cite{BHT15}.

Our main result in this line of work establishes this for all fixed finite fields.
\begin{thm}\label{main}
For every finite field $\K$, positive integers $k$ and $d$, every vector of positive integers
$\bm{\Delta} = (\Delta_1, \Delta_2, \dots, \Delta_k)$ and every
function $\Gamma: \K^k \to \K$, there is a deterministic algorithm
$\mathcal{A}_{\K, d, k,\bm{\Delta},\Gamma} $ that takes as input a polynomial
$P: \K^n \to \K$ of degree $d$ that runs in time polynomial in $n$, and outputs
a $(k,\bm{\Delta}, \Gamma)$-decomposition of $P$ if one exists while
otherwise returning $\mathsf{NO}$. 
\end{thm}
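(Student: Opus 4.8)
The plan is to reduce Theorem~\ref{main} to the $\K$-analogues of the higher-order Fourier-analytic machinery developed in this paper -- an algorithmic polynomial regularity lemma over $\K$, the inverse theorem for the Gowers norm over $\K$, and the equidistribution/near-orthogonality properties of high-rank polynomial factors over $\K$ -- following the template of \cite{B14, BHT15}. Writing $d = \deg P$, I would first dispose of bounded $n$ by brute force over all polynomials of degree $\le d$, and for large $n$ apply the algorithmic regularity lemma over $\K$ to compute, in $\poly(n)$ time, a $\rho$-regular polynomial factor $\B = \{Q_1,\dots,Q_C\}$ with each $\deg Q_i \le d$ such that $P$ is $\B$-measurable, say $P = \Gamma_P(Q_1,\dots,Q_C)$ for an explicitly computed $\Gamma_P\colon\K^C\to\K$; here $C$ and the growth rate $\rho$ depend only on $|\K|,d,k,\bm{\Delta}$.

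Next I would have the algorithm enumerate all $k$-tuples $(\Gamma_1,\dots,\Gamma_k)$ of functions $\K^C\to\K$ and accept a tuple iff (a) $\Gamma(\Gamma_1(b),\dots,\Gamma_k(b)) = \Gamma_P(b)$ for every $b$ lying in a nonempty atom of $\B$, and (b) $\deg\big(\Gamma_i(Q_1,\dots,Q_C)\big)\le\Delta_i$ for all $i$; if a tuple is accepted, output $P_i := \Gamma_i(Q_1,\dots,Q_C)$, and otherwise return $\mathsf{NO}$. The enumeration ranges over a constant number of tuples (at most $(|\K|^{|\K|^C})^k$), condition (a) is a $\poly(n)$ check, and the point of working with a regular factor is that (b) is also decidable: the realized degree of a function of a sufficiently regular factor is given by an explicit formula in the $\deg Q_j$'s and the support of $\Gamma_i$, which is one of the equidistribution consequences I would need over $\K$. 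Thus the algorithm runs in $\poly(n)$ time, and soundness of step (a)--(b) is immediate.

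The crux of correctness is a rigidity statement: if $P = \Gamma(P_1,\dots,P_k)$ with $\deg P_i \le \Delta_i$, then some tuple is accepted. I would prove this by regularizing $\B\cup\{P_1,\dots,P_k\}$ to a regular refinement $\B' = \B\cup\{R_1,\dots,R_m\}$ of bounded complexity, all of whose polynomials have degree $\le d$, and by choosing $\rho$ strong enough that, by equidistribution over $\K$ for large $n$, every atom of $\B'$ lying over a nonempty atom of $\B$ is itself nonempty. Writing $P_i = \Pi_i(Q_1,\dots,Q_C,R_1,\dots,R_m)$, the $\B$-measurability of $P$ forces the function $(b,r)\mapsto\Gamma(\Pi_1(b,r),\dots,\Pi_k(b,r))$ to equal $\Gamma_P(b)$ on all relevant atoms, hence to be independent of $r$; freezing $r$ to $0$ then yields $\Gamma_i(b) := \Pi_i(b,0)$ satisfying (a), while $\deg\big(\Gamma_i(Q)\big)\le\deg P_i\le\Delta_i$ because freezing coordinates of a function of a regular factor cannot raise its realized degree. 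This produces an accepted tuple, completing correctness.

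I expect essentially all the difficulty to live in the ingredients invoked above, i.e.\ in porting higher-order Fourier analysis to the extension field $\K = \F_{p^r}$: pinning down the right notion of rank and regularity for $\K$-valued polynomial factors -- in particular controlling the interplay between algebraic degree over $\K$ and the $\F_p$-linear Frobenius maps $x\mapsto x^{p^j}$, which can depress degrees in ways that do not occur over prime fields -- establishing the inverse theorem for the Gowers norm over $\K$, and, for the algorithmic statement, rendering that inverse theorem (and hence the regularity lemma) effective over $\K$. Once these $\K$-analogues are in hand, the reduction above is routine and parallels \cite{B14, BHT15}; I do not anticipate the reduction itself contributing new obstacles.
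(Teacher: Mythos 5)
Your approach is correct in substance but takes a genuinely different algorithmic route from the paper's. You propose ``regularize and enumerate'': compute a high-rank factor $\B$ through which $P$ factors as $\Gamma_P\circ\B$, enumerate the (constantly many) tuples $(\Gamma_1,\dots,\Gamma_k)$, and accept any whose composed value agrees with $\Gamma_P$ on realized atoms and whose induced polynomials $\Gamma_i(Q_1,\dots,Q_C)$ have degrees $\le\Delta_i$. The paper instead follows the recursive template of \cite{BHT15}: pick a variable $x_{i_0}$ that appears in no leading monomial of the factor polynomials, set $x_{i_0}=0$, solve the $(n-1)$-variable instance recursively, and lift the solution back via a second regularization step combined with an equidistribution argument on atoms. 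Both correctness proofs hinge on exactly the same two new ingredients you identify: equidistribution of regular factors over $\K^n$ (Theorem~\ref{unifsubspace}) and the degree-preservation lemma (Theorem~\ref{proppreserve}) applied to the Kaufman--Ron light local characterization of bounded degree, which is what replaces the $d<\char(\K)$ restriction inherited from the earlier literature. Two small adjustments are needed in your sketch: the factor $\B$ should consist of additive (non-classical) polynomials $Q_j:\K^n\to\T$ obtained by tracing $P$ against a basis, so the candidate $\Gamma_i$'s have domain $\prod_j \U_{\depth(Q_j)+1}$ rather than $\K^C$; and the ``explicit formula'' for $\deg\bigl(\Gamma_i(Q_1,\dots,Q_C)\bigr)$ is really the statement (a two-sided application of Theorem~\ref{proppreserve}) that for high-rank factors this degree depends only on $\Gamma_i$ together with the degree/depth vector of the $Q_j$'s, and hence can be computed by substituting a canonical disjoint-variable surrogate factor. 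With those details filled in, your enumeration route gives a clean alternative to the paper's recursion.
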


\subsubsection{Testing affine-invariant properties}
The goal of property testing, as initiated by \cite{BLR93, BFL91} and defined formally by \cite{RS96, GGR98}, is to devise algorithms that query their input a very small number of times while correctly deciding whether the input satisfies a given property or is ``far'' from satisfying it. A property is called {\em testable} if the query complexity can be made independent of the size of the input.

More precisely, we use the following definitions. Let $[R]$ denote the set $\set{1, \dots, R}$. Given a property $\P$ of functions in $\{\K^n \to [R] \ | \ n \in \Z_{\ge 0}\}$, we say that $f : \K^n \to [R]$ is {\em $\eps$-far} from $\P$ if
$$\min_{g \in \P} \Pr_{x \in \K^n}[f(x) \neq g(x)] > \eps,$$
and we say that it is {\em $\eps$-close} otherwise.

\begin{define}[Testability]\label{testable}
A property $\P$ is said to be {\em testable} (with one-sided error)
if there are functions $q: (0,1) \to \Z_{> 0}$, $\delta: (0,1) \to (0,1)$,
and an algorithm $T$ that, given as input a parameter $\eps > 0$ and oracle
access to a function $f: \K^n \to [R]$, makes at most $q(\eps)$
queries to the oracle for $f$, always accepts if $f \in \P$ and
rejects with probability at least $\delta(\eps)$ if $f$ is $\eps$-far
from $\P$. If, furthermore, $q$ is a constant function, then $\P$ is
said to be {\em proximity-obliviously testable (PO testable)}.
\end{define}

The term proximity-oblivious testing is coined by Goldreich and Ron in~\cite{GR11}.
As an example of a testable (in fact, PO testable) property, let us recall the famous result by Blum, Luby and Rubinfeld
\cite{BLR93} which initiated this line of research. They showed
that linearity of a function $f: \K^n \to \K$ is testable by a test
which makes $3$ queries. This test accepts if $f$ is linear and rejects with
probability $\Omega(\eps)$ if $f$ is $\eps$-far from linear.

Linearity, in addition to being testable, is also an example of a
{\em linear-invariant} property. We say that a property $\P
\subseteq \{\K^n \to [R]\}$ is linear-invariant if it is the case
that for any $f \in \P$ and for any $\K$-linear transformation $L: \K^n \to \K^n$, it holds that $f\circ L \in \P$.  Similarly, an
{\em affine-invariant} property is closed under composition with
affine transformations $A: \K^n \to \K^n$ (an affine transformation
$A$ is of the form $L+c$ where $L$ is $\K$-linear and $c\in \K$).
The property of a function $f: \K^n \to \K$ being affine is testable
by a simple reduction to \cite{BLR93}, and is itself affine-invariant.
 Other well-studied
examples of affine-invariant (and hence, linear-invariant) properties
include Reed-Muller codes 
\cite{BFL91, BFLS,FGLSS,RS96,AKKLR05}
and Fourier sparsity
\cite{GOSSW}.
In fact, affine invariance seems to be a common feature of most interesting
properties that one would classify as ``algebraic''.  Kaufman and
Sudan in
\cite{KS08}
made
explicit note of this phenomenon and initiated a general study of the testability of
affine-invariant properties (see also~\cite{GK11}). 

Our main theorem for testing is a very general positive result:
\begin{thm}[Main testing result]\label{maintest}
Let $\P \subseteq \{\K^n \to [R]\}$ be an affine-invariant property that is $t,w$-lightly locally characterized, where $t$, $R$, $w$, and $\char(\K)$ are fixed positive integers. Then, $\P$ is PO testable with $t$ queries.
\end{thm}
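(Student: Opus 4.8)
The tester is the obvious one-sided local test. The $t,w$-light local characterization supplies a (possibly infinite) family of ``forbidden'' induced patterns, each living on a system of at most $t$ points described by affine forms of weight at most $w$; as in \cite{BFHHL13} one first argues that already a \emph{finite} subfamily, consisting of the patterns of bounded complexity, suffices, in the sense that any $f$ avoiding all of these is $\eps$-close to $\P$. The test then samples a random instance of a system from this finite subfamily, queries $f$ on its $\le t$ points, and rejects iff the observed pattern is forbidden. Completeness is immediate: affine invariance together with the definition of local characterization guarantees that every $f \in \P$ realises only allowed patterns, so the test always accepts. The whole content is in soundness: if $f$ is $\eps$-far from $\P$ then a random instance hits a forbidden pattern with probability bounded below by a constant $\delta(\eps) > 0$ depending only on $\eps$, $R$, $t$, $w$, and $\char(\K)$.

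To prove soundness I would run the higher-order Fourier-analytic strategy of \cite{BFHHL13}, but using the tools over $\K$ developed in this paper. View $f$ as an $[R]$-valued function and work with the indicator layers $\mathbf{1}[f=i]$. Applying the strong decomposition (regularity) lemma over $\K$, each layer splits as $f_1 + f_2 + f_3$, where $f_1 = \E[\,\cdot \mid \B\,]$ is the conditional expectation onto a polynomial factor $\B$ of bounded degree and of rank at least a threshold $r_0$ to be chosen large, $f_2$ is small in the relevant Gowers norm, and $f_3$ is small in $L^2$. Rounding $f_1$ on each atom then produces a function $f'$ that depends only on the atoms of $\B$ and agrees with $f$ outside a set of density tending to $0$ as $r_0 \to \infty$. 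Choosing $r_0$ large enough in terms of $\eps$, the function $f'$ is still $\eps/2$-far from $\P$.

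Next I would invoke the equidistribution / near-orthogonality theorem for high-rank polynomial factors over $\K$: for any fixed affine system of $\le t$ forms of weight $\le w$, the joint distribution of the $\B$-values along a random instance of the system is, up to an error tending to $0$ with the rank, the distribution predicted by treating the constituent polynomials as ``generic'' on the system. Consequently the probability that $f'$ realises a prescribed pattern on a given system equals, up to a $o_{r_0}(1)$ error, a fixed quantity computed from the atom-function defining $f'$. If $f'$ realised \emph{only} allowed patterns, up to this error, on every system in the characterization, then --- unwinding the local characterization on the ``limit object'' supplied by the factor $\B$ and lifting back --- one could correct $f'$ on an $o_{r_0}(1)$ fraction of points to land inside $\P$, contradicting that $f'$ is $\eps/2$-far from $\P$. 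Hence some system is violated by $f'$ with noticeable probability, and transferring this back through the $o_{r_0}(1)$-closeness of $f'$ to $f$ (which is why $\eps$ must not be too small relative to $r_0$) shows that the test rejects $f$ with probability at least some $\delta(\eps) > 0$. This establishes PO testability with $t$ queries.

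The main obstacle is the higher-order Fourier infrastructure over a non-prime field $\K$: a usable inverse theorem for the Gowers norm, the accompanying regularity lemma, and the equidistribution theorem for high-rank factors. Over $\K = \F^r$ the natural additive characters live on $\F$, so non-classical polynomials and the group generated by polynomial phases behave differently than over a prime field, and --- crucially --- a ``generic'' affine constraint over a large extension can have unbounded $\F$-linear complexity. The weight restriction built into ``lightly locally characterized'' is precisely what controls this last point: it keeps the Cauchy--Schwarz (true) complexity of the constraints, and hence the rank threshold $r_0$ we must demand, bounded uniformly in $|\K|$, which is what makes the equidistribution step --- and therefore the whole argument --- go through with bounds independent of the size of $\K$.
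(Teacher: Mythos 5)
Your proposal follows the same route as the paper, which itself only sketches the argument and defers to the analogous proof in \cite{BFHHL13}: regularity decomposition into a high-rank additive polynomial factor, rounding to a factor-measurable function, then equidistribution of the factor over affine $\K$-subspaces (the paper's \cref{unifsubspace}) to show a violated constraint is hit with noticeable probability, with the bounded-weight hypothesis supplying the bounded number of forms needed for the equidistribution bound to be non-trivial. Your reading of where the weight restriction enters and why the argument transfers from the prime-field case is consistent with the paper's own account.
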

We are yet to define several terms in the above claim, but as we will see, the weight restriction is trivial when the field size is bounded. This yields the following characterization.
\begin{thm}[Testing result  for fixed fields]\label{testfixed}
Let $\P \subseteq \{\K^n \to [R]\}$ be an affine-invariant property, where $R \in \Z^+$ and field $\K$ are fixed. Then, $\P$ is PO testable with $t$ queries if and only if $\P$ is $t$-locally characterized.
\end{thm}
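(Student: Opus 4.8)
The plan is to derive this as a corollary of Theorem \ref{maintest} together with the (easy) converse direction. First consider the ``only if'' direction: suppose $\P$ is PO testable with $t$ queries. I would argue that any property testable with a constant number $t$ of queries and one-sided error is automatically $t$-locally characterized. Indeed, the canonical tester (after the standard reduction to ``pick $t$ points according to some distribution on $t$-tuples, query $f$ there, accept iff the pattern lies in an allowed set'') induces a family of $t$-local constraints: the forbidden query-patterns. Since the test never rejects a function in $\P$, every $f\in\P$ satisfies all these constraints; since every $f$ that is $\eps$-far from $\P$ is rejected with positive probability, it must violate one of them, so the constraint family characterizes $\P$ exactly in the limit — this is precisely the statement that $\P$ is $t$-locally characterized. (One must be a little careful about affine-invariance and about unbounded $n$; the argument here is the standard one used in \cite{KS08,BFHHL13}, which I would cite rather than reprove.)

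For the ``if'' direction, suppose $\P$ is $t$-locally characterized. The key observation is that when $\K$ is a fixed finite field, the weight of any constraint is automatically bounded: a $t$-local constraint lives on a bounded-dimensional affine subspace, and over a field of bounded size the number of points — and hence any reasonable notion of ``weight'' — on such a subspace is bounded by a function of $t$ and $|\K|$ alone. Therefore being $t$-locally characterized is the same as being $t,w$-lightly locally characterized for $w = w(t,|\K|)$, and $\char(\K)$ is of course fixed since $\K$ is fixed. Theorem \ref{maintest} then immediately gives that $\P$ is PO testable with $t$ queries.

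The only real content beyond invoking Theorem \ref{maintest} is making the phrase ``the weight restriction is trivial when the field size is bounded'' precise, i.e.\ unwinding the (not-yet-stated) definition of $t,w$-lightly locally characterized and checking that every $t$-local constraint over a fixed $\K$ has weight at most some $w(t,|\K|)$. I expect this to be a one-line consequence of the definitions once they are in place. Consequently the main obstacle is not in this proof at all but in Theorem \ref{maintest}; here the argument is essentially bookkeeping, combining the forward direction of Theorem \ref{maintest} with the routine converse, modulo pinning down the weight bound.
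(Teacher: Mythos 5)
Your proposal is correct and follows essentially the same route as the paper, which leaves the proof implicit: the paper explicitly remarks just before stating \cref{testfixed} that ``the weight restriction is trivial when the field size is bounded,'' and the ``only if'' direction is contained in the discussion in the \emph{Local Characterizations} paragraph (``It follows from the above discussion that if $\P$ is PO testable with $q$ queries, then $\P$ is $t$-locally characterized''). Your unwinding of the weight bound is right: with $\K = \F_{p^r}$ fixed, every coefficient $c \in \K$ has $\wt(c) \le r(p-1)$, and each affine form in a $t$-local constraint has at most $t-1$ nontrivial coefficients, so every such constraint is automatically of weight at most $(t-1)r(p-1) =: w(t,|\K|)$; hence $t$-locally characterized coincides with $t,w$-lightly locally characterized and \cref{maintest} applies.

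One small wobble in your ``only if'' direction: you argue that every function that is $\eps$-far from $\P$ violates some constraint, and then say the constraints characterize $\P$ ``in the limit.'' To actually conclude that the constraints characterize $\P$, you need that every $g \notin \P$ violates a constraint, not just the $\eps$-far ones. This is fine but deserves one explicit sentence, as in the paper: for any $f \notin \P$ one has $\dist(f,\P) > 0$, so by \cref{testable} the tester rejects $f$ with positive probability, and any rejecting execution furnishes a violated $t$-local constraint. With that observation, together with your (correctly flagged) appeal to affine-invariance to replace point-constraints by affine-form constraints, your argument matches the paper's.
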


Previously, \cite{BFHHL13} (building on \cite{BCSX09, BGS10, BFL12}) proved \cref{maintest} in the case that $\K$ is of fixed prime order using higher-order Fourier analytic techniques. We note that other recent results on $2$-sided testability of affine-invariant properties over fixed prime-order fields \cite{HL13, Yoshida14} can also be similarly extended to non-prime fields but we omit their description here.

\paragraph{Local Characterizations}
For a PO testable property $\P \subset \{\K^n \to [R]\}$ of query complexity $t$, if a function $f: \K^n \to [R]$ does not satisfy $\P$, then by \cref{testable}, the tester rejects $f$ with positive probability. Since the test always accepts functions with the property, there must be $t$ points $a_1, \dots, a_t \in \K^n$ that form a witness for non-membership in $\P$. These are the queries that cause the tester to reject. Thus, denoting $\sigma = (f(a_1), \dots, f(a_t)) \in [R]^t$, we say that $\mathcal{C} = (a_1, a_2, \dots, a_t; \sigma)$ forms a {\em $t$-local constraint} for $\P$. This means that whenever the constraint is violated by a function $g$, i.e., $(g(a_1), \dots, g(a_t)) = \sigma$, we know that $g$ is not in $\P$.   A property $\P$ is {\em $t$-locally characterized} if there exists a collection of $t$-local constraints $\cC_1, \dots, \cC_m$ such that $g \in \P$ if and only if none of the constraints  $\cC_1, \dots, \cC_m$ are violated.  It follows from the above discussion that if $\P$ is PO testable with $q$ queries, then $\P$ is
$t$-locally characterized.

For an affine-invariant property, constraints can be defined in terms of affine forms, since the affine orbit of a constraint is also a constraint. So, we can describe each $t$-local constraint $\cC$ as $(A_1, \dots, A_t; \sigma)$, where for every $i \in [t]$, $A_i(X_1, \dots, X_t) = X_1 + \sum_{j=2}^t c_{i,j} X_j$ for some $c_{i,j} \in \K$ is an affine form over $\K$. We define the {\em weight} $\mathsf{wt}$ of an element $c \in \K$ as $\sum_{k=1}^r |c_k|$, where $c$ is viewed as an $r$-dimensional vector $(c_1, \dots, c_r)$ with each $c_i$ in the base prime field\footnote{If $x \in \F$, $|x|$ is the obvious element of $\{0, 1, \dots, |\F|-1\}$.} $\F$ with respect to a fixed arbitrary basis. The {\em weight of an affine form} $A_i$ to be $\sum_{j=2}^m \mathsf{wt}(c_{i,j})$ for $c_{i,j}$ as above. A constraint is said to be of weight $w$ if all its affine forms are of weight at most $w$, and a property $\P$ is said to be $t,w$-lightly localyl characterized if there exist $t$-local constraints $\cC_1, \dots, \cC_m$, each of weight at most $w$ that characterize $\P$. 

\cref{maintest} asserts that if $\P$ has a light local characterization, then it is testable. There can exist many local characterizations of a property, and for the theorem to apply, it is only necessary that one such characterization be of bounded weight. Moreover, we can choose the basis with which to describe $\K$ over $\F$. On the other hand, some restriction in addition to local characterization is needed, as Ben-Sasson et al.~\cite{BMSS11} show that there exist affine-invariant locally characterized properties of functions $f: \F_{2^n} \to \F_2$ that require super-constant query complexity to test.

Another interesting observation is that if a property has a local characterization of bounded weight, then it has a local {\em single orbit characterization}, in the language of \cite{KS08}. For linear\footnote{These are properties of functions $f: \K^n \to \F$, where $\F$ is a subfield of $\K$, for which $f, g \in \P$ implies $\alpha f + \beta g \in \P$ for any $\alpha, \beta \in \F$.} affine-invariant properties, \cite{KS08} shows that any local single orbit characterized property is testable. Hence, our result is weaker than \cite{KS08} in this aspect, though our \cref{maintest} allows non-linear properties. It is an interesting open question as to whether dual-BCH codes and, more generally, sparse affine-invariant codes that were shown to be locally single orbit characterized in \cite{KL05} and \cite{GKS12} respectively also have local characterizations of bounded weight. It is also an open problem to describe a testable property $\P \subseteq \{\F_{2^n} \to \F_2\}$ that does not have a local characterization of bounded weight.
\comments{
A proximity oblivious tester of query complexity $t$ for a property $\P$ picks (based on its internal randomness) some $t$ points $a_1, \dots, a_t \in \K^n$  and a predicate $P: [R]^k \to \zo$ and accepts the input $f$ if and only if $P(f(a_1), \dots, f(a_t)) = 1$. Each set of points $a_1, \dots, a_t$ that can be picked by the PO tester with nonzero probability is said to be a {\em constraint} on $\P$. In fact, for an affine

The above result has an extra restriction than usual testing results in the form of a weight restriction. This is required when the field size is growing. We explain this in the next section. In the fixed field case, we have a testing theorem without any extra conditions, simply because $W$ is trivially bounded.

A property $\P \subseteq \{\K^n \to [R]: n \geq 1\}$ is said to be $t$-locally characterized if there are $t$ affine forms $L_1, \dots, L_t$ on $\ell \leq t$ variables such that $f \in \P$ if and only if there exist $x_1, \dots, x_\ell \in \K^n$ with $(f(L_1(x_1, \dots, x_\ell)), \dots, f(L_t(x_1, \dots, x_\ell)))$ satisfying a particular constraint. Therefore, a natural test for a $t$-locally characterized property is to choose random $x_1, \dots, x_\ell$, evaluate $f$ on $L_1(x_1, \dots, x_\ell), \dots, L_t(x_1, \dots, x_\ell)$ and check whether the evaluations satisfy the constraint. Indeed, this is the test we analyze.

A very interesting feature of \cref{maintest} is that $\K$ is allowed to be growing, which is not true in our other two applications. Setting $R=2$, \cref{maintest} shows that any affine-invariant property $\P$ of subsets of $\K^n$ are PO testable if the property is locally characterized (with respect to $\K$-affine constraints). Previously, \cite{BFHHL13} proved \cref{maintest} in the case that $\K$ is of fixed prime order using higher-order Fourier analytic techniques. We note that other recent results on $2$-sided testability of affine-invariant properties over fixed prime-order fields \cite{HL13, Yoshida14} can also be similarly extended to non-prime fields but we omit their description here.}

\subsection{Our Techniques}

\subsubsection{New Ingredients}
Our starting point is the observation that $\K$ is an $r$-dimensional vector space over $\F$. Thus, we can view a function $Q: \K^n \to \K$ as determined by a collection of functions $P_1, \dots, P_r: \K^n \to \F$ where $\K^n$ is viewed as $\F^{rn}$. In view of this, we define the notion of an {\em additive polynomial}. A function\footnote{To deal with low characteristics, we will actually use a slightly general definition valid for functions mapping to the torus $\R/\Z$.} $P: \K^n \to \F$ is said to have {\em additive degree $d$} if for all $h_1, \dots, h_{d+1} \in \K^n$, $D_{h_1}\cdots D_{h_{d+1}}P \equiv 0$, where $(D_hP)(x) = P(x+h)-P(x)$. Additive polynomials are exactly the non-classical polynomials of \cite{TZ12} when the domain is $\F^{rn}$. Moreover, if $Q: \K^n \to \K$ has degree $d$ (in the usual sense of having a monomial with degree $d$), then $\Tr(\alpha Q)$ has additive degree $\leq d$ for any $\alpha \in \K$ where $\Tr: \K \to \F$ denotes the trace function.

Therefore, we can directly write any polynomial $P: \K^n \to \K$ in terms of additive polynomials and then import all of the results shown in \cite{TZ12} for non-classical polynomials to our setting! Unfortunately, we are not done. The reason is that our applications require, in addition to additive structure, some of the multiplicative structure of $\K$, which is lost when we view $\K$ as $\F^r$.

To see why, recall the question of testing affine-invariant properties. When $\K$ is of bounded order, we can view any one-sided test as examining the restriction of the input function on a random $K$-dimensional affine subspace of $\K^n$, for some constant integer $K$. In other words, the test will evaluate the input function at elements of the set $H=\{x + \sum_{i=1}^K a_i y_i: a_1, \dots, a_K \in \K\}$ for some $x, y_1, \dots, y_K \in \K$. Clearly, $H$ is not an affine subspace of $\F^{rn}$. An important component of the higher-order Fourier analytic approach is to show that any ``sufficiently pseudorandom'' collection of polynomials is equidistributed on $H$, and the proof of this fact in \cite{BFHHL13} crucially uses that $H$ is a subspace of a vector space over a prime field. In our work, we show a strong equidistribution theorem (\cref{unifsubspace}) that holds when $H$ is an affine subspace of $\K^n$. 

A different place where multiplicative structure rears its head is a key {\em Degree Preserving Lemma} of \cite{BFHHL13}. Informally, it states that if $P_1, \dots, P_C$ form a ``sufficiently pseudorandom'' collection of polynomials and $F(x) = \Gamma(P_1(x), \dots, P_C(x))$ is a polynomial of degree $d$ where $\Gamma$ is an arbitrary composition function, then for any other collection of polynomials $Q_1, \dots, Q_C$ where $\deg(Q_i) \leq \deg(P_i)$ for every $i$, $G(x) = \Gamma(Q_1(x), \dots, Q_C(x))$ also has degree $\leq d$. The lemma is crucially used for the analysis of the Reed-Muller list decoding bound in \cite{BL14} and the polynomial decomposition algorithm in \cite{B14, BHT15}. Its proof goes via showing that if all $(d+1)$ iterated derivatives of $F: \K^n \to \K$ vanish, then so must all $(d+1)$ iterated derivatives of $G: \K^n \to \K$. However, when $|\K|$ is non-prime, all $(d+1)$ iterated derivates of a function $G: \K^n \to \K$ may vanish without the degree being $\leq d$; consider for example the polynomial $x^p$ which vanishes after only $2$ derivatives.

We resolve this issue by giving a different and more transparent proof of the Degree Preserving Lemma, which actually holds in a much more general setting (\cref{proppreserve}). Using the above notation, we prove that if $F: \K^n \to \K$ satisfies some locally characterized property $\P$, then $G: \K^n \to \K$ does also. Since due to a work of Kaufman and Ron \cite{KR06}, we know that degree is locally characterized, our desired result follows. Our new proof uses our strong equidistribution theorem on affine subspaces of $\K^n$.

An interesting point to note is that both the equidistribution theorem and the degree preserving lemma work only assuming that the field characteristic is constant and that the involved affine constraints are of bounded weight, without any assumption on the field size.

 \subsubsection{Reed-Muller codes}
 For a received word $g:\K^n \to \K$ our goal is to upper bound $\l|\{f \in \P_d:\dist(f,g) \leq \eta\}\r|$, where $\eta=\de_{\K}(d)-\eps$ for some $\eta>0$ and $\P_d$ is the class $\{Q: \K^n \to \K: \deg(Q) \leq d\}$. The proof technique is similar in structure as \cite{BL14}. We apply the weak regularity lemma (Corollary~\ref{cor:pseudorandom}) to the received word $g:\K^n \to \K$ and reduce the problem to a structured word $g':\K^n \to \K$. More specifically, whenever $\dist(f,g) \le \eta$, we have $\dist(f,g') \le \eta+\eps/2$.  From here, we first express each function $f:\K^n \to \K$ as a linear combination of functions $f':\K^n \to \F$. It can be then shown that the analysis in \cite{BL14} works for functions $f':\K^n \to \F$. A naive recombination of the $f':\K^n \to \F$ to $f:\K^n \to \K$ gives us useful bounds only when $d<\char(|\F|)$. To circumvent this problem, we use our improved degree preserving theorem. This is crucial to our analysis as the technique of \cite{BL14} can be used only to analyze the additive degree of polynomials which is not enough for the argument to work for arbitrary $d$ and $|\K|$.

\subsubsection{Polynomial decomposition}
The algorithm and its analysis follows the lines of \cite{B14, BHT15}. Given a polynomial $P: \K^n \to \K$ (where $|\K|$ is bounded), we consider the collection of additive polynomials $\{\Tr(\alpha_1 P), \dots, \Tr(\alpha_r P)\}$ where $\alpha_1, \dots, \alpha_r \in \K$ are linearly independent. We regularize this collection into a pseudorandom additive polynomial factor and set one variable to $0$ such that the degrees of the polynomials do not change. We then recursively solve the problem on $n-1$ variables and then apply a lifting procedure to get a decomposition for the original problem. A naive analysis of the lifting procedure over non-prime fields requires that $\deg(P) < \char(\F)$. In order to get around this, we use our improved degree preserving theorem which applies for arbitrary degrees.

 \subsubsection{Testing affine-invariant properties}
 Suppose $\P \subseteq \{\K^n \to [R]\}$ is a locally characterized affine-invariant property (where $R$ and $\char(\K)$ are bounded but $n|\K|$ is growing). Our proof follows the lines of \cite{BGS10, BFL12, BFHHL13}. Suppose $f$ is far from $\P$. We first identify a low-rank function close to $f$ in an appropriate Gowers norm which also contains the violation that $f$ contains. Here, low rank is with respect to a collection $\B$ of {\em additive} polynomials.  We then investigate the distribution of $\B$ on the affine constraint that $f$ violates. Since these are affine with respect to $\K^n$, we need to use our strong equidistribution theorem. The rest of the proof proceeds along the same lines as \cite{BFHHL13}.
 
 Because the proof of \cref{maintest} is very analogous to that in \cite{BFHHL13} (except for the use of additive polynomials and the new equidistribution theorem) and requires significant additional notation, we omit it here.

\section{Preliminaries}
Let $\N$ denote the set of positive integers. For $n \in \N$, let $[n]:=\{1,2,\ldots , n\}$. We use $y=x \pm \eps$ to denote $y \in [x-\eps, x+\eps]$. For $n \in \N$, and $x,y \in \C^n$, let $\langle x,y \rangle:=\sum_{i=1}^n x_i\overline{y_i}$ where $\overline{a}$ is the conjugate of $a$. Let $\|x\|_2:=\sqrt{\langle x,x \rangle}$.

Let $\T$ denote the torus $\R/\Z$. This is an abelian group under addition.  Let $e: \T \to \C$ be the function $e(x) = e^{2\pi ix}$. For an integer $k \geq 0$, let $\U_{k}:=\frac{1}{p^k}\Z/\Z$. Note that $\U_k$ is a subgroup of $\T$. Let $\iota:\F \rightarrow \U_1$ be the bijection $\iota(a)=\frac{|a|}{p} \pmod 1$.

Fix a prime field $\F=\F_p$, and let $\K = \F_q$ where $q=p^r$ for a positive integer $r$. 
We denote by $\Tr:\K \to \F$ the trace function:
$$\Tr(x) = x + x^p + x^{p^2} + \cdots + x^{p^{r-1}}$$
Recall that $\{x \to \Tr(ax) : a \in \K\}$ is in bijection with the set of all linear maps from $\K$ to $\F$. Also, we use $|\cdot|$ to denote the obvious map from $\F$ to $\{0,1,\dots,p-1\}$. 
We will need the following useful fact.
\begin{prop}[Dual basis]\label{dual}
For any $r$ linearly independent elements $\alpha_1, \dots, \alpha_r \in \K$, there exist $ \beta_1, \beta_2, \dots, \beta_r$ in $\K$ such that any $x \in \K$ equals $\sum_{i=1}^r \beta_i \Tr(\alpha_ix)$.
\end{prop}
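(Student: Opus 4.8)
The plan is to exploit the nondegeneracy of the trace bilinear form $B(x,y) = \Tr(xy)$ on $\K$, viewed as an $r$-dimensional vector space over $\F$, and then invoke the standard existence of dual bases. First I would record that $B$ is nondegenerate: if $x \in \K$ satisfies $\Tr(xy) = 0$ for every $y \in \K$, then the $\F$-linear map $y \mapsto \Tr(xy)$ is identically zero, so by the injectivity half of the recalled bijection between $\{y \mapsto \Tr(ay) : a \in \K\}$ and the space of $\F$-linear maps $\K \to \F$, we conclude $x = 0$.

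Next, since $\alpha_1, \dots, \alpha_r$ are $r$ linearly independent elements of the $r$-dimensional $\F$-space $\K$, they form an $\F$-basis. Consider the linear functionals $L_i : \K \to \F$ given by $L_i(x) = \Tr(\alpha_i x)$. They are linearly independent over $\F$: if $\sum_i c_i L_i = 0$ with $c_i \in \F$, then $\Tr\big((\sum_i c_i \alpha_i)\, x\big) = 0$ for all $x$, so nondegeneracy forces $\sum_i c_i \alpha_i = 0$, whence each $c_i = 0$ by independence of the $\alpha_i$. Hence $\{L_1, \dots, L_r\}$ is a basis of the $r$-dimensional dual space of $\K$ over $\F$.

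Then I would let $\beta_1, \dots, \beta_r \in \K$ be the basis dual to $\{L_i\}$ under the canonical pairing, i.e.\ the unique elements with $L_i(\beta_j) = \Tr(\alpha_i \beta_j) = \delta_{ij}$; existence and uniqueness is the standard fact that a basis of the dual of a finite-dimensional space has a dual basis in the space, and concretely it amounts to inverting the matrix $\big(\Tr(\alpha_i \gamma_j)\big)_{i,j}$ for any fixed basis $\gamma_1, \dots, \gamma_r$ of $\K$, which is invertible precisely because $B$ is nondegenerate. Finally, for an arbitrary $x \in \K$, write $x = \sum_j x_j \beta_j$ with $x_j \in \F$; applying $L_i$ gives $\Tr(\alpha_i x) = \sum_j x_j \Tr(\alpha_i \beta_j) = x_i$, so $x = \sum_i x_i \beta_i = \sum_i \beta_i \Tr(\alpha_i x)$, which is the claimed identity.

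The only point that needs any attention is the nondegeneracy of the trace form, and this is essentially handed to us by the recalled identification of $\{x \mapsto \Tr(ax)\}$ with all $\F$-linear maps $\K \to \F$; everything else is routine finite-dimensional linear algebra, so I do not anticipate a real obstacle in carrying this out.
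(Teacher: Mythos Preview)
Your argument is correct and is the standard proof of the existence of a trace-dual basis. The paper itself does not supply a proof of this proposition; it is recorded in the preliminaries as a well-known fact, so there is nothing to compare against beyond noting that your approach is exactly the expected one.
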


Given a basis, i.e. collection of $r$ linearly independent field elements, $\bm{\alpha}=(\alpha_1, \dots, \alpha_r)$, we define $\mathsf{wt}_{\bm{\alpha}}: \K \to \Z$ to be $\wt_{\bm{\alpha}}(c) = \sum_{i=1}^r |\Tr(\alpha_i c)|$. 

\remove{
For a finite set $X$ and $n \in \N$, with $f:X\rightarrow \C^n$, we write $\E_{x}f(x)$ to denote $\frac{1}{|X|}\sum_{x \in X}f(x)$. We define $||f||_2:=\sqrt{\E_x ||f(x)||_2^2}$. If $g: X \rightarrow \C^n$, we have $\langle f,g \rangle  :=  \E_x \langle f(x),g(x) \rangle$. Let $Y$ be a finite set. Let $P(Y):=\{f:Y \rightarrow \R_{\geq 0}  :\sum_{y \in Y}f(y)=1\}$ denote the probability simplex on $Y$. We shall write randomized functions by mapping them to the simplex. Thus, for $f,g:X \rightarrow P(Y)$ we define $$\Pr_{x}[f(x)=g(x)]:=\E_x\langle f(x),g(x)\rangle.$$
If $f:X \rightarrow Y$ is a deterministic function, then we embed $Y$ into $P(Y)$ in the obvious way, and consider $f:X \to P(Y)$ with $f(x)_{y}=1$ if $f(x)=y$ when viewed as a function to $Y$, and $f(x)_{y'}=0$ for all $y' \in Y \setminus \{y\}$.
}

\subsection{Affine forms and constraints}

A {\em linear form on $k$ variables} is a vector $L = (w_1, w_2, \dots, w_k) \in \K^k$ that is interpreted as a function from $(\K^n)^k$ to $\K^n$ via the map $(x_1, \dots, x_k) \mapsto w_1 x_1 + w_2 x_2 + \cdots + w_k x_k$. A linear form $L = (w_1, w_2, \dots, w_k)$ is said to be {\em affine} if $w_1 =1$. From now, linear forms will always be assumed to be affine. Given a basis $\bm{\alpha} = (\alpha_1, \dots, \alpha_r)$, we define $\mathsf{wt}_{\bm{\alpha}}$ of a linear form $L = (w_1, \dots, w_k)$ to be $\sum_{i=2}^k \mathsf{wt}_{\bm{\alpha}}(w_i)$. 

We specify a partial order $\preceq$ among affine forms, with respect to a basis $\bm{\alpha}=(\alpha_1, \dots, \alpha_r)$. We say $(w_1, \dots, w_k) \preceq_{\bm{\alpha}} (w_1',\dots, w_k')$ if $|\Tr(\alpha_j w_i)| \leq |\Tr(\alpha_j w_i')|$ for all $i \in [k], j \in [r]$. 

\begin{define}[Affine constraints]\label{defaffine}
An {\em affine constraint of size $m$ on $k$ variables} is a tuple $A = (L_1, \dots, L_m)$ of $m$ affine forms $L_1, \dots, L_m$ over
$\F$ on $k$ variables, where:
$L_1(x_1, \dots, x_k) = x_1$.
Moreover, it is said to be {\em weight-closed} if there exists a basis $\bm{\alpha}=(\alpha_1, \dots, \alpha_r)$ such that for any affine form $L$ belonging to $A$, if $L' \preceq_{\bm{\alpha}} L$, then $L'$ also belongs to $A$.
\end{define}

Observe that a weight-closed affine constraint is of bounded size if and only if all its affine forms are of bounded weight with respect to some $\bm{\alpha}$.

\subsection{Polynomials, Degrees and Derivatives}
A function $P: \K^n \to \K$ is a {\em polynomial of degree $d$} if for all $d_1, \dots, d_n \geq 0$ such that $\sum_i d_i \leq d$, there exists $c_{d_1, \dots, d_n} \in \K$ such that:
$$P(x_1, \dots, x_n) = \sum_{\substack{d_1, \dots, d_n \in \Z^+: \\  d_1 + \cdots + d_n \leq d}} c_{d_1, \dots, d_n} x_1^{d_1} x_2^{d_2} \cdots x_n^{d_n}$$

We use the notion of {\em additive degree} for functions mapping to $\T$.  Given a function $f:\K^n \to \T$, its {\em additive derivative in direction $h \in \K^n$} is $D_h f:\K^n \to \T$, given by
$$
D_h f(x) = f(x+h) - f(x).
$$
\begin{define}[Additive Polynomials]
A function $P: \K^n \to \T$ is a {\em polynomial of additive degree $d$} if for all $x, h_1, h_2, \dots, h_{d+1} \in \K^n$, we have 
\begin{equation}\label{defadd}
D_{h_1} D_{h_2} \cdots D_{h_{d+1}} P(x) = 0.
\end{equation} A function of bounded additive degree is called an {\em additive polynomial}. 	
\end{define}
For functions $P$ mapping to $\T$, $\deg(P)$ denotes its additive degree.
Note that we can interpret $P: \K^n \to \T$ as a function $P': \F^{nr} \to \T$ with the same additive degree by setting $P(x_1, \dots, x_n) = P'(\Tr(\alpha_1 x_1), \dots, \Tr(\alpha_1 x_1), \dots, \Tr(\alpha_1 x_n), \dots, \Tr(\alpha_1 x_n))$, using Proposition \ref{dual}. By this identification, additive polynomials are exactly the same as the non-classical polynomials introduced by Tao and Ziegler \cite{TZ12}. As a consequence, we have the following:
\begin{lem}[Lemma 1.7 of \cite{TZ12}]\label{explicit}
$P: \K^n \to \T$ is a polynomial of additive degree $d$ if and only if it can be written in the form:
$$
P(x_1, \dots, x_n) = \alpha + \sum_{k \geq 0} \sum_{\substack{0 \leq d_{i,j} < p~ \forall i \in [n], j \in [r]: \\ 0 < \sum_{i=1}^n \sum_{j=1}^r d_{i,j} \leq d-k(p-1)}} \frac{c_{d_{1,1}, \dots, d_{n,r}, k} \prod_{i=1}^n \prod_{j=1}^r |\Tr(\alpha_j x_i)|^{d_{i,j}}}{p^{k+1}} \pmod 1
$$
where $\alpha \in \T$ and $c_{d_{1,1}, \dots, d_{n,r}} \in \{0,1,\dots, p-1\}$ are uniquely determined. The maximum $k$ for which there is a nonzero $c_{d_{1,1}, \dots, d_{n,r}, k}$ is the {\em depth} of $P$. Note that $\text{depth}(P) \leq \left\lfloor \frac{d-1}{p-1}\right\rfloor$ and  that $P$ takes on at most $p^{\text{depth}(P)+1}$ distinct values.
\end{lem}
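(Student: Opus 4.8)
The statement to prove is \cref{explicit}, which is cited as Lemma 1.7 of \cite{TZ12}; so the plan is essentially to reduce to the known non-classical polynomial theory via the identification already set up in the excerpt, and then unwind what that normal form looks like in the $\K$-variable presentation. First I would make precise the reduction: given a basis $\bm{\alpha} = (\alpha_1, \dots, \alpha_r)$ of $\K$ over $\F$, the maps $x_i \mapsto (\Tr(\alpha_1 x_i), \dots, \Tr(\alpha_r x_i))$ identify $\K^n$ with $\F^{nr}$ as $\F$-vector spaces, and by \cref{dual} this identification is a bijection. Crucially, the additive derivative $D_h$ on $\K^n$ corresponds exactly to the derivative in the direction of the image of $h$ in $\F^{nr}$, because $\Tr$ is $\F$-linear (indeed $\Z$-linear on the relevant subgroups) and addition in $\K^n$ goes to addition in $\F^{nr}$. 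Hence "additive degree $\le d$" for $P: \K^n \to \T$ is literally the same condition as "degree $\le d$ in the non-classical sense" for the transported function $P': \F^{nr} \to \T$. This step is routine but worth stating carefully since it is the whole content of the reduction.

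Next I would invoke the Tao–Ziegler structure theorem for non-classical polynomials over $\F_p^{nr}$: any $P'$ of non-classical degree $\le d$ has a unique representation
$$
P'(y_{1,1}, \dots, y_{n,r}) = \alpha + \sum_{k \ge 0} \sum_{\substack{0 \le d_{i,j} < p: \\ 0 < \sum_{i,j} d_{i,j} \le d - k(p-1)}} \frac{c_{\vec{d}, k} \prod_{i,j} |y_{i,j}|^{d_{i,j}}}{p^{k+1}} \pmod 1,
$$
with $\alpha \in \T$ of order dividing $p^{\lfloor (d-1)/(p-1)\rfloor + 1}$ and the $c_{\vec d, k} \in \{0, \dots, p-1\}$ uniquely determined. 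Substituting $y_{i,j} = \Tr(\alpha_j x_i)$ (using $|y_{i,j}| \in \{0, \dots, p-1\}$ is the canonical lift of an element of $\F$) gives exactly the claimed formula for $P$. Uniqueness transfers because the substitution is a bijection of domains and the monomials $\prod_{i,j} |\Tr(\alpha_j x_i)|^{d_{i,j}}$ are $\F$-linearly (in fact functionally) independent precisely as the monomials $\prod |y_{i,j}|^{d_{i,j}}$ are, since the coordinate change $x \leftrightarrow y$ is invertible. The depth claim and the bound $\depth(P) \le \lfloor (d-1)/(p-1) \rfloor$, as well as the bound $p^{\depth(P)+1}$ on the number of distinct values, all carry over verbatim from the $\F_p^{nr}$ statement.

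The one genuinely delicate point — and the step I expect to be the main obstacle — is making sure the identification is "clean" in low characteristic: one must check that $D_h$ on $\K^n \to \T$ really does commute with the coordinate change, i.e. that $\Tr(\alpha_j(x_i + h_i)) = \Tr(\alpha_j x_i) + \Tr(\alpha_j h_i)$ in $\F$ lifts correctly to $\T$ under $\iota$, and that iterated derivatives behave as expected even when carries occur in the lift $|\cdot|: \F \to \{0,\dots,p-1\}$. This is handled by working with the torus-valued formulation throughout (which is exactly why the excerpt introduced $\U_k$ and $\iota$): the derivative $D_h$ is defined by subtraction in $\T$, not in $\F$, so no carry issues arise, and the non-classical polynomial machinery of \cite{TZ12} was built precisely to accommodate this. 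So after setting up the dictionary carefully, the proof is a one-line appeal to \cite{TZ12}. I would therefore present it as: fix the basis, transport $P$ to $P': \F^{nr} \to \T$, observe additive degree is preserved, apply Lemma 1.7 of \cite{TZ12}, and substitute back.
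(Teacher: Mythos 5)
Your proposal is correct and takes essentially the same route the paper intends: the paper's entire justification (stated in the sentences immediately preceding the lemma) is to identify $\K^n$ with $\F^{nr}$ via a trace basis, note that this identification preserves additive degree so additive polynomials coincide with Tao--Ziegler's non-classical polynomials, and then quote Lemma 1.7 of \cite{TZ12} and substitute back. Your added discussion of why the coordinate change commutes with $D_h$ and why carries in the lift $|\cdot|$ cause no trouble (because derivatives are taken in $\T$, not in $\F$) is a useful, correct elaboration of a step the paper treats as implicit.
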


For a function $f: \K^n \to \C$, define the {\em multiplicative derivative in direction $h \in \K^n$} to be be 
$$\Delta_hf(x) = f(x+h) \cdot \overline{f(x)}.$$

\subsection{Locally Characterized Properties}
As described in the introduction, by a locally characterized property, we informally mean a property for which non-membership can be certified by a finite sized witness. Specifically for affine-invariant properties, we define:
\begin{define}[Locally characterized properties]\label{localchar}
\
\begin{itemize}
\item
An {\em induced affine constraint of size $m$ on $\ell$ variables} is
a pair $(A,\sigma)$ where $A$ is an affine constraint of size $m$ on
$\ell$ variables and $\sigma \in [R]^m$.
\item
Given such an induced affine constraint $(A,\sigma)$, a function $f:
\K^n \to [R]$ is said to be {\em $(A,\sigma)$-free} if there exist no
$x_1, \dots, x_\ell \in \K^n$ such that $(f(L_1(x_1, \dots, x_\ell)),
\dots, f(L_m(x_1,\dots,x_\ell))) = \sigma$. On the other hand, if such
$x_1, \dots, x_\ell$ exist, we say that {\em $f$ induces $(A,\sigma)$ at
  $x_1, \dots, x_\ell$}.
\item
Given a (possibly infinite) collection $\mathcal{A} = \{(A^1,\sigma^1),
(A^2, \sigma^2), \dots, (A^i,\sigma^i),\dots\}$ of induced affine constraints, a function $f: \K^n
\to [R]$ is said to be {\em $\mathcal{A}$-free} if it is $(A^i,\sigma^i)$-free for every $i \geq 1$. The size of $\mathcal{A}$ is the size of the largest induced affine constraint in $\mathcal{A}$.
\item Additionally, $\mathcal{A} = \{(A^1,\sigma^1),
(A^2, \sigma^2), \dots, (A^K,\sigma^K) \}$ is a $W$-light affine system if there exists a basis $\bm{\alpha}=(\alpha_1, \dots, \alpha_r)$ such that $\wt_{\bm{\alpha}}(A^i) \leq W$ for all $i \in [K]$.
\item
A property $\P \subseteq \{\K^n \to [R]\}$ is said to be {\em $K,W$-lightly locally characterized} if it is equivalent to $\mathcal{A}$-freeness for some $W$-light affine system $\mathcal{A}$ whose size is $\leq K$.
\end{itemize}
\end{define}

We recall that Kaufman and Ron \cite{KR06} show that:
\begin{thm}[\cite{KR06}]\label{kaufmanron}
The property $\P_d = \{P: \K^n \to \K : \deg(P)\leq d\}$ is $q^{\left\lceil(d+1)/(q-q/p)\r\rceil},$ $pr\left\lceil(d+1)/(q-q/p)\r\rceil$-lightly locally characterized.\end{thm}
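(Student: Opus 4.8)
The plan is to obtain this statement as a reformulation, in the affine-constraint language of \cref{localchar}, of the structural characterization of Reed--Muller codes over general finite fields due to Kaufman and Ron~\cite{KR06}. Write $\ell := \left\lceil (d+1)/(q-q/p)\right\rceil$ and $m := q^{\ell}$. The input from \cite{KR06} that I would invoke is the locality theorem: a function $P : \K^n \to \K$ satisfies $\deg(P) \le d$ if and only if for every $y_0, y_1, \dots, y_{\ell} \in \K^n$ the function $a \mapsto P\bigl(y_0 + \sum_{i=1}^{\ell} a_i y_i\bigr)$ from $\K^{\ell}$ to $\K$ has degree $\le d$, i.e.\ its evaluation vector lies in $\RM_{\K}(\ell,d)$. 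The ``only if'' direction here is immediate, since substituting affine forms into a degree-$\le d$ polynomial yields a polynomial of degree $\le d$; the real content is the converse --- that a polynomial of degree $>d$ over $\F_q$ is already witnessed on a flat of dimension at most $\ell$ --- and I would simply cite the corresponding theorem of \cite{KR06}, taking care that their parameter coincides with $\ell$ above (their statement is phrased for degree $<t$, so one sets $t = d+1$).

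Next I would write the affine system explicitly. Fix an enumeration $a^{(1)} = \mathbf{0}, a^{(2)}, \dots, a^{(m)}$ of $\K^{\ell}$ and a bijection between $[R] = [q]$ and $\K$. Let $A = (L_1, \dots, L_m)$ be the affine constraint of size $m$ on $\ell+1$ variables given by $L_j := (1, a^{(j)}_1, \dots, a^{(j)}_{\ell}) \in \K^{\ell+1}$, so that $L_1 = (1,0,\dots,0)$ as required by \cref{defaffine} and $L_j(y_0, y_1, \dots, y_{\ell}) = y_0 + \sum_{i=1}^{\ell} a^{(j)}_i y_i$. Call $\sigma \in [R]^m$ \emph{bad} if, read through the fixed bijection, the function $\K^{\ell} \to \K$ that takes value $\sigma_j$ at $a^{(j)}$ has degree $>d$ (equivalently, $\sigma \notin \RM_{\K}(\ell,d)$). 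Set $\mathcal{A} := \{(A,\sigma) : \sigma \in [R]^m \text{ is bad}\}$. Then I would check that $P : \K^n \to \K$ is $\mathcal{A}$-free if and only if $\deg(P) \le d$: if $\deg(P) \le d$ then for every $y_0, \dots, y_{\ell}$ the vector $(P(L_j(y_0,\dots,y_{\ell})))_{j=1}^{m}$ is the evaluation vector of a degree-$\le d$ function, hence lies in $\RM_{\K}(\ell,d)$, hence is not bad, so $P$ induces no member of $\mathcal{A}$; conversely, if $\deg(P) > d$ then the Kaufman--Ron theorem produces $y_0, \dots, y_{\ell}$ for which this vector is not in $\RM_{\K}(\ell,d)$, i.e.\ equals some bad $\sigma$, so $P$ induces $(A,\sigma) \in \mathcal{A}$. (Degenerate tuples $y_0, \dots, y_{\ell}$, for which the directions $y_1, \dots, y_{\ell}$ are linearly dependent, are harmless: they merely impose that restrictions to lower-dimensional flats have degree $\le d$, which is already implied.)

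Finally I would read off the parameters. The system $\mathcal{A}$ consists of the single constraint $A$ paired with various value-tuples $\sigma$, so its size is $m = q^{\ell} = q^{\lceil (d+1)/(q-q/p)\rceil}$. For the weight, with respect to an arbitrary basis $\bm{\alpha} = (\alpha_1, \dots, \alpha_r)$ of $\K$ over $\F$, every $c \in \K$ satisfies $\wt_{\bm{\alpha}}(c) = \sum_{i=1}^{r} |\Tr(\alpha_i c)| \le r(p-1)$, so each form $L_j = (1, a^{(j)}_1, \dots, a^{(j)}_{\ell})$ has $\wt_{\bm{\alpha}}(L_j) = \sum_{i=1}^{\ell} \wt_{\bm{\alpha}}(a^{(j)}_i) \le \ell\, r(p-1) \le pr\ell$. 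Hence $\mathcal{A}$ is a $pr\ell$-light affine system of size $q^{\ell}$, which is the assertion that $\P_d$ is $q^{\lceil (d+1)/(q-q/p)\rceil},\ pr\lceil (d+1)/(q-q/p)\rceil$-lightly locally characterized.

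The single nontrivial ingredient is the hard direction of the Kaufman--Ron locality theorem (that degree $>d$ over $\F_q$ is detected on some affine flat of dimension $\lceil (d+1)/(q-q/p)\rceil$), which is imported verbatim from \cite{KR06}; the rest --- fixing the identification $[R] \leftrightarrow \K$ and an enumeration of $\K^{\ell}$, assembling $\mathcal{A}$, handling degenerate parametrizations, and the one-line weight bound --- is routine, and this is the part I would actually carry out.
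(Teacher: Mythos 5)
The paper does not actually prove this statement; it is simply cited from Kaufman and Ron~\cite{KR06} as an imported fact, so there is no in-paper argument to compare against. Your derivation supplies the missing translation, and it is correct: you invoke the Kaufman--Ron locality theorem in the form that $\deg(P)\le d$ over $\F_q$ is equivalent to every restriction of $P$ to an $\ell$-flat having degree $\le d$ with $\ell=\lceil(d+1)/(q-q/p)\rceil$, then recast that as $\mathcal{A}$-freeness for the single constraint $A=(L_1,\dots,L_m)$ on $\ell+1$ variables enumerating $\K^\ell$ (with $L_1=x_1$ via $a^{(1)}=\mathbf 0$), paired with all $\sigma\notin\RM_{\K}(\ell,d)$. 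Your handling of degenerate tuples $y_1,\dots,y_\ell$ via the easy direction is the right observation, and the parameter count is correct: size $q^\ell$, and each affine form has weight $\sum_{i=1}^\ell\wt_{\bm\alpha}(a^{(j)}_i)\le \ell r(p-1)\le pr\ell$, matching the stated bounds. The only external input is the hard direction of Kaufman--Ron, which you rightly defer to~\cite{KR06}; everything else is routine and carried out correctly.
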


\subsection{Factors and Rank}
Next, we define a polynomial factor which forms the basis for much of higher order Fourier analysis.
\begin{define}[Factor]  A {\em polynomial factor} $\B$ is a sequence of additive polynomials $P_1, \dots, P_C:\K^n \to \T$. We also identify it with the  function $\B:\K^n\to\T^C$ mapping $x$ to $(P_1(x),\ldots,P_C(x))$. An {\em atom} of $\B$ is a preimage $\B^{-1}(y)$ for some $y\in\T^C$. When there is no ambiguity, we will in fact abuse notation and identify an atom of $\B$ with the
common value $\B(x)$ of all $x$ in the atom.

The {\em partition induced by $\B$} is the partition of $\K^n$ given by $\left\{\B^{-1}(y):y\in \T^C\right\}$. The {\em complexity} of $\B$, denoted $|\B|$, is the number of defining polynomials $C$. The {\em order} of $\B$, denoted $\|\B\|$, is the total
number of atoms in $\B$.  The {\em degree} of $\B$ is the maximum additive degree among its defining polynomials $P_1,\ldots,P_C$.
\end{define}

Note that due to Lemma \ref{explicit}, if $\B$ is defined by polynomials $P_1, \dots, P_C$, 
$$\|\B\| = \prod_{i=1}^C p^{\text{depth}(P_i) + 1}$$

\begin{define}[Rank] Let $d \in \N$ and $P:\K^n \rightarrow \T$. Then $\rank_d(P)$ is defined as the smallest integer $k$ such that there exist functions $P_1,\ldots , P_k:\K^n \rightarrow \T$ of additive degree $\leq d-1$ and a function $\Gamma:\T^k \rightarrow \T$ such that $P(x)=\Gamma(P_1(x),\ldots , P_k(x))$. If $d=1$, then the rank is $0$ if $P$ is a constant function and is
$\infty$ otherwise. If $P$ is a polynomial of additive degree $d$, then $\rank(P)=\rank_d(P)$.
\end{define}

\begin{define}[Rank and Regularity of Polynomial Factor]Let $\B$ be a polynomial factor defined by the sequence  $P_1,\ldots , P_c:\K^n \rightarrow \T$ with respective depths $k_1, \dots, k_c$. Then, the rank of $\B$ is  $\min_{(a_1, \dots, a_c)} \rank(\sum_{i=1}^c a_i P_i)$ where the minimum is over $(a_1, \dots, a_c) \in \Z^c$ such that $(a_1 \mod p^{k_1 + 1}, \dots, a_c \mod p^{k_c + 1}) \neq (0, \dots, 0)$ . 

Given a polynomial factor $\B$ and a non decreasing function $r:\Z^+ \to \Z^+$, $\B$ is $r$-regular if $\B$ is of rank at least $r(|\B|)$. 
\end{define}

\begin{define}[Semantic and Syntactic refinement]Let $\B$ and $\B'$ be polynomial factors. A factor $\B'$ is a {\em syntactic refinement} of $\B$, denoted by $\B' \succeq_{syn}\B$ if the set of polynomials defining $\B$ is a subset of the set of polynomials defining $\B'$. It is a {\em semantic refinement}, denoted by $\B' \succeq_{sem}\B$ if for every $x,y \in \K^n$, $\B'(x)=\B'(y)$ implies $\B(x)=\B(y)$. Clearly, a syntactic refinement is also a semantic refinement. 
\end{define}

Our next lemma is the workhorse that allows us to convert any factor into a regular one.
\begin{lem}[Polynomial Regularity Lemma]\label{factorreg}
Let $r: \Z^+ \to \Z^+$ be a non-decreasing function and $d > 0$ be an integer. Then, there is a  function $C_{\ref{factorreg}}^{(r,d)}: \Z^+ \to \Z^+$  such that the following is true. Suppose $\B$ is a factor defined by
polynomials $P_1,\dots, P_C : \K^n \to \T$  of additive degree at most $d$. Then, there is  an $r$-regular factor $\B'$ consisting of  polynomials $Q_1, \dots, Q_{C'}: \K^n \to \T$ of additive degree $\leq d$ such that $\B' \succeq_{sem} \B$ and  $C' \leq C_{\ref{factorreg}}^{(r,d)}(C)$.

Moreover, if $\B$ is itself a refinement of some polynomial factor $\hat{\B}$ that has rank $> (r(C') + C')$, then additionally
$\B'$ will be a syntactic refinement of $\hat{\B}$.
\end{lem}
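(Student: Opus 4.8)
The plan is to prove this by the standard iterative ``polynomial regularization'' procedure, run with additive (non-classical) polynomials throughout. I maintain a sequence of factors $\B=\B_0\succeq_{sem}\B_1\succeq_{sem}\cdots$, each consisting of additive polynomials of degree $\le d$, and stop as soon as the current factor is $r$-regular, outputting it as $\B'$. If $\B_i$, with defining polynomials $P_1,\dots,P_c$ of depths $k_1,\dots,k_c$ (so $c=|\B_i|$), is not $r$-regular, then by definition of the rank of a factor there is an integer vector $(a_1,\dots,a_c)$ with $(a_1\bmod p^{k_1+1},\dots,a_c\bmod p^{k_c+1})\ne(0,\dots,0)$ such that $Q:=\sum_j a_jP_j$ satisfies $\rank(Q)<r(c)$; call the index $j$ \emph{active} if $a_j\not\equiv 0\pmod{p^{k_j+1}}$. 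Since $Q$ is a $\Z$-combination of additive polynomials of degree $\le d$ it has additive degree $\le d$ --- in fact at most $\max\{\deg P_j: j\text{ active}\}$, the inactive terms being identically $0$ on the torus --- so we may write $Q=\Gamma(R_1,\dots,R_m)$ with $m<r(c)$ and every $R_\ell$ of additive degree $\le\deg(Q)-1\le d-1$. I then pick an active index $j^*$ whose defining polynomial $P_{j^*}$ has maximal additive degree $d'$, and among those, maximal depth $k_{j^*}$.

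The refinement step deletes $P_{j^*}$ and inserts $R_1,\dots,R_m$ together with at most one further polynomial. Write $a_{j^*}=p^su$ with $\gcd(u,p)=1$ and $0\le s\le k_{j^*}$. From $a_{j^*}P_{j^*}=Q-\sum_{j\ne j^*}a_jP_j$ and $p^{k_{j^*}+1}P_{j^*}=0$ one sees that $p^sP_{j^*}$ is a fixed function of $\{R_\ell\}\cup\{P_j\}_{j\ne j^*}$; and by \cref{explicit} the polynomial $p^sP_{j^*}$ has additive degree $\le d'-s(p-1)$, so one can choose a function $W$ of $\{R_\ell\}\cup\{P_j\}_{j\ne j^*}$ of additive degree $\le d'$ with $p^sW=p^sP_{j^*}$. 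Then $Z:=P_{j^*}-W$ is $\tfrac1{p^s}\Z/\Z$-valued, of additive degree $\le d'$ and of depth $<k_{j^*}$ when $s\ge 1$; when $s=0$ no $Z$ is needed, as $P_{j^*}$ is then already a function of the surviving polynomials. In every case $\B_{i+1}:=\{P_j\}_{j\ne j^*}\cup\{R_1,\dots,R_m\}\cup\{Z\}$ semantically refines $\B_i$, consists of additive polynomials of degree $\le d$, has $|\B_{i+1}|\le|\B_i|+r(|\B_i|)$, and --- the key point --- its multiset of degree--depth pairs arises from that of $\B_i$ by deleting $(d',k_{j^*})$ and inserting only pairs strictly smaller in the total order ``$(a,b)<(a',b')$ iff $a<a'$, or $a=a'$ and $b<b'$'': the $R_\ell$ have degree $<d'$, and $Z$ has degree $\le d'$ with depth $<k_{j^*}$ whenever its degree equals $d'$.

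Degrees lie in $\{1,\dots,d\}$ and depths in $\{0,\dots,\lfloor(d-1)/(p-1)\rfloor\}$ by \cref{explicit}, so the order above is a well-order on a finite set, the induced multiset order on degree--depth profiles is well-founded, and the procedure terminates --- necessarily at an $r$-regular factor. Transitivity of $\succeq_{sem}$ gives $\B'\succeq_{sem}\B$, and all defining polynomials of $\B'$ have additive degree $\le d$ by construction. The explicit bound $C_{\ref{factorreg}}^{(r,d)}$ comes from the standard unwinding: one inducts on $d$ and, within fixed $d$, on the maximal depth, using that the number of refinement steps deleting a polynomial of a given degree--depth pair is controlled by the complexity reached once all strictly larger pairs are exhausted, while a single step enlarges the complexity only additively, by at most $r(\cdot)$ of its current value; composing these bounds yields a finite $C_{\ref{factorreg}}^{(r,d)}(C)$ depending only on $C,r,d$ (cf.\ \cite{TZ12,BFHHL13}).

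For the ``moreover'' clause, assume $\B\succeq_{syn}\hat\B$ with $\rank(\hat\B)>r(C')+C'$. I run the same procedure and claim $P_{j^*}$ is never one of $\hat\B$'s defining polynomials, so those survive into $\B'$, giving $\B'\succeq_{syn}\hat\B$. Since $|\B_i|\le C'$ throughout, every step has $\rank(Q)<r(C')$; split $Q=\sum_l b_lH_l+\sum_l c_lO_l$ into the $\hat\B$-part and the at most $C'$ other defining polynomials. It suffices to show no $b_l$ is active, since then $a_{j^*}$, being active, sits on some $O_l$, i.e.\ $P_{j^*}\notin\hat\B$; and if at the maximal active degree $d'$ some active index lay outside $\hat\B$ we would pick $j^*$ there, so the only case to rule out is the one where every active $O_l$ has degree $<d'$. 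If nonetheless some $b_l$ were active, then from $\sum_l b_lH_l=\Gamma(R_1,\dots,R_m)-\sum_l c_lO_l$ one wants to read off that $\sum_l b_lH_l$ is a function of at most $r(C')+C'$ polynomials, each of additive degree $<\deg(\sum_l b_lH_l)$, contradicting $\rank(\hat\B)>r(C')+C'$. Securing that last step --- one needs the $O_l$ actually occurring (after dropping those with inactive coefficient) to have degree strictly below $\deg(\sum_l b_lH_l)$, which can fail only through cancellation of top-degree parts among the active $H_l$ --- is the delicate point, and I expect it to be the main obstacle: one arranges it by taking $Q$ maximally supported on high-degree polynomials, i.e.\ by first eliminating any top-degree linear dependence among the active $\hat\B$-polynomials. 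The same maximal-degree bookkeeping is what keeps the inserted polynomials within degree $\le d$ in the non-classical ($s\ge1$) case above; all of this is routine but must be carried out with care.
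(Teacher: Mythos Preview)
The paper's proof is a single line: since additive polynomials $\K^n\to\T$ are exactly the non-classical polynomials $\F^{rn}\to\T$ under the identification $\K^n\cong\F^{rn}$ (already noted after \cref{explicit}), the statement is literally Lemma~2.18 of \cite{BFHHL13}. What you have written is essentially a reconstruction of that cited proof, so the underlying approach is the same; your main iteration and the well-founded termination via the multiset order on (degree, depth) pairs are correct and standard.

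The real gap is exactly where you yourself locate it, in the ``moreover'' clause. You need that at every step some active index of maximal (degree, depth) lies outside $\hat\B$, and your proposed fix---``first eliminating any top-degree linear dependence among the active $\hat\B$-polynomials''---is a slogan rather than an argument. The degree-cancellation scenario you flag is genuine: if $\deg\bigl(\sum_l b_l H_l\bigr)<d'$, then writing $\sum_l b_l H_l=\Gamma(R_1,\dots,R_m)-\sum_l c_l O_l$ only exhibits it as a function of polynomials of degree $<d'$, not of degree $<\deg\bigl(\sum_l b_l H_l\bigr)$, so no contradiction with $\rank(\hat\B)>r(C')+C'$ follows. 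A second, smaller issue: you assert $|\B_i|\le C'$ throughout, but a single refinement step can in principle lower the complexity (take $m=0$, $s=0$), so intermediate complexities need not be bounded by the final one without further argument. Both points are handled in \cite{BFHHL13}; given that the identification with $\F^{rn}$ is already in place, citing that lemma---as the paper does---is both shorter and safer than redoing it here.
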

\begin{proof}
Follows directly from Lemma 2.18 of \cite{BFHHL13} by identifying $\K^n$ with $\F^{rn}$. 
\end{proof}

In fact, the regularization process of \cref{factorreg} can be implemented in time $O(n^{d+1})$ \cite{BHT15}.

\subsection{Gowers norm and the inverse theorem}
\begin{define}
The {\em bias} of a function $f: \K^n \to \C$ is defined as $\bias(f) = \left|\E_{x \in \K^n} f(x)\right|$. For $P: \K^n \to \T$, we use $\bias(P)$ to denote $\bias(e(P))$.
\end{define}

The {\em Gowers norm} of a function measures the bias of its iterated derivative. Precisely:
\begin{define}[Gowers norm]\label{gowers}
Given a function $f: \K^n \to \C$ and an integer $d \geq 1$, the {\em  Gowers norm of order $d$}  for $f$ is given by
$$\|f\|_{U^d} = \left|\E_{h_1,\dots,h_d,x \in \K^n} \left[(\Delta_{h_1} \Delta_{h_2} \cdots \Delta_{h_d}f)(x)\right]\right|^{1/2^d}.$$
If $P: \K^n \to \T$, $\|P\|_{U^d}$ denotes $\|e(P)\|_{U^d}$. 
\end{define}
Note that as $\|f\|_{U^1}= \bias(f)$ the Gowers norm of order $1$ is only a semi-norm. However for $d>1$, it is not difficult to show that
$\|\cdot\|_{U^d}$ is indeed a norm.

There is a tight connection between additive polynomials and  Gowers norms. In one direction, it is a straightforward consequence of the monotonicity of the Gowers norm ($\|f\|_{U^d} \leq \|f\|_{U^{d+1}}$) and invariance of the Gowers norm with respect to modulation by lower degree polynomials ($\|f\|_{U^{d+1}} = \|f\cdot e(P)\|_{U^{d+1}}$ for polynomials $P$ of additive degree $\leq d$) that if $f: \K^n \to \C$ is {\em $\delta$-correlated} with a polynomial $P$ of additive degree $\leq d$, meaning 
$$|\E_{x} f(x) e(-P(x))| \geq \delta$$ 
for some $\delta > 0$, then 
$$\|f\|_{U^{d+1}} \geq \delta.$$

In the other direction, we have the following ``Inverse theorem for the Gowers norm''.
\begin{thm}[Theorem 1.11 of \cite{TZ12}]\label{inverse}
Suppose $\delta > 0$ and $d \geq 1$ is an integer. There exists an $\eps = \eps_{\ref{inverse}}(\delta,d)$ such that the following
holds. For every function $f: \K^n \to \C$ with $\|f\|_\infty \leq 1$ and $\|f\|_{U^{d+1}} \geq \delta$, there exists a polynomial $P:
\K^n \to \T$ of additive degree $\leq d$ that is $\eps$-correlated with $f$, meaning
$$\left|\E_{x \in \K^n} f(x) e(-P(x)) \right| \geq \eps.$$
\end{thm}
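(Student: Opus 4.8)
The plan is to transport the statement to the prime-field setting, where it is exactly Theorem 1.11 of \cite{TZ12}. Fix a basis $\bm{\alpha} = (\alpha_1, \dots, \alpha_r)$ of $\K$ over $\F$ and let $\bm{\beta} = (\beta_1, \dots, \beta_r)$ be the associated dual basis from \cref{dual}. This data gives a bijection $\phi: \K^n \to \F^{rn}$ sending $x = (x_1, \dots, x_n)$ to the tuple $(\Tr(\alpha_j x_i))_{i \in [n], j \in [r]}$, with inverse reconstructing $x_i = \sum_j \beta_j \Tr(\alpha_j x_i)$. The point I would emphasize is that, since $\Tr$ is $\F$-linear, $\phi$ is an isomorphism of the additive groups $(\K^n,+)$ and $(\F^{rn},+)$: $\phi(x+h) = \phi(x) + \phi(h)$ for all $x,h$.

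First I would push $f$ through $\phi$: set $f' = f \circ \phi^{-1}: \F^{rn} \to \C$, so that $\|f'\|_\infty = \|f\|_\infty \le 1$. Because the Gowers norm $\|\cdot\|_{U^{d+1}}$ is defined purely through the additive group structure — iterated multiplicative derivatives $\Delta_{h_1}\cdots\Delta_{h_{d+1}}$ over directions $h_i$ ranging in the group, followed by an average over the group, and $\phi$ transports both the group law and the uniform measure — we get $\|f'\|_{U^{d+1}} = \|f\|_{U^{d+1}} \ge \delta$.

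Next I would invoke \cite[Theorem 1.11]{TZ12} over $\F_p^{rn}$, which is legitimate because $\F = \F_p$ is a prime field: with $\eps := \eps_{\ref{inverse}}(\delta,d)$ the constant supplied by that theorem, there is a non-classical polynomial $P': \F^{rn} \to \T$ of degree $\le d$ with $|\E_{y \in \F^{rn}} f'(y) e(-P'(y))| \ge \eps$. By the identification recorded just before \cref{explicit}, non-classical polynomials on $\F^{rn}$ of degree $\le d$ correspond under $\phi$ precisely to additive polynomials $\K^n \to \T$ of additive degree $\le d$ (concretely, if $P = P' \circ \phi$ then $D_{h_1}\cdots D_{h_{d+1}}P = \big(D_{\phi(h_1)}\cdots D_{\phi(h_{d+1})}P'\big)\circ\phi$, so one side vanishes iff the other does). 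Taking this $P$ and transporting the correlation estimate back along $\phi$ — again using that $\phi$ preserves the uniform measure, so $e(-P')$ averaged against $f'$ becomes $e(-P)$ averaged against $f$ — yields $|\E_{x \in \K^n} f(x) e(-P(x))| \ge \eps$, which is the claim.

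There is essentially no obstacle here: all of the analytic content lives in \cite{TZ12}, and the only thing that needs checking is that the two nonstandard notions appearing in the statement — the Gowers norm over $\K^n$ and "additive degree" of a $\T$-valued function on $\K^n$ — coincide, under $\phi$, with the corresponding notions over $\F^{rn}$. Both are defined solely in terms of additive derivatives and the additive group law, which $\phi$ respects, so this verification is routine. (In particular, no multiplicative structure of $\K$ enters, which is exactly why this reduction is clean, in contrast to the equidistribution and degree-preserving statements elsewhere in the paper where the multiplicative structure genuinely matters.)
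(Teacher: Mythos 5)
Your proof is correct and follows the same route the paper takes implicitly: identify $\K^n$ with $\F^{rn}$ as additive groups via trace coordinates, under which additive polynomials become Tao--Ziegler's non-classical polynomials and Gowers norms are preserved (both being defined purely through the additive structure), then import the prime-field inverse theorem. The paper merely cites the result after recording this identification just before \cref{explicit}; your write-up spells out the transfer step by step, but the content is the same.
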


We can be more explicit when $f = e(P)$ for an additive polynomial $P$.
\begin{thm}[Theorem 1.20 of \cite{TZ12}]\label{invpoly}
Suppose $\delta > 0$ and $d \geq 1$ is an integer. There exists an $r = r_{\ref{invpoly}}(\delta,d)$ such that the following
holds. If a polynomial $P: \K^n \to \T$ with additive degree $d$ satisfies $\|P\|_{U^{d}} \geq \delta$, then $\rank(P) \leq r$.

\end{thm}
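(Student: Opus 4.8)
The plan is to reduce this to its prime-field counterpart, Theorem 1.20 of \cite{TZ12}, exactly as was done for \cref{explicit}. Fix a basis $\bm{\alpha} = (\alpha_1, \dots, \alpha_r)$ of $\K$ over $\F$, and let $\phi : \K^n \to \F^{nr}$ be the bijection sending $x = (x_1, \dots, x_n)$ to the tuple of coordinates $\Tr(\alpha_j x_i)$ for $i \in [n], j \in [r]$; this is an isomorphism of additive groups. I would first verify that all three quantities in the statement are unaffected by passing through $\phi$. For additive degree this is the remark following \cref{explicit}: since $\phi$ is a bijection and $D_h$ for $h \in \K^n$ corresponds to $D_{\phi(h)}$, $P$ has additive degree $d$ over $\K^n$ if and only if $P \circ \phi^{-1}$ is a non-classical polynomial of degree $d$ over $\F^{nr}$. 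For the Gowers norm, the expectation in \cref{gowers} is over $h_1, \dots, h_d, x \in \K^n$, which under $\phi$ becomes the same expectation over $\F^{nr}$, so $\|e(P)\|_{U^d}$ is literally the same number whether computed over $\K^n$ or over $\F^{nr}$. For rank, a representation $P = \Gamma(P_1, \dots, P_k)$ with each additive $P_i$ of degree $\leq d - 1$ becomes $P\circ\phi^{-1} = \Gamma(P_1\circ\phi^{-1}, \dots, P_k\circ\phi^{-1})$ with the same $\Gamma$ and each $P_i\circ\phi^{-1}$ of non-classical degree $\leq d - 1$ over $\F^{nr}$, and conversely; hence $\rank(P) = \rank(P \circ \phi^{-1})$.

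Granting these identifications, the theorem is immediate: apply Theorem 1.20 of \cite{TZ12} to $P \circ \phi^{-1} : \F^{nr} \to \T$ with the same $\delta$ and $d$ to obtain the bound $r = r_{\ref{invpoly}}(\delta, d)$. This bound depends only on $\delta$, $d$, and the fixed prime $p$ --- in particular not on $n$ nor on the extension degree --- which is exactly what the applications need. The only item requiring any care is the last identification (rank), but since the composition function $\Gamma : \T^k \to \T$ plays no role in $\phi$ and additive polynomials of degree $\leq d - 1$ on $\K^n$ are in bijection with non-classical polynomials of degree $\leq d-1$ on $\F^{nr}$, there is really nothing to prove there. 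So the reduction is the clean path and I expect no genuine obstacle along it.

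For completeness I would also note the self-contained route, which avoids black-boxing \cite{TZ12}'s Theorem 1.20 and uses only the inverse theorem \cref{inverse}. Since $\|e(P)\|_{U^d} \geq \delta$, applying \cref{inverse} at order $d$ (i.e.\ with parameter $d - 1$) gives a polynomial $Q : \K^n \to \T$ of additive degree $\leq d - 1$ with $|\E_x e(P(x) - Q(x))| \geq \eps_{\ref{inverse}}(\delta, d - 1)$, so $R := P - Q$ is an additive polynomial of degree $\leq d$ with $\bias(R) \geq \eps_{\ref{inverse}}(\delta, d-1)$ and $\rank(P) \leq \rank(R) + 1$. It then remains to show that a biased polynomial of degree $\leq d$ has rank bounded in terms of its bias and $d$; squaring the bias yields $\E_h \bias(D_h R) \geq \eps_{\ref{inverse}}(\delta,d-1)^2$ with each $D_h R$ of degree $\leq d - 1$, setting up an induction on $d$ (base case: a biased additive polynomial of degree $1$ is constant). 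The main obstacle on this route is precisely the inductive step --- converting the statement ``a positive fraction of the derivatives $D_h R$ are biased'' into an explicit bounded-rank decomposition of $R$ itself --- which is the symmetrization-and-integration work carried out in \cite{TZ12}, and is exactly what the reduction above lets us sidestep.
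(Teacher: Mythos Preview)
Your proposal is correct and matches the paper's approach: the paper does not give a standalone proof of this theorem but simply imports Theorem 1.20 of \cite{TZ12} via the identification of $\K^n$ with $\F^{rn}$ described just before \cref{explicit}, and your first paragraph spells out exactly why that identification preserves additive degree, Gowers norm, and rank. Your second paragraph (the self-contained route through \cref{inverse} plus induction on $d$) is extra and not needed here, though your assessment of where its difficulty lies is accurate.
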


\comments{
\subsection{Counting Lemma}

\begin{lem}\label{count}
Let $f_1, \dots, f_m : \K^n \to [-1,1]$. Let $\calL = \{L_1, \dots, L_m\}$ be a system of $m$ affine forms over $\K$ in $\ell$ variables of complexity. Then:
$$
\left|\E_{x_1, \dots, x_\ell} \left[\prod_{j=1}^m f_j(L_j(x_1, \dots, x_\ell))\right]\right| \leq \min_{i \in [m]}\|f_i\|_{U^{m-1}}
$$
\end{lem}
\begin{proof}
The proof is similar to that in \cite{GT10}, but we sketch a proof for completeness as the linear forms have coefficients over the larger field. Suppose $i=1$ on the rhs of the desired inequality without loss of generality. We first construct linear forms $L_1', \dots, L_m'$ on $m-1$ variables $z_2, \dots, z_{m}$, so that $L_1'$ is a function of all $m-1$ variables while each $L_j'$ for $j \geq 2$ does not depend on $z_j$. For this, for each $j \geq 2$, let $v_j \in \K^\ell$ be a vector such that $\langle v_j, L_1\rangle \neq 0$ but $\langle v_j, L_j\rangle = 0$. Now, set $L_j'(z_2, \dots, z_m) = \sum_{k=2}^m \langle v_k, L_k\rangle z_k$ for each $j \in [m]$. 

Observe that because we only do a linear transformation of variables:
$$\left|\E_{x_1, \dots, x_\ell} \left[\prod_{j=1}^m f_j(L_j(x_1, \dots, x_\ell))\right]\right|
=\left|\E_{z_2, \dots, z_m} \left[\prod_{j=1}^m f_j(L_j'(z_2, \dots, z_m))\right]\right|$$
We can now repeatedly perform Cauchy-Schwarz inequalities $m-1$ times (each time eliminating one function $f_j$, $j>1$) so that the above is bounded by $\|f_1\|_{U^{m-1}}$.  
\end{proof}
The bound given by \cref{count} may be significantly improved if we know the ``complexity'' of the linear forms (see \cite{gowers-wolf-1, gowers-wolf-2, HHL14}) but this simple version is enough for our purposes.
\remove{

\begin{lem}[Size of atoms]\label{atomsize}Suppose Theorem~\ref{thm:blr} is true up to order $d$. Given $s \in \N$, let $\B$ be a polynomial factor of rank at least $c^{(\ref{thm:blr})}(d,\de)$ defined by polynomials $P_1,\ldots , P_c:\F_q^n \rightarrow \F_q$ of degree at most $d$. For every $b \in \F_p^c$, $$ \Pr_x[\B(x)=b]=\frac{1}{||\B||} \pm \frac{1}{p^s}.$$
\end{lem}
\begin{proof}For any $b \in \F_p^c$, 
\begin{eqnarray*}
\Pr[\B(x)=b]&=&\frac{1}{p^c}\sum_{a \in \F_p^c}\E_x \l[e\l(\sum_i a_i(Tr(P_i(x))-b_i)\r)\r]\\
&=&\frac{1}{p^c}\pm \frac{1}{p^c}\sum_{0 \neq a  \in \F_p^c}\l|\E_x \l[e\l(\sum_i a_i Tr(P_i(x))\r)\r]\r|\\
&=&\frac{1}{p^c} \pm \de
\end{eqnarray*}

The last line follows because of the following. Suppose for some $a \neq 0$, $\l|\E_x \l[e\l(\sum_i a_i Tr(P_i(x))\r)\r]\r|>\de$, then by Theorem~\ref{thm:blr}, $\rank(\sum_i a_i Tr(P_i)) \leq c^{(\ref{thm:blr})}(d,\de)$. This contradicts the assumption on the rank of $\B$.
\end{proof}

\section{Bias implies low rank approximation}

The main theorem we prove is the following.
\begin{thm}\label{thm:blr}Let $d\in \N$. Let $P:\K^n \to \F$ be a trace polynomial of degree $d$. Suppose $\E[e(P(X))] \geq \de$. Then, there exist trace polynomials $P_1,\cdots P_c:\K^n \to \F$ , $c=c^{(\ref{thm:blr})}(d,\de)$, and $\Gamma:\F^c \rightarrow \F$, such that $$P(x)=\Gamma(P_1(x),\ldots , P_c(x)).$$\end{thm}

We first prove the following components which are required in the proof of the main theorem above.
\begin{lem}\label{lem:blr}Let $d\in \N$. Let $0 < \de,\sigma <1$. Let $P:\K^n \to \F$ be a trace polynomial of degree $d$. Suppose $\E[e(P(X))] \geq \de$. Then there exist trace polynomials $P_1,\ldots , P_k:\K^n \to \F$ of degree $d$, $k \le p^5/\de^2\sigma$, and $\Gamma:\F_p^k \to \F_p$ such that $$\Pr[P(X) \neq \Gamma(P_1(X),\ldots , P_c(X))]\leq \sigma.$$
\end{lem}
\begin{proof}
For $r, t \in \F_p$, define a probability measure $\mu_r:\F_p \to [0,1]$ by $$\mu_r(t):=\Pr[P(X)=t+r].$$ By assumption, $$\l|\E\l[e(P(X))\r]\r|=\l|\sum_{t \in \F_p}e(t)\mu_0(t)\r| \ge \de.$$

Let $r \ne 0$ be arbitrary. Observe that $$\sum_{t \in \F_p}e(t)\mu_0(t)=e(r)\sum_{t \in \F_p}e(t)\mu_r(t).$$
Thus,
\begin{eqnarray*}
&&||\mu_0-\mu_r||\\
&=&\sum_{t \in \F_p}|\mu_0(t)-\mu_(t)|\\
& \ge & |e(r)-1|\l|\sum_{t \in \F_p}e(t)\mu_0(t)\r|\\
& \ge & 4\de/p
\end{eqnarray*}

Fix $x \in \K^n$. Fix $t \in \F$. Then $$\Pr_h[D_hP(x)=t]=\mu_{P(x)}(t).$$

For $h_1,\ldots , h_k \in \K^n$, $k \ge p^5/2\de^2\sigma$, define the empirical distribution $$\mu_{obs}^x(t):=\frac{1}{k}\sum_{i=1}^k \de_{D_{h_i}P(x)}(t).$$

Define $$\tilde{P}(x):=argmin_r ||\mu_{obs}^x-\mu_r||.$$ Note that $$\tilde{P}(x)=\Gamma(D_{h_1}P(x),\ldots , D_{h_k}P(x))$$ for some $\Gamma:\F_p^k \to \F_p$. To show that this is a unique approximation, set $Y_i=1_{D_{h_i}P(x)=t}$. Then, by Chebyshev's inequality, $$\Pr[\l|\sum_i Y_i/k-\mu_{P(x)}(t)\r| \ge 2\de/p] \le \sigma/p.$$ Finally, by a double counting argument, there exists a setting of $h_1,\ldots , h_k$ such that $$\Pr_X[P(X) \neq \Gamma(P_1(X),\ldots , P_c(X))]\leq \sigma,$$ where $P_i(x)=D_{h_i}P(x)$.
This finishes the proof.
\qed
\end{proof}

\begin{lem}[Polynomial Regularity Lemma]\label{factorreg}  Let $r:\N \rightarrow \N$ be a non-decreasing function and $d \in \N$. Then there is a function $C_{r,d}^{(\ref{factorreg})}:\N \rightarrow \N$ such that the following is true. Let $\B$ be a factor defined by trace polynomials $P_1,\dots , P_c:\K^n \rightarrow \F$ of degree at most $d$. Then, there is an $r$-regular factor $\B'$ defined by trace polynomials $Q_1,\ldots , Q_{c'}:\K^n \rightarrow \F$ of degree at most $d$ such that $\B' \succeq_{sem} \B$ and $c' \leq C_{r,d}^{(\ref{factorreg})}(c)$.

Moreover if $\B \succeq_{syn} \hat{\B}$ for some polynomial factor $\hat{\B}$ that has rank at least $r(c')+c'+1$, then $\B' \succeq_{syn} \hat{B}$.
\end{lem}

\begin{proof}
We shall induct on the dimension vector $(M_1,\ldots , M_d)$ where $M_i$ denotes the number of trace polynomials of degree $i$. Define a lexical ordering $$(M_1,\ldots , M_d) \le (M_1',\ldots , M_d')$$ if there exists $1 \le i \le d$ such that $M_i < M_i'$ and $M_j=M_j'$ for all $i<j \le d$. This makes it well ordered set.

If $\B$ is $r$-regular, then we are done. Else, there exists $(a_1,\ldots , a_c) \in \F^c$, $\l(a_1, \ldots , a_c\r) \neq (0,\ldots , 0)$ for which the linear combination $h(x):=\sum_{i=1}^c  a_i P_i(x)=\Gamma(Q_1(X)),\ldots , Q_r(X))$ where $r=c^{(\ref{thm:blr})}(d,\de)$ and each $Q_i$ is a trace polynomial of degree $< d = \deg(h)$.. If $P_i$ is the polynomial with degree $d$, then we replace $P_i$ by the collection $\{Q_1(X),\ldots , Q_r(X)\}$. Since, all the $Q_i$ have degree strictly less than $d$, we transform the dimension vector $$(M_1,\ldots , M_d) \to (M_1,\ldots , M_{i-1}+r(c),M_i-1,\ldots , M_d).$$ Thus, we make use of the inductive hypothesis.
\end{proof}

\ArnabNote{Have to say that for a factor $\calK$ generated by polynomials $Q_1, \dots, Q_c: \K^n \to \K$, we apply above regularity lemma by starting from the trace polynomial factor $\calF$ generated by $\{\Tr(\alpha_1 Q_1), \dots, \Tr(\alpha_r Q_1), \dots, \Tr(\alpha_1 Q_c), \dots, \Tr(\alpha_r Q_c)\}$ where $\alpha_1, \dots, \alpha_r$ are linearly independent. Note that $\calF$ is a refinement of $\calK$.}

\subsection{Inverse Gowers norm for polynomial phases}
\begin{thm}\label{thm:invGow}Suppose Theorem~\ref{thm:blr} is true up to order $d$. Let $d,s\in \N$. Let $P:\K^n \to \F$ be a trace polynomial of degree $d$. Suppose $||e(P(X))||_{U^d} \geq \de$. Then, $\rank(P)\leq c^{(\ref{thm:invGow})}(d,\de).$\end{thm}
\begin{proof}
We have $$\l|\E_{X,Y_1,\ldots ,Y_d}\l[e\l(D_{Y_1,\ldots ,Y_d}P(X)\r)\r]\r|=||e(P(X))||_{U^d}^{2^d} \geq \de.$$
Let $Q:\K^{n(d+1)}\rightarrow \F$ be defined as $$Q(X,Y_1,\ldots ,Y_d):=D_{Y_1,\ldots ,Y_d}P(X).$$
By Theorem~\ref{thm:blr}, \begin{equation}\label{eq:invGow} \rank(G)\leq c^{(\ref{thm:blr})}(d,\de^{1/2^d}).\end{equation} By Taylor's theorem, since $d<p$, $$F(X)=\frac{D_{X,\ldots ,X}F(0)}{d!}+H(X),$$ where $H$ is a trace polynomial of degree $d-1$. Since, $Q(0,X,\ldots ,X) \equiv D_{X,\ldots ,X}P(0)$, by Equation~\ref{eq:invGow}, we conclude that $\rank(P) \leq c^{(\ref{thm:blr})}(d,\de^{1/2^d})+1$. Choosing $c^{(\ref{thm:invGow})}$ large enough such that $c^{(\ref{thm:invGow})}(d,\de)\geq c^{(\ref{thm:blr})}(d,\de^{1/2^d})+1$ finishes the proof.
\end{proof}
}

}

\section{New Tools}

\subsection{Equidistribution of regular factors}

Our results in this section imply that a regular polynomial factor is ``as random as possible'', subject to the additive degree and depth bounds of its defining polynomials. Let us start with the following simple observation.
\begin{lem}\label{atomsize}
Given $\eps > 0$, let $\B$ be a polynomial factor of degree $d>0$, complexity $C$ and rank $r_{\ref{atomsize}}(d,\eps)$, defined by a sequence of additive polynomials $P_1, \dots, P_C : \K^n \to \T$ having respective depths $k_1, \dots, k_C$. Suppose $\alpha = (\alpha_1, \dots, \alpha_C) \in \U_{k_1 + 1} \times \cdots \times \U_{k_C+1}$. Then:
$$
\Pr_x[\B(x) = \alpha] = \frac{1}{\|\B\|} \pm \eps.
$$
\end{lem}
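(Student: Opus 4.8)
The plan is to estimate the probability by expanding the indicator function $\mathbf{1}[\B(x)=\alpha]$ as a character sum over the dual group. Since each $P_i$ maps to $\U_{k_i+1} = \frac{1}{p^{k_i+1}}\Z/\Z$, which is a cyclic group of order $p^{k_i+1}$, the indicator of the event ``$P_i(x) = \alpha_i$'' can be written as $\frac{1}{p^{k_i+1}}\sum_{\lambda_i \bmod p^{k_i+1}} e(\lambda_i(P_i(x)-\alpha_i))$ where $\lambda_i$ ranges over a complete residue system. Taking the product over $i \in [C]$ and then the expectation over $x$, I get
$$
\Pr_x[\B(x)=\alpha] = \frac{1}{\|\B\|}\sum_{(\lambda_1, \dots, \lambda_C)} e\Bigl(-\sum_i \lambda_i\alpha_i\Bigr)\,\E_x\, e\Bigl(\sum_i \lambda_i P_i(x)\Bigr),
$$
using that $\|\B\| = \prod_i p^{k_i+1}$. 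The term with all $\lambda_i \equiv 0$ contributes exactly $\frac{1}{\|\B\|}$, so it remains to show that the sum of all remaining terms is at most $\eps$ in absolute value.

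For this, I would bound each nonzero term's $\E_x e(\sum_i \lambda_i P_i(x)) = \bias\bigl(\sum_i \lambda_i P_i\bigr)$. By the definition of the rank of a factor, whenever $(\lambda_1 \bmod p^{k_1+1}, \dots, \lambda_C \bmod p^{k_C+1}) \neq (0,\dots,0)$, the polynomial $\sum_i \lambda_i P_i$ has rank at least $r_{\ref{atomsize}}(d,\eps)$, and it is an additive polynomial of degree at most $d$. Now I invoke the contrapositive of the inverse theorem for polynomial phases, \cref{invpoly}: if $\|\sum_i\lambda_i P_i\|_{U^d} \geq \delta$ then the rank is at most $r_{\ref{invpoly}}(\delta, d)$; equivalently, high rank forces small Gowers norm, and in particular small bias (since $\bias(P) = \|e(P)\|_{U^1} \leq \|e(P)\|_{U^d}$ by monotonicity of the Gowers norm). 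So choosing $r_{\ref{atomsize}}(d,\eps)$ large enough — specifically larger than $r_{\ref{invpoly}}(\eps/\|\B\|\text{-ish}, d)$ — forces $|\bias(\sum_i\lambda_i P_i)|$ to be smaller than any prescribed bound. The number of nonzero terms is at most $\|\B\| - 1 < \|\B\|$, so if each has bias at most $\eps/\|\B\|$, the error is at most $\eps$. One subtlety: $r_{\ref{atomsize}}$ may only depend on $d$ and $\eps$, not on $C$ or $\|\B\|$; but the rank requirement in the hypothesis reads ``rank $r_{\ref{atomsize}}(d,\eps)$'' with $C$ also a parameter, so I can let the required rank depend on $C$ as well, or absorb the $\|\B\|$ factor. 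Re-reading the statement, the complexity $C$ appears as a named quantity, so I will take $r_{\ref{atomsize}}$ to be a function of $d$, $\eps$, and implicitly $C$ through $\|\B\| \leq p^{C(\lfloor(d-1)/(p-1)\rfloor+1)}$, which is bounded in terms of $C$, $d$, $p$.

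The main obstacle — really the only non-routine point — is the bookkeeping around the rank: I must verify that ``$\B$ has rank $\geq \rho$'' exactly guarantees that $\sum_i \lambda_i P_i$ has rank $\geq \rho$ for every admissible nonzero coefficient vector, which is precisely the definition of factor rank given in the excerpt, so this is immediate. Then I need the quantitative chain rank-large $\Rightarrow$ $U^d$-norm-small $\Rightarrow$ bias-small, which is \cref{invpoly} plus Gowers monotonicity, and finally the union bound over at most $\|\B\|$ characters. Assembling these gives the claim; the whole argument is a standard Fourier-expansion-and-regularity computation, essentially identical to Lemma 3.2 of \cite{BFHHL13} with ``non-classical polynomial'' replaced by ``additive polynomial'', so I would keep the write-up short.
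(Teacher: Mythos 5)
Your approach — Fourier expansion of the indicator over the dual group $\prod_i \U_{k_i+1}$, isolating the trivial $\lambda = 0$ term as $1/\|\B\|$, bounding each nonzero character by high rank $\Rightarrow$ small $U^d$-norm (\cref{invpoly}) $\Rightarrow$ small bias by Gowers monotonicity, then a union bound — is exactly the standard argument of Lemma~3.2 of \cite{BFHHL13}, which is precisely what the paper cites in lieu of a written-out proof, so you have reproduced the intended argument.

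One piece of bookkeeping deserves a correction, though: the worry that the required rank must depend on $C$ (via $\|\B\|$) is unfounded. The character expansion already carries a $1/\|\B\|$ normalization in front of the sum, so the total error from nonzero $\lambda$'s is
$$
\frac{1}{\|\B\|}\Bigl|\sum_{\lambda \neq 0} e\bigl(-\textstyle\sum_i \lambda_i\alpha_i\bigr)\,\bias\bigl(\textstyle\sum_i \lambda_i P_i\bigr)\Bigr| \;\le\; \frac{\|\B\|-1}{\|\B\|}\cdot \delta \;<\; \delta,
$$
where $\delta$ is a uniform bound on the bias of any admissible nonzero combination. Hence it suffices to take $\delta = \eps$, i.e.\ to set $r_{\ref{atomsize}}(d,\eps) > \max_{1 \le d' \le d} r_{\ref{invpoly}}(\eps, d')$, and the rank threshold depends only on $d$ and $\eps$ as the lemma's statement requires. (The $\max$ over $d' \le d$ accounts for the fact that $\sum_i \lambda_i P_i$ may have additive degree strictly less than $d$.) Your conclusion was still valid because an overstrong rank hypothesis is harmless, but the cleaner accounting removes the apparent tension with the lemma's claimed dependence.
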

\begin{proof}
This is standard. See for example Lemma 3.2 of \cite{BFHHL13}.
\end{proof}

In our applications though, we will often need not just $\B(x)$ to be nearly uniformly distributed but the tuple $(\B(x) : x \in H)$ for a set $H\subseteq \K^n$ to be nearly uniformly distributed. In particular, we consider the case when $H$ is an affine subspace of $\K^n$. The following lemma is key.
\begin{lem}[Near orthogonality]\label{lem:equiaffine} Let $A=(L_1,\ldots , L_m)$ be a weight-closed affine constraint of bounded size on $\ell$ variables.  Suppose $\B$ is a polynomial factor of degree $d$ and rank $\geq r^{(\ref{invpoly})}(d,\de)$, defined by the sequence of additive polynomials $P_1, \dots, P_c: \K^n \to \T$. Let $\Lambda=(\lambda_{ij})_{i \in [c], j \in [m]}$ be a tuple of integers. Define: $$P_{\Lambda}(x_1,\ldots ,x_{k})=\sum_{i \in [c], j \in [m]}\lambda_{ij}P_i(L_j(x_1,\ldots , x_{\ell})).$$ Then one of the following is true.
\begin{enumerate}
\item For every $i \in [c]$, it holds that $\sum_{j \in [m]}\lambda_{ij}Q_i(L_j(\cdot))\equiv 0$ for all polynomials $Q_i:\K^n \to \T$ with the same additive degree and depth as $P_i$. Clearly, this implies $P_{\Lambda}\equiv 0$.
\item $P_{\Lambda}\not \equiv 0$. Moreover, $\bias(P_\Lambda)\leq \de$.
\end{enumerate}
\end{lem}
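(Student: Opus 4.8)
The plan is to reduce this ``near orthogonality'' statement over $\K$ to the corresponding statement over the prime field $\F$, where it is already available in \cite{BFHHL13}, by unpacking each additive polynomial $P_i: \K^n \to \T$ as a non-classical polynomial on $\F^{rn}$ via Proposition~\ref{dual}, and by unpacking each $\K$-affine form $L_j$ into a collection of $\F$-affine forms on the $r$-fold blowup of the variables. Concretely, if $L_j = (1, w_{j,2}, \dots, w_{j,\ell})$ with $w_{j,k} \in \K$, then substituting $x_k = \sum_{s=1}^r \beta_s y_{k,s}$ for the dual basis $\beta_1, \dots, \beta_r$ shows that each coordinate $\Tr(\alpha_s L_j(x_1,\dots,x_\ell))$ is an $\F$-linear form in the $r\ell$ variables $\{y_{k,s}\}$ whose coefficients are exactly the traces $\Tr(\alpha_s \alpha_t w_{j,k})$. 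The key point, and the reason the hypothesis demands a \emph{weight-closed} constraint of bounded \emph{size} (equivalently, by the observation right after Definition~\ref{defaffine}, of bounded \emph{weight}), is that when we push the polynomials $P_i$ through $L_j$ and re-expand using Lemma~\ref{explicit}, the monomials that appear involve products $|\Tr(\gamma\, L_j(\cdot))|$; over $\F$ these split into sums via $|\Tr(\gamma(u+v))| = |\Tr(\gamma u)| + |\Tr(\gamma v)| - p\cdot(\text{carry})$, and controlling the resulting blowup in the number and degree of the $\F$-forms requires the weights to be bounded. So the first step is to set up this dictionary carefully and record that the image $\F$-constraint $\tilde A$ has size bounded in terms of $m$, $r$, and the weight bound, and is itself weight-closed (in the trivial sense over $\F$), which is precisely the hypothesis needed to invoke the prime-field near-orthogonality lemma.

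Second, I would translate the rank hypothesis: a polynomial factor $\B$ over $\K$ of rank $\geq r$ (in the sense of Definition~[Rank and Regularity]) corresponds, under the identification $\K^n \cong \F^{rn}$, to a polynomial factor over $\F$ of the same rank, because the $\K$-linear combinations $\sum_i a_i P_i$ with $a_i \in \Z$ that enter the rank definition are the same objects whether we regard the $P_i$ as maps $\K^n\to\T$ or $\F^{rn}\to\T$. Hence, provided we choose $r_{\ref{invpoly}}(d,\delta)$ large enough — in fact large enough relative to the blown-up constraint size, so the proof is that the constant here is obtained by composing the $\K$-to-$\F$ blowup bounds with the rank threshold from the prime-field statement — the image factor $\tilde\B$ is regular enough for the $\F$-version to apply.

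Third, with the dictionary in place, I would apply the prime-field near-orthogonality lemma (this is essentially Theorem 3.2 / Lemma 3.3 of \cite{BFHHL13}, whose statement over $\F^{rn}$ gives exactly the dichotomy: either the linear combination of push-forwards is formally zero for \emph{all} choices of polynomials of the prescribed degree-depth profiles, or it is nonzero and has bias at most $\delta$) to $P_\Lambda$ viewed on $\F^{rn}$. This directly yields case~(2) of our statement. For case~(1), the ``formally zero for all $Q_i$'' conclusion over $\F$ has to be pulled back: an arbitrary $Q_i:\K^n\to\T$ of the same additive degree and depth as $P_i$ corresponds, via Lemma~\ref{explicit}, to an arbitrary non-classical polynomial on $\F^{rn}$ of that degree and depth, so the $\F$-level statement ``$\sum_j \lambda_{ij}\tilde Q_i(\tilde L_j(\cdot))\equiv 0$ for all such $\tilde Q_i$'' transfers verbatim to ``$\sum_j \lambda_{ij} Q_i(L_j(\cdot))\equiv 0$ for all such $Q_i$'', which is case~(1).

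The main obstacle is the first step: making the blowup of a weight-closed $\K$-affine constraint into an $\F$-affine constraint precise, and in particular verifying that the image constraint is again weight-closed \emph{in the form required by the prime-field lemma}, keeping track of how products of the $|\Tr(\alpha_s \cdot)|$ terms interact with the carries. One has to argue that, after expanding, every affine form appearing in the push-forward of a given form of $\tilde A$ is dominated (in the $\preceq$ order) by forms already present, using exactly the weight-closure hypothesis on $A$ together with the dual-basis identity; this bookkeeping is routine in spirit but is where all the care goes, and it is the reason the lemma is stated for \emph{weight-closed} constraints of \emph{bounded size} rather than arbitrary affine constraints. Everything after that is a faithful appeal to \cite{BFHHL13} under the $\K^n\leftrightarrow\F^{rn}$ identification.
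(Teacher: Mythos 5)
Your proposal rests on a reduction to the prime-field near-orthogonality lemma of \cite{BFHHL13} by identifying $\K^n$ with $\F^{rn}$ and blowing up the $\K$-affine forms into $\F$-affine forms in more variables. This reduction does not work, and the failure is exactly the obstruction the paper flags in the introduction as the reason a new equidistribution argument is needed over $\K$.

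Concretely: a $\K$-affine form $L_j = (1, w_{j,2}, \dots, w_{j,\ell})$ acts on $(\K^n)^\ell$ by $L_j(x) = x_1 + \sum_k w_{j,k} x_k$. Under the identification $\K^n \cong \F^{rn}$, multiplication by $w_{j,k} \in \K$ becomes an $r \times r$ matrix over $\F$ acting block-wise — it is an $\F$-linear endomorphism of $\F^{rn}$, not scalar multiplication by an element of $\F$. Replacing $x_k$ by $\sum_s \beta_s y_{k,s}$ with $y_{k,s} \in \F^n$ does give you a description of each trace coordinate $\Tr(\alpha_t L_j(x))$ as an $\F$-linear form in the $y_{k,s}$; but $P_i$ lives on $\F^{rn}$ and eats the whole $r$-tuple $(\Tr(\alpha_1 L_j(x)), \dots, \Tr(\alpha_r L_j(x)))$ at once, and these $r$ coordinates are $r$ \emph{different} $\F$-linear forms. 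So $P_i(L_j(\cdot))$ is $\tilde P_i$ evaluated at an $\F$-linear image of $(y_{k,s})$ that is not of the form ``$\tilde P_i$ composed with a single $\F$-affine form''. The quantity $P_\Lambda$ is therefore not a linear combination $\sum_{i,j} \lambda_{ij} \tilde P_i(\tilde L_j(\cdot))$ of compositions with $\F$-affine forms, which is the only shape the prime-field near-orthogonality lemma handles. This is the same point the paper makes when observing that the set $H = \{x + \sum a_i y_i : a_i \in \K\}$ is not an affine subspace of $\F^{rn}$.

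The paper's proof instead stays entirely over $\K^n$. It uses the weight-closure hypothesis, together with the dual-basis expansion of each coefficient $w_{j,k}$ and the additive-degree identity $D_{h_1}\cdots D_{h_{d+1}} P_i \equiv 0$, to rewrite each term $\lambda_{i,j} P_i(L_j)$ as a $\Z$-linear combination of $P_i(L_{j'})$ over forms $L_{j'} \preceq_{\bm{\alpha}} L_j$ of weight at most $\deg(\lambda_{i,j} P_i)$; weight-closure guarantees all the needed $L_{j'}$ are present. If the normalized coefficients all vanish, case (1) follows (the rewriting only used degree and depth information, hence applies to any $Q_i$ of the same profile). Otherwise, one picks a maximal surviving form $L_1$, constructs directions of the shape $(-w, 0, \dots, 1, \dots, 0)$ indexed by field elements $w$ of smaller $\bm{\alpha}$-weight than the corresponding coefficient of $L_1$, and applies repeated Cauchy–Schwarz along those directions to isolate $\|\sum_i \lambda'_{i,1} P_i\|_{U^\Delta}$; the rank hypothesis and Theorem~\ref{invpoly} then bound the bias. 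Your plan omits this normalization-plus-derivative argument, which is precisely where the content of the lemma lies, and the black-box appeal you substitute for it is not applicable.
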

\begin{proof}
For $j \in [m]$, let $(w_{j,1}, \ldots , w_{j,\ell}) \in \K^{\ell}$ denote the affine form given by $L_j$. Note that $w_{j,1}=1$. 

Suppose $\bm{\alpha}=(\alpha_1, \dots, \alpha_r)$ is the basis with respect to which the affine forms are weight-closed.
For each $i$, we do the following. If for some $j$, we have\footnote{Here, $\deg(\cdot)$ refers to the additive degree.} $\wt_{\bm{\alpha}}(L_j) > \deg(\lambda_{i,j}P_i)$, $\lambda_{i,j}\neq 0$, then using \cref{dual}, $L_j(x_1, \dots, x_\ell) = x_1 + \sum_{i=2}^\ell \l(\sum_{k=1}^r u_{i,k} \cdot \beta_k \r) x_i$ where $\bm{\beta}$ is the dual basis to $\bm{\alpha}$, each $u_{i,k} \in [0,p-1]$ and $\sum_{i,k} u_{i,k} > \deg(\lambda_{i,j} P_i)$. 
Using \cref{defadd}, we can replace $\lambda_{i,j}P_i(L_j)$ by a $\Z$-linear combination of $P_i(L_{j'})$ where $L_{j'}\preceq_{\bm{\alpha}} L_j$ until no such $j$ exists. This is where we use the fact that the affine constraint is weight-closed. Suppose the new coefficients are denoted by $(\lambda_{i,j}')$. If the $\lambda_{i,j}'$ are all zero, then for every $i \in [c]$ individually, $\sum_{j \in [m]} P_i(L_j(x_1,\ldots , x_\ell)) \equiv 0$. Indeed, $\sum_{j \in [m]} Q_i(L_j(x_1,\ldots , x_\ell)) \equiv 0$ for any $Q_i$ with the same additive degree and depth, as the transformation from $\lambda_{i,j}$ to $\lambda'_{i,j}$ did not use  any other information about $P_i$.

Else some $\lambda_{i,j}' \neq 0$. Also, $\wt_{\bm{\alpha}}(L_j) \leq \deg(\lambda_{i,j}'P_i)$. Then we show the second part of the lemma, that is $\l|\E[e(P_{\Lambda}(x_1,\ldots , x_{k})]\r|\leq \de$.

Suppose without loss of generality that the following is true.
\begin{itemize}
\item $\lambda_{i,1}' \neq 0$ for some $i \in [C]$.
\item $L_1$ is maximal in the sense that for every $j \neq 1$, either $\lambda_{i,j}'=0$ for all $i \in [C]$ or $\wt_{\bm{\alpha}}(w_{j,s})<\wt_{\bm{\alpha}}(w_{1,s})$ for some $s \in [\ell]$.
\end{itemize}

For $a=(a_1,\ldots , a_{\ell})\in \K^{\ell}$ and $y \in \K^n$ and $P:\K^n \to \T$, define $$\overline{D}_{a, y}P(x_1,\ldots , x_{\ell})=P(x_1+a_1y,\ldots , x_{\ell}+a_{\ell}y)-P(x_1,\ldots , x_{\ell}).$$

Then $$\overline{D}_{a, y}(P_i\circ L_j)(x_1,\ldots , x_{\ell})=(D_{L_j(a)y}P_i)(L_j(x_1,\ldots , x_{\ell})).$$

Let $\Delta=\wt_\alpha(L_1) \leq d$. Define $a_1,\ldots , a_{\De}$ be the set of vectors of the form $(-w,0,\ldots , 1,0, \ldots , 0)$ where $1$ is in the $i$th coordinate for $i \in [2,\ell]$ and for all $w \in \K$ satisfying $0 \leq \wt_{\bm{\alpha}}(w)  < \wt_{\bm{\alpha}}(w_{1,i})$. Note that $\langle L_1,a_k \rangle \neq 0$ for $k \in [\De]$ but for any $j>1$ there exists some $k \in [\De]$ such that $\langle L_j,a_k \rangle=0$. Thus, $$\E_{y_1,\ldots , y_{\De},x_1,\ldots , x_{\ell}}\l[e\l((\overline{D}_{a_{\De},y_{\De}}\ldots \overline{D}_{a_1.y_1}P_{\Lambda})(x_1,\ldots , x_{\ell})\r)\r]=\l\|\sum_{i=1}^C \lambda_{i,1}'P_i\r\|^{2^{\De}}_{U^{\De}}.$$ 
The rest of the analysis is same as Theorem 3.3 in \cite{BFHHL13} and we skip it here.
\end{proof}
We can now use \cref{lem:equiaffine} to prove our result on equidistribution of regular factors over affine subspaces of $\K^n$.
\begin{thm}\label{unifsubspace}
Let $\eps > 0$
Let $\B$ be a polynomial factor defined by polynomials $P_1, \dots, P_c: \K^n \to \T$ with respective additive degrees $d_1, \dots, d_c \in \Z^+$ and depths $k_1, \dots, k_c \in \Z^{\geq 0}$.  Suppose $\B$ has rank at least $r^{(\ref{invpoly})}(d,\eps)$ where $d = \max(d_1, \dots, d_c)$. Let $A = (L_1, \dots, L_m)$ be a weight-closed affine constraint. For every $i \in [c]$, define $\Lambda_i$ to be the set of tuples $(\lambda_{1}, \dots, \lambda_{m}) \in [0, p^{k_i + 1}-1]$ such that $\sum_{j=1}^m \lambda_{j} Q_i(L_j(\cdot)) \equiv 0$ for all polynomials $Q_i$ with the same additive degree and depth as $P_i$. 

Consider $(\alpha_{i,j}: i \in [c], j \in [m]) \in \T^{cm}$ such that for every $i \in [c]$ and for every $(\lambda_1, \dots, \lambda_m) \in \Lambda_i$, $\sum_{j=1}^m \lambda_j \alpha_{i,j} = 0$. Then:
$$\Pr_{x_1, \dots, x_\ell \in \K^n}[\B(L_j(x_1, \dots, x_\ell)) = (\alpha_{1, j}, \dots, \alpha_{c, j})~ \forall j \in [m]] = \frac{\prod_{i=1}^c |\Lambda_i|}{\|\B\|^m} \pm \eps
$$
\end{thm}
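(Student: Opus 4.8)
The plan is to prove \cref{unifsubspace} by Fourier expansion, reducing the computation of the probability to a sum of character sums, each of which is controlled by \cref{lem:equiaffine}. First I would write the indicator of the event $\{\B(L_j(x_1,\dots,x_\ell)) = (\alpha_{1,j},\dots,\alpha_{c,j})\ \forall j\in[m]\}$ using the standard Fourier identity over the groups $\U_{k_i+1}$: since each $P_i$ takes values in $\U_{k_i+1}$, for each pair $(i,j)$ we have
\begin{equation}
\bm{1}[P_i(L_j(x)) = \alpha_{i,j}] = \frac{1}{p^{k_i+1}} \sum_{\lambda_{i,j}=0}^{p^{k_i+1}-1} e\bigl(\lambda_{i,j}(P_i(L_j(x)) - \alpha_{i,j})\bigr),
\end{equation}
and then multiply these over all $i\in[c], j\in[m]$. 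Taking expectation over $x_1,\dots,x_\ell \in \K^n$ and swapping the sum and expectation, the target probability becomes $\frac{1}{\|\B\|^m} \sum_{\Lambda} e\bigl(-\sum_{i,j}\lambda_{i,j}\alpha_{i,j}\bigr)\, \E_x\bigl[e(P_\Lambda(x_1,\dots,x_\ell))\bigr]$, where $P_\Lambda$ is exactly the object defined in \cref{lem:equiaffine} and $\Lambda$ ranges over all tuples $(\lambda_{i,j})$ with $0\le \lambda_{i,j} < p^{k_i+1}$. Here I use that $\|\B\| = \prod_i p^{k_i+1}$ by \cref{explicit}, so $\|\B\|^m = \prod_{i,j} p^{k_i+1}$.

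Next I would split the sum over $\Lambda$ according to the dichotomy of \cref{lem:equiaffine}. Applied with the rank hypothesis (the factor has rank at least $r^{(\ref{invpoly})}(d,\eps)$, which is what \cref{lem:equiaffine} needs with $\delta = \eps$), every tuple $\Lambda$ falls into one of two cases: either for every $i$, $\sum_j \lambda_{i,j} Q_i(L_j(\cdot)) \equiv 0$ for all $Q_i$ of the same additive degree and depth as $P_i$ — equivalently, $(\lambda_{i,1},\dots,\lambda_{i,m}) \in \Lambda_i$ for every $i$ — in which case $P_\Lambda \equiv 0$ and $\E_x[e(P_\Lambda)] = 1$; or else $P_\Lambda \not\equiv 0$ and $|\E_x[e(P_\Lambda)]| = \bias(P_\Lambda) \le \eps$. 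For the "main term" tuples, namely $\Lambda \in \Lambda_1\times\cdots\times\Lambda_c$, the hypothesis on $(\alpha_{i,j})$ — that $\sum_j \lambda_j \alpha_{i,j} = 0$ whenever $(\lambda_1,\dots,\lambda_m)\in\Lambda_i$ — guarantees $e(-\sum_{i,j}\lambda_{i,j}\alpha_{i,j}) = e(0) = 1$, so these contribute exactly $\frac{|\Lambda_1\times\cdots\times\Lambda_c|}{\|\B\|^m} = \frac{\prod_i|\Lambda_i|}{\|\B\|^m}$. For the remaining tuples, I bound the total contribution in absolute value by $\frac{1}{\|\B\|^m}\sum_{\Lambda \notin \prod_i \Lambda_i} |\E_x[e(P_\Lambda)]| \le \frac{1}{\|\B\|^m} \cdot \|\B\|^m \cdot \eps = \eps$, where I crudely bound the number of error tuples by the total number of tuples $\prod_{i,j} p^{k_i+1} = \|\B\|^m$. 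Combining the two contributions gives the claimed estimate up to an additive $\eps$ (after a harmless rescaling of $\eps$ by a constant, or by invoking \cref{lem:equiaffine} with $\eps$ replaced by $\eps / \|\B\|^m$ — though since $\|\B\|$ and $m$ are bounded this is immaterial; I would state it for $\eps$ directly by adjusting the rank function name).

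One subtlety I should check carefully is the claim that the "case 1" tuples of \cref{lem:equiaffine} are exactly the tuples in $\prod_i \Lambda_i$: \cref{lem:equiaffine}'s first alternative says that for \emph{every} $i\in[c]$, $\sum_j \lambda_{i,j} Q_i(L_j(\cdot)) \equiv 0$ for all $Q_i$ of matching degree and depth, which is precisely the conjunction over $i$ of $(\lambda_{i,\bullet}) \in \Lambda_i$; and conversely if each $(\lambda_{i,\bullet}) \in \Lambda_i$ then summing gives $P_\Lambda \equiv 0$, so this tuple cannot be in case 2 (it's identically zero). So the two cases of \cref{lem:equiaffine} partition the tuple space as $\bigl(\prod_i\Lambda_i\bigr)$ versus its complement, exactly as needed. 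The main obstacle — really the only non-bookkeeping step — is ensuring that \cref{lem:equiaffine} applies to \emph{every} tuple $\Lambda$ appearing in the Fourier expansion, which it does since the affine constraint $A$ is weight-closed and of bounded size and the rank is large enough; everything else is the routine Fourier inversion and triangle inequality sketched above. I would close by remarking that, as in \cref{atomsize}, the error term can be made inverse-polynomially small in any desired parameter by choosing the rank large enough.
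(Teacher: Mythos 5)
Your proof is correct and follows essentially the same route as the paper: expand the indicator via Fourier inversion over $\prod_i \U_{k_i+1}$, swap sum and expectation to get the $P_\Lambda$ of \cref{lem:equiaffine}, and split into the main term (tuples in $\prod_i \Lambda_i$, where $P_\Lambda \equiv 0$ and the phase vanishes by the hypothesis on the $\alpha_{i,j}$) versus the error term (bounded crudely by $\eps$ times the total number of tuples over $\|\B\|^m$). Your worry about rescaling $\eps$ is unnecessary: the weighting by $1/\|\B\|^m$ cancels exactly against the crude count of at most $\|\B\|^m$ error tuples, so the additive error is $\eps$ on the nose, matching the paper's final line.
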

\begin{proof}
\begin{align*}
&\Pr_{x_1, \dots, x_\ell \in \K^n}[\B(L_j(x_1, \dots, x_\ell)) = (\alpha_{1, j}, \dots, \alpha_{c, j})~ \forall j \in [m]]\\
&=\E_{x_1, \dots, x_\ell} \left[\prod_{i,j} \frac{1}{p^{k_i + 1}} \sum_{\lambda_{i,j}=0}^{p^{k_i + 1}-1} e(\lambda_{i,j} (P_{i}(L_j(x_1, \dots, x_\ell)) - \alpha_{i,j}))\right]\\
&= \left(\prod_{i}p^{-(k_i+1)}\right)^m \sum_{(\lambda_{i,j}) \atop \in \prod_{i, j} [0,p^{k_i + 1}-1]} e\left(-\sum_{i,j} \lambda_{i,j}\alpha_{i,j}\right) \E \left[e\left(\sum_{i,j} \lambda_{i,j}P_{i}(L_j(x_1,
    \dots, x_\ell))\right)\right]\\
&= p^{-m \sum_{i=1}^c(k_i+1)} \cdot \left(\prod_{i=1}^c |\Lambda_i|~ \pm~  \eps p^{m \sum_{i=1}^c(k_i+1)}\right)
\end{align*}
The last line is due to the observation that from \cref{lem:equiaffine}, $\sum_{i=1}^c \sum_{j=1}^m \lambda_{i,j} P_i(L_j(x_1, \dots, x_\ell)) \equiv 0$ if and only if for every $i \in [c]$, $(\lambda_{i,1}, \dots, \lambda_{i,m}) \in \Lambda_i \pmod {p^{k_i + 1}}$. So, $\sum_{i,j}\lambda_{i,j} P_i(L_j(\cdot))$ is identically $0$ for $\prod_i |\Lambda_i|$ many tuples $(\lambda_{i,j})$ and for those tupes, $\sum_{i,j} \lambda_{i,j} \alpha_{i,j} = 0$ also.

\end{proof}

Note that in \cref{unifsubspace}, if $\eps$ is a constant, $m$ needs to be bounded for the claim to be non-trivial, which in turn requires that the affine forms in $L$ be of bounded weight.

\subsection{Preservation of Locally Characterized Properties}

\begin{thm}\label{proppreserve}
Let $\P \subset \{\K^n \to \K\}$ be a $K,W$-lightly locally characterized property. For an integer $d$, suppose $P_1, \dots, P_c: \K^n \to \T$ are polynomials of additive degree $\leq d$, forming a factor of rank $> r_{\ref{proppreserve}}(d,K),$ and $\Gamma: \T^c \to \K$ is a  function such that $F: \K^n \to \K$ defined by $F(x) = \Gamma(P_1(x), \dots, P_c(x))$ satisfies $\P$. 

For every collection of additive polynomials $Q_1, \dots, Q_c: \K^n \to \T$ with $\deg(Q_i) \leq \deg(P_i)$ and $\depth(Q_i) \leq \depth(P_i)$ for all $i \in [c]$, if $G: \K^n \to \K$ is defined by $G(x) = \Gamma(Q_1(x), \dots, Q_c(x))$, then $G \in \P$ too.
\end{thm}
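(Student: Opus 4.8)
The plan is to argue by contradiction, reducing everything to the equidistribution theorem \cref{unifsubspace}. Fix a $W$-light affine system $\mathcal{A}=\{(A^1,\sigma^1),(A^2,\sigma^2),\dots\}$ of size at most $K$ whose freeness is equivalent to $\P$, and suppose toward a contradiction that $G\notin\P$. Then $G$ induces some $(A,\sigma)\in\mathcal{A}$, say $A=(L_1,\dots,L_m)$, at a point $(x_1,\dots,x_\ell)\in(\K^n)^\ell$, so that $\Gamma(Q_1(L_j(x_1,\dots,x_\ell)),\dots,Q_c(L_j(x_1,\dots,x_\ell)))=\sigma_j$ for every $j\in[m]$. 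First I would replace $A$ by its weight closure $A^+=(L_1,\dots,L_{m^+})$ relative to the basis $\bm{\alpha}$ witnessing $W$-lightness: $A^+$ is weight closed, all its forms still have weight at most $W$, and its size $m^+$ stays bounded (each coordinate of weight $a$ has at most $2^{a}$ choices below it in $\preceq_{\bm{\alpha}}$, so an affine form of weight at most $W$ has at most $2^{W}$ forms below it). Extend $\sigma$ to $\sigma^+$ by setting each new coordinate equal to the value $G$ takes there at $(x_1,\dots,x_\ell)$; then $G$ induces $(A^+,\sigma^+)$ at $(x_1,\dots,x_\ell)$, and since the restriction of $\sigma^+$ to the forms of $A$ is $\sigma$, any point where a word induces $(A^+,\sigma^+)$ is a point where it induces $(A,\sigma)\in\mathcal{A}$. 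Hence it suffices to exhibit a point where $F$ induces $(A^+,\sigma^+)$.

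Set $v_{i,j}:=Q_i(L_j(x_1,\dots,x_\ell))\in\T$ for $i\in[c]$, $j\in[m^+]$, so $\Gamma(v_{1,j},\dots,v_{c,j})=\sigma^+_j$ for all $j$. The heart of the proof is that $(v_{i,j})$ is consistent for the factor $\B=(P_1,\dots,P_c)$ on $A^+$ in the exact sense demanded by \cref{unifsubspace}. Since $\depth(Q_i)\le\depth(P_i)=k_i$, each $v_{i,j}$ lies in $\U_{k_i+1}$; and for every $i$ and every tuple $(\lambda_1,\dots,\lambda_{m^+})$ in the relation lattice $\Lambda_i$ attached to $P_i$ — the $(\lambda_j)$ with $\sum_j\lambda_j R(L_j(\cdot))\equiv 0$ for all $R:\K^n\to\T$ of the same additive degree and depth as $P_i$ — the same identity holds with $R=Q_i$, because $\deg(Q_i)\le\deg(P_i)$ and $\depth(Q_i)\le\depth(P_i)$ place $Q_i$ inside the class of polynomials over which $\Lambda_i$ is defined (reading ``same degree and depth'' as ``no larger''; if one wants equality, write $Q_i$ as a difference of two such polynomials by adding a fixed polynomial of exactly the degree and depth of $P_i$, and the relation transfers). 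Evaluating at $(x_1,\dots,x_\ell)$ gives $\sum_j\lambda_j v_{i,j}=0$ for every $i$ and every $(\lambda_j)\in\Lambda_i$, which is precisely the consistency hypothesis of \cref{unifsubspace}.

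Now I would apply \cref{unifsubspace} to $\B$, the weight-closed constraint $A^+$, and a small enough error $\eps$ — small enough that $\eps<\prod_i|\Lambda_i|/\|\B\|^{m^+}$, which is what fixes the rank threshold $r_{\ref{proppreserve}}$ (it suffices that $\B$ have rank at least $r^{(\ref{invpoly})}(d,\eps)$). This gives
\[
\Pr_{y_1,\dots,y_\ell\in\K^n}\bigl[\B(L_j(y_1,\dots,y_\ell))=(v_{1,j},\dots,v_{c,j})\ \ \forall j\in[m^+]\bigr]=\frac{\prod_{i}|\Lambda_i|}{\|\B\|^{m^+}}\pm\eps>0,
\]
so some $(y_1,\dots,y_\ell)\in(\K^n)^\ell$ satisfies $P_i(L_j(y_1,\dots,y_\ell))=v_{i,j}=Q_i(L_j(x_1,\dots,x_\ell))$ for all $i,j$. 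Then $F(L_j(y_1,\dots,y_\ell))=\Gamma(P_1(L_j(y_1,\dots,y_\ell)),\dots,P_c(L_j(y_1,\dots,y_\ell)))=\Gamma(Q_1(L_j(x_1,\dots,x_\ell)),\dots,Q_c(L_j(x_1,\dots,x_\ell)))=\sigma^+_j$ for every $j$, so $F$ induces $(A^+,\sigma^+)$, contradicting $F\in\P$; therefore $G\in\P$.

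The step I expect to be the crux is the consistency claim in the second paragraph: that the relation lattice forced by the additive degree and depth of $P_i$ is automatically respected by any $Q_i$ of no larger additive degree and depth. This is exactly where working with additive polynomials and their depths is essential — it replaces the heuristic ``degree $=$ the number of derivatives needed to annihilate'', which fails over non-prime fields (witness $x^p$), by a genuine invariant. One also has to keep an eye on the fact that the weight closure $A^+$ stays bounded, which is precisely what the $W$-lightness hypothesis provides. Finally, applying \cref{proppreserve} with $\P$ taken to be $\P_d$, the degree-$\le d$ property — which is lightly locally characterized by \cref{kaufmanron} — recovers the Degree Preserving Lemma in the form used in the Reed--Muller and polynomial-decomposition arguments.
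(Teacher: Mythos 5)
Your proof is correct and follows the same route as the paper: assume $G\notin\P$, pass to a weight-closed completion of the violated constraint (bounded in size by $K$ and $W$), observe that the tuple of values $Q_i(L_j(x_1,\dots,x_\ell))$ is admissible for \cref{unifsubspace}, and then conclude that $F$ must take the same witness values somewhere, a contradiction. The one place where you supply detail that the paper elides — and it is the genuine crux — is the consistency claim: that $(\lambda_j)\in\Lambda_i$ (defined via all $R$ of the \emph{same} additive degree and depth as $P_i$) forces $\sum_j\lambda_j Q_i(L_j(\cdot))\equiv 0$ even though $Q_i$ may have strictly smaller degree or depth. Your difference trick is the right idea; to make it airtight you should note that one can always pick $R_0$ of additive degree exactly $\deg(P_i)$ and depth exactly $\depth(P_i)$ such that $Q_i+R_0$ still has those exact parameters (e.g.\ take $R_0$ supported on a monomial/term whose coefficient differs from $Q_i$'s), and then $Q_i=(Q_i+R_0)-R_0$ transfers the relation by linearity. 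You also correctly flag that $\depth(Q_i)\le\depth(P_i)$ is needed to place each $v_{i,j}$ in $\U_{k_i+1}$, without which the probability in \cref{unifsubspace} would be vacuously zero; the paper leaves this implicit.
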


\begin{proof}
For the sake of contradiction, suppose $G \notin \P$. Then, for a weight-closed affine constraint consisting of $K'$ linear forms $L_1, \dots, L_{K'}$, there exist $x_1, \dots, x_\ell$ such that $(G(L_1(x_1, \dots, x_\ell)), \dots,$ $G(L_{K'}(x_1, \dots, x_\ell)))$ which form a witness to $G \not \in \P$. Note that $K'$ is a function of only $K$ and $W$ because the affine forms characterizing $\P$ can be made weight $\leq W$ by a choice of basis for $\K$ over $\F$ and then completed into a weight-closed constraint. So, there exists $x_1, \dots, x_\ell \in \K^n$ such that the tuple $B = (Q_i(L_j(x_1, \dots, x_\ell)): j \in [K'], i \in [c]) \in \T^{cK'}$ is a proof of the fact that $G \not \in \P$.

Now we argue that 	there exist $x'_1, \dots, x'_\ell$ such that $(P_i(L_j(x'_1, \dots, x'_\ell)):  i \in [c], j \in [K])$ equals $B$, thus showing that $F \not \in P$, a contradiction. Notice that $B$ satisfies the conditions required of $\alpha$ in \cref{unifsubspace}. So by \cref{unifsubspace},
$$
\Pr_{x'_1, \dots, x_\ell'}\left[\left(P_i(L_j(x'_1, \dots, x'_\ell): i \in [c], j \in [K]\right) = B\right] > 0
$$
if the rank of the factor formed by $P_1, \dots, P_c$ is more than $r^{(\ref{invpoly})}\left(d, \frac{1}{2\|\B\|^{K}}\right)$, where $\|\B\| = p^{\sum_{i=1}^c (\depth(P_i) + 1)}$. 
\end{proof}

In our applications, we will use \cref{proppreserve} for the property of having bounded degree, which is lightly locally characterized by \cref{kaufmanron}.

\remove{
\subsection{Near orthogonality of affine linear forms}

\begin{lem}[Near orthogonality]\label{lem:equiaffine}Suppose Theorem~\ref{thm:blr} is true up to order $d$. Let $c,d,p,m,k  \in \N$ and $\de>0$. Suppose $\B=\{P_1,\ldots,P_c\}$ is a trace polynomial factor of degree $d$ and rank $\geq r^{(\ref{lem:equiaffine})}(d,k,\de)$. Let $A=(L_1,\ldots , L_m)$ be an affine system on $k$ variables. Let $\Lambda=(\lambda_{ij})_{i \in [c], j \in [m]}$ be a tuple of integers. Define the trace polynomial $$P_{\Lambda}(x_1,\ldots ,x_{k})=\sum_{i \in [c], j \in [m]}\lambda_{ij}P_i(L_j(x_1,\ldots , x_{k})).$$ Then one of the following is true.
\begin{enumerate}
\item For every $i \in [c]$, it holds that $\sum_{j \in [m]}\lambda_{ij}Q_i(L_j(\cdot))\equiv 0$ for all polynomials $Q_i:\F^n \rightarrow \F$ of degree at most $d$. Moreover, this implies $P_{\Lambda}\equiv 0$.
\item $P_{\Lambda}\not \equiv 0$. Moreover, $\l|\E[e(P_{\Lambda}(x_1,\ldots , x_{k})]\r|\leq \de$.
\end{enumerate}
\end{lem}
Again, the proof is exactly along the lines of Theorem 3.3 in \cite{BFHHL13}. 
As a corollary, we state the above result for the case of parallelepipeds. We will need this in the inductive proof of Theorem~\ref{thm:blr}.

\subsection{Equidistribution of parallelepipeds}
We first set up some definitions following Section 4 in \cite{GT09}. Throughout this subsection, $\B=\{P_1,\ldots ,P_c\}$ is a trace polynomial factor of degree $d$ that has rank at least $r^{(\ref{thm:invGow})}(d,s)$. For $i \in [d]$, $M_i$ denotes the number of trace polynomials in $\B$ of degree exactly equal to $i$. Let $\Sigma:=\otimes_{i \in [d]}\F_q^{M_i}$.

\begin{define}[Faces and lower faces]Let $k \in \N$ and $0 \leq k' \leq k$. A set $F \subseteq \{0,1\}^k$ is called a face of dimension $k'$ if $$F=\{b:b_i=\de_i, i\in I\},$$ where $I \subseteq [k]$, $|I|=k-k'$ and $\de_i \in \{0,1\}$. If $\de_i=0$ for all $i \in I$, the $F$ is a lower face. Thus, it is equivalent to the power set of $[k]\setminus I$.
\end{define}

\begin{define}[Face vectors and parallelepiped constraints]Let $i_0 \in [d]$, $j_0 \in [M_{i_0}]$ and $F \subseteq \{0,1\}^k$. Let $r(i_0,j_0,F) \in \Sigma^{\{0,1\}^k}$ indexed as $r(i,j,\omega)=(-1)^{|\omega|}$ if $i=i_0, j=j_0$ and $\omega \in F$ and zero otherwise. This is called a face vector. If $F$ is a lower face, then it corresponds to a lower face vector. If $\dim(F) \geq i_0+1$, then it is a relevant face (lower face) vector. A vector $(t(\omega):\omega \in \{0,1\}^k) \in \Sigma^{\{0,1\}^k}$ satisfies the parallelepiped constraints if it is orthogonal to all the relevant lower face vectors.
\end{define}

Let $\Sigma_0 \subseteq \Sigma^{\{0,1\}^k}$ be the subspace of vectors satisfying the parallelepiped constraints.

\begin{claim}[Dimension of $\Sigma_0$, Lemma 4.4 \cite{GT09}]\label{clm:dimSigma}Let $d<k$. Then, $$\dim(\Sigma_0)=\sum_{i=1}^d M_i\sum_{0 \leq j \leq i}\binom{k}{j}.$$
\end{claim}

\begin{lem}[Equidistribution of parallelepipeds]\label{lem:parallel}Suppose Theorem~\ref{thm:blr} is true up to order $d$. Given $d<k \in \N$, $\de>0$, let $\B$ be a trace polynomial factor of rank at least $c^{(\ref{lem:parallel})}(k,\de)$ defined by trace polynomials $P_1,\ldots , P_c:\K^n \rightarrow \F$ of degree at most $d$. For every $t \in \Sigma_0$ and $x$ such that $\B(x)=t(0)$, $$ \Pr_{y_1,\ldots ,y_k}[\B(x+\omega \cdot y)=t(\omega) \ \forall \ \omega \in \{0,1\}^k]=\frac{1}{p^{\sum_{i=1}^d M_i\sum_{1 \leq j \leq i}\binom{k}{j}}} \pm \frac{1}{p^s}.$$
\end{lem}
\begin{proof}This immediately follows from the dimension of $\Sigma_0$ (Claim~\ref{clm:dimSigma}) and Lemma~\ref{lem:equiaffine} applied to the parallelepiped.
\end{proof}

\subsection{Proof of Theorem~\ref{thm:blr}}The proof of Theorem~\ref{thm:blr} is by induction on $d$ and follows along the lines of Theorem 1.7 in \cite{GT09}. We sketch the proof here.

\begin{proof}[Proof of Theorem~\ref{thm:blr}]
The base case of $d=1$ is trivial. Indeed, if a linear trace polynomial $P:\K^n \to \F$ satisfies $|\E[e(P(x)]| \geq \de$, then by orthogonality of linear polynomials, we have $P(x)$ is a constant and hence has rank $0$.
Now, suppose the hypothesis is true for degree $d-1$. Let $t \in \N$ depending on $d$ be specified later. We have $|\E[e(P(x))]| \geq \de$. By Lemma~\ref{lem:blr}, there exists $\calH=\{H_1,\ldots H_c\}$, $c\le p^6/\de^2$, be a trace polynomial collection of degree $d$ and $\Gamma:\F^c \rightarrow \F$, such that $$\Pr[P(X) \neq \Gamma(H_1(X)),\ldots , \Tr(H_c(X))]\leq p^{-1}.$$ Let $r:\N \to \N$ be a growth function that depends on $d$ and will be specified later. Regularize $\calH$ to an $r$-regular $\calH'=\{H_1',\ldots ,H_{c'}'\}$, $c' \leq C^{(\ref{factorreg})}_{r,d}(c)$. Thus, we have $$\Pr[P(X) \neq \Gamma'(H_1'(X),\ldots , H_{c'}'(X))]\leq p^{-1},$$ where $\calH'$ is $r$-regular.

We prove that $F$ is $\calH'$-measurable. This will finish the proof. Let $\B'$ be the factor defined by $\calH'$.
Let $r(j) \geq c^{(\ref{atomsize})}(d,2t+j)$. By Markov's inequality and Lemma~\ref{lem:biassize}, for at least $1-p^{-1/4}$ fraction of atoms $A$, $$\Pr_{x \in A}[F(x) \neq \Gamma'(H_1'(X),\ldots , H_{c'}'(X))]\leq p^{-1/4}.$$

The first step is to prove that on such atoms, $F$ is constant. Fix such an atom $A$ and let $A' \subseteq A$ be the set where $F(x)=\Gamma'(H_1'(X),\ldots , H_{c'}'(X))$.

\begin{lem}Let $t$ be large enough depending on $d$. Let $x \in A$ be arbitrary. Then there is an $h \in (\K^n)^{d+1}$ such that $x+\omega \cdot h \in A'$ for all $\omega \in \{0,1\}^{d+1} \setminus 0^{d+1}$. 
\end{lem}
The proof is exactly as in Lemma 5.2 in \cite{GT09}. We omit it here. Continuing, since $P$ is of degree $d$, we have $$\sum_{\omega \in \{0,1\}^{d+1}}(-1)^{|\omega|}F(x+\omega \cdot h)=0.$$ Now, by the above lemma, we have $F(x+\omega \cdot h) \equiv c_A$ for $\omega \neq 0$, where $c_A$ is a constant that depends on $A$. Thus, $F(x) \equiv c_A$. 

This finishes the first step. Thus, we have for $1-p^{-t/4}$ fraction of the atoms $A$, call them good atoms, $F(x)=c_A$. The final step shows that for any arbitrary atom $A$, there are good atoms $A_{\omega}$, $0 \neq \omega \in \{0,1\}^{d+1}$ such that the vector $t=\B(A_{\omega}) \in \Sigma^{\{0,1\}^{d+1}}$ satisfies the parallelepiped constraints. It is enough to find one parallelepiped for which $x+\omega \cdot h$ lie in good atoms for $\omega \neq 0$. Indeed, let $x \in A$ be arbitrary. Pick $h_1,\ldots ,h_{d+1}$ randomly. The probability that for a fixed $\omega \neq 0$,  $x+\omega \cdot h$ lies in a good atom is at least $1-p^{-t/4}>1-2^{-2d}$ for $t$ large enough. The result now follows by a union bound over $\omega \in \{0,1\}^{d+1}$.
\end{proof}

}

\section{List decoding of RM codes}

We state the following corollary which we need in the proof to follow. We only state a special case of it which is enough.

\begin{cor}[Corollary 3.3 of \cite{BL14}]\label{cor:pseudorandom} Let $g:K \rightarrow K$, $\eps>0$. Then there exist $c \leq 1/\eps^2$ functions $h_1, h_2,\ldots , h_c \in \RM_{\K}(n,d)$ such that for every $f \in \RM_{\K}(n,d)$, there is a function $\Gamma_f:\K^c \rightarrow \K$ such that $$\Pr_x[\Gamma_f(h_1(x), \ldots , h_c(x))=f(x)] \geq \Pr_x[g(x)=f(x)] - \eps.$$
\end{cor}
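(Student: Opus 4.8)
The plan is a standard \emph{energy-increment} (weak regularity) argument, where the ``energy'' measures how well the partition induced by the current collection of codewords resolves the level sets of $g$. I would maintain a finite collection $\mathcal H = \{h_1,\dots,h_c\} \subseteq \RM_\K(n,d)$ and let $\B$ denote the partition of $\K^n$ into the common level sets (atoms) of $h_1,\dots,h_c$. Given $\mathcal H$, the best choice of $\Gamma_f$ is the one that on each atom $A$ outputs the plurality value of $f$ on $A$, so that $\Pr_x[\Gamma_f(h_1(x),\dots,h_c(x))=f(x)] = \E_A[\max_{v\in\K}\Pr_{x\in A}[f(x)=v]]$, the average plurality agreement of $f$ over atoms. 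Start with $\mathcal H = \emptyset$ (the trivial one-atom partition). If the current $\mathcal H$ already satisfies the conclusion, stop; otherwise there is a ``bad'' codeword $f\in\RM_\K(n,d)$ with $\E_A[\max_v \Pr_{x\in A}[f(x)=v]] < \Pr_x[g(x)=f(x)] - \eps$, which we add to $\mathcal H$ before repeating. Since all added functions lie in $\RM_\K(n,d)$, so does every $h_i$.

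For the potential, for $v\in\K$ write $g_v = \mathbf 1_{g=v}$ and $f_v = \mathbf 1_{f=v}$, let $\E[\cdot\mid\B]$ denote averaging over atoms, and set $\Phi(\B) = \sum_{v\in\K}\E_x\big[\E[g_v\mid\B](x)^2\big] = \sum_v \E_A\big[\Pr_{x\in A}[g=v]^2\big]$. Because conditional expectation is an $L^2$-projection and $\E[g_v\mid\B]\in[0,1]$, $\Phi$ is monotone non-decreasing under refinement of $\B$ and lies in $[0,1]$ (indeed $\sum_v\|\E[g_v\mid\B]\|_2^2 \le \sum_v \E[g_v] = 1$). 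Refining $\B$ only increases the average plurality agreement of a fixed $f$, so once a codeword is ``good'' it stays good; hence it suffices to bound the number of iterations.

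The key step — the only non-bookkeeping one — is that adding a bad $f$ increases $\Phi$ by more than $\eps^2$. Put $\B' = \B \vee (\text{level sets of }f)$ and decompose orthogonally with respect to $\B\subseteq\B'$. Since $\sum_v\langle\E[g_v\mid\B],\E[f_v\mid\B]\rangle = \E_A\big[\sum_v \Pr_{x\in A}[g=v]\Pr_{x\in A}[f=v]\big] \le \E_A[\max_v\Pr_{x\in A}[f=v]]$ (using $\sum_v\Pr_{x\in A}[g=v]=1$), and $\Pr_x[g=f] = \sum_v\langle g_v,f_v\rangle = \sum_v\langle\E[g_v\mid\B],\E[f_v\mid\B]\rangle + \sum_v\langle g_v-\E[g_v\mid\B],\, f_v-\E[f_v\mid\B]\rangle$, badness of $f$ forces $\sum_v\langle g_v-\E[g_v\mid\B],\,f_v-\E[f_v\mid\B]\rangle > \eps$. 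As $f_v$ is $\B'$-measurable, $f_v-\E[f_v\mid\B] = \E[f_v\mid\B']-\E[f_v\mid\B]$, so by Cauchy--Schwarz this sum is at most $\big(\sum_v\|\E[g_v\mid\B']-\E[g_v\mid\B]\|_2^2\big)^{1/2}\big(\sum_v\|\E[f_v\mid\B']-\E[f_v\mid\B]\|_2^2\big)^{1/2}$; the second factor is at most $1$ because $\sum_v\|\E[f_v\mid\B']\|_2^2 = \sum_v\E[f_v] = 1$ ($f$ being $\B'$-measurable means $\Pr_{x\in A'}[f=v]\in\{0,1\}$). Hence $\Phi(\B')-\Phi(\B) = \sum_v\|\E[g_v\mid\B']-\E[g_v\mid\B]\|_2^2 > \eps^2$, the first equality being Pythagoras.

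Since $\Phi$ starts at some value $\ge 0$, grows by more than $\eps^2$ each iteration, and never exceeds $1$, the process halts after fewer than $1/\eps^2$ iterations, producing $c\le 1/\eps^2$ codewords $h_1,\dots,h_c$ for which no bad $f$ remains; codewords $f$ that were themselves added are $\B$-measurable, so $\Gamma_f$ reproduces them exactly and the inequality holds trivially for them too. The main thing to get right is the potential-increment inequality; everything else is routine Hilbert-space bookkeeping about conditional expectations over refining partitions. I expect the only mild subtlety to be keeping the one-sided direction straight — we only need the structured approximation to do at least as well as $g$, which is exactly what makes the crude comparison $\sum_v\Pr_{x\in A}[g=v]\Pr_{x\in A}[f=v]\le\max_v\Pr_{x\in A}[f=v]$ enough.
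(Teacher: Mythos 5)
Your proof is correct, and it is the standard weak-regularity / energy-increment argument: the paper does not prove this corollary but imports it directly from \cite{BL14}, where the argument is essentially what you wrote (regularize by adding ``bad'' codewords, track the energy $\sum_v\|\E[g_v\mid\B]\|_2^2$, and use Pythagoras plus Cauchy--Schwarz to extract an $\eps^2$ increment per step). One small step you leave implicit but should keep in mind is the projection $\langle g_v-\E[g_v\mid\B],\,\E[f_v\mid\B']-\E[f_v\mid\B]\rangle = \langle \E[g_v\mid\B']-\E[g_v\mid\B],\,\E[f_v\mid\B']-\E[f_v\mid\B]\rangle$, which follows because the second argument is $\B'$-measurable; with that spelled out the argument is airtight.
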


\textbf{Theorem~\ref{thm:main} (Restated).} Let $\K=\F_q$ be an arbitrary finite field. Let $\eps>0$ and $d,n \in \N$. Then, $$\ell_{\K}(d,n,\de_{\K}(d)-\eps) \leq c_{q,d,\eps}.$$

\begin{proof}
We follow the proof structure in \cite{BL14}.
Let $g:\K^n \rightarrow \K$ be a received word. Suppose $\Pr[g(x)=f(x)] \ge 1- \de_{\K}(d)+\eps$. Apply Corollary~\ref{cor:pseudorandom} with approximation parameter $\eps/2$ gives $\calH_0=\{h_1, \ldots , h_c\} \subseteq \RM_{\K}(n,d)$, $c \leq 4/\eps^2$ such that, for every $f \in \RM_{\K}(n,d)$, there is a function $\Gamma_f:\K^c \rightarrow \K$ satisfying
$$\Pr[\Gamma_f(h_1(x),h_2(x),\ldots , h_c(x))=f(x)] \geq \Pr[g(x)=f(x)] - \eps/2 \ge 1- \de_{\K}(d)+\eps/2.$$

Let $\alpha_1, \alpha_2, \dots, \alpha_r$ be an arbitrary basis for $\K$ over $\F$. Let $\de(d):=\de_{\K}(d)$.
By Proposition~\ref{dual}, $$\Pr[\Gamma_f'(\Tr(\alpha_i h_j(x)):1 \le i \le r, 1 \le j \le c )=F(\Tr(\alpha_if(x)):1 \le i \le r)] \ge  1-\de(d) + \eps/2,$$
 where $\Gamma_f':\F^{rc} \to \K$ and $F:\F^r \to \K$. From here onwards, we identify $\F$ with $\U_1$.
Let $\calH=\{\Tr(\alpha_i h_j(x)):1 \le i \le r, 1 \le j \le c \}$ and $\calH_F=\{\Tr(\alpha_if(x)):1 \le i \le r)\}$.

Let $r_1, r_2:\N \rightarrow \N$ be two non decreasing functions to be specified later, and let $C_{r,d}^{(\ref{factorreg})}$ be as given in Lemma~\ref{factorreg}. We will require that for all $m \ge 1$, \begin{equation}\label{eq:r1r2}  r_1(m)\geq r_2(C_{r_2,d}^{(\ref{factorreg})}(m+1))+C_{r_2,d}^{(\ref{factorreg})}(m+1)+1.
\end{equation}

As a first step, we $r_1$-regularize $\calH$ by Lemma~\ref{factorreg}. This gives an $r_1$-regular factor $\B'$ of degree at most $d$, defined by polynomials $H_1,\ldots , H_{c}:\K^n \rightarrow \T$, $c' \leq C_{r_1,d}^{(\ref{factorreg})}(cr)$ and $\rank(\B')  \geq  r_1(c')$. We denote $\calH'=\{H_1,\ldots , H_{c'}\}$. Let $\depth(H_i)=k_i$ for $i \in [c']$. Let $G_f:\otimes_{i=1}^{c'} \U_{k_i+1} \rightarrow \U_1$ be defined such that
$$
\Gamma_f(h_1(x),\ldots , h_c(x))=G_f(h_1'(x),\ldots , h_{c'}'(x)).
$$

Next, we will show that $f$ is measurable with respect to $\calH'$ and this would upper bound the number of such polynomials by $c'(q,d,\eps)$ independent on $n$.

Fix such a polynomial $f$. Call $F_i=\Tr(\alpha_i f)$. Appealing again to Lemma~\ref{factorreg}, we $r_2$-regularize $\B_f:=\B' \bigcup \calH_F$. We get an $r_2$-regular factor $\B'' \succeq_{syn} \B'$ defined by the collection $\calH''=\{H_1,\ldots , H_{c'},H'_1,\ldots , H'_{c''}\}$. Note that it is a syntactic refinement of $\B'$ as by our choice of $r_1$, $$\rank(\B') \geq r_1(c')  \geq  r_2(C_{r_2,d}^{(\ref{factorreg})}(c'+1))+C_{r_2,d}^{(\ref{factorreg})}(c'+1)+1 \geq r_2(|\B''|)+|\B''|+1.$$
We will choose $r_2$ such that for all $m \ge 1$,
\begin{equation}\label{eq:r2atom}
r_2(m) = \max\l(r_d^{(\ref{atomsize})}\l(\frac{\eps/4}{\l(p^{\lfloor\frac{d-1}{p-1}\rfloor+1}\r)^m}\r),r^{(\ref{proppreserve})}_d(m)\r).
\end{equation}
Since each $F_i$ is measurable with respect to $\B''$, there exists $F':S \rightarrow \U_1$ such that
$$
f(x)=F'(H_1(x),\ldots , H_{c'}(x), H'_1(x),\ldots , H'_{c''}(x)).
$$
Summing up, we have $$\Pr[G(H_1(x),H_2(x),\ldots , H_{c'}(x))=F'(H_1(x),\ldots , H_{c'}(x), H'_1(x),\ldots , H'_{c''}(x))] \ge 1-\de(d)+ \eps/2.$$
We next show that we can have each polynomial in the factor have a disjoint set of inputs. This would simplify the analysis considerably.

\begin{claim}\label{clm:tild}Let $x^i, y^j$, $i \in [c'], j \in [c'']$ be pairwise disjoint sets of $n \in \N$ variables each. Let $n' = n(c'+c'')$. Let $\tilde{f}:\K^{n'} \rightarrow \K$ and $\tilde{g}:\K^{n'} \rightarrow \K$ be defined as   $$\tilde{f}(x)=F(H_1(x^1),\ldots , H_{c'}(x^{c'}), H'_1(y^1), \ldots , H'_{c''}(y^{c''}))$$ and $$\tilde{g}(x)=G(H_1'(x^1), \ldots , H_{c'}(x^{c'})).$$
Then $\deg(\tilde{f}) \le d$ and
$$
\l|\Pr_{x \in \F^{n'}}[\tilde{f}(x)=\tilde{g}(x)] - \Pr_{x \in \F^n}[f(x)=G_f(h_1'(x),h_2'(x),\ldots , h_c'(x))]\r|\leq  \eps/4.
$$
\end{claim}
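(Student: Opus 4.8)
The plan is to prove the two assertions of the claim separately: the bound $\deg(\tilde f)\le d$ via the degree-preservation theorem \cref{proppreserve}, and the probability estimate via the equidistribution of regular factors \cref{atomsize}, after observing that splitting the variables into disjoint blocks turns the relevant polynomial factor into one that is still regular, now over $\K^{n'}$.

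For the degree bound I would work over $\K^{n'}$ with coordinate blocks $(x^1,\dots,x^{c'},y^1,\dots,y^{c''})$ and introduce, for each defining polynomial of $\B''$, a ``diagonal'' lift $H_i^{(0)}(x^1,\dots,y^{c''})=H_i(x^1)$ (and $(H'_j)^{(0)}(\cdot)=H'_j(x^1)$) and a ``block'' lift $\hat H_i(\cdot)=H_i(x^i)$ (and $\hat H'_j(\cdot)=H'_j(y^j)$). Pre-composition with a coordinate projection changes neither the additive degree nor the depth (by the explicit form in \cref{explicit}), so $\hat H_i$ matches $H_i^{(0)}$ in both; and pre-composition with a surjective linear map cannot lower the rank of a polynomial (compose back with a linear section), so the factor $\{H_i^{(0)}\}\cup\{(H'_j)^{(0)}\}$ on $\K^{n'}$ has rank at least $\rank(\B'')\ge r_2(|\B''|)$, which by \eqref{eq:r2atom} clears the threshold required by \cref{proppreserve} for the locally characterized property $\P_d=\{Q:\deg(Q)\le d\}$ (\cref{kaufmanron}). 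The key observation is that $F'\big(H_1^{(0)},\dots,(H'_{c''})^{(0)}\big)$ is exactly $f$ composed with the projection onto block $x^1$, so it has degree $\le d$, i.e. lies in $\P_d$; since $\tilde f=F'(\hat H_1,\dots,\hat H'_{c''})$ and every $\hat H$ matches the corresponding $H^{(0)}$ in degree and depth, \cref{proppreserve} yields $\tilde f\in\P_d$.

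For the probability estimate, set $T=\prod_{i\in[c']}\U_{k_i+1}\times\prod_{j\in[c'']}\U_{k'_j+1}$ and let $\Phi:T\to\{0,1\}$ be $\Phi(\vec a,\vec b)=\mathbf 1[F'(\vec a,\vec b)=G(\vec a)]$. Unwinding definitions (using that $f$ is $\B''$-measurable and $h'_i=H_i$), the right-hand probability is $\E_{x\in\K^n}[\Phi(\B''(x))]$ and the left-hand probability is $\E[\Phi(H_1(x^1),\dots,H'_{c''}(y^{c''}))]$, the latter being the expectation of $\Phi$ under the distribution induced on $T$ by the lifted factor $\{\hat H_i\}\cup\{\hat H'_j\}$ over $\K^{n'}$. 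Both $\B''$ and that lifted factor are regular (the lift having rank $\ge\rank(\B'')$ as above) and share the atom set $T$ of size $\|\B''\|\le(p^{\lfloor(d-1)/(p-1)\rfloor+1})^{|\B''|}$ (the depth bound being \cref{explicit}), so \cref{atomsize}, invoked with the error parameter chosen in \eqref{eq:r2atom}, shows that each of these two distributions is within $\eps/4$ in $\ell_1$ — hence within $\eps/8$ in statistical distance — of the uniform distribution on $T$. The triangle inequality through the uniform distribution, together with $0\le\Phi\le1$, then bounds the gap between the two expectations by $\eps/4$.

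I expect the only genuinely substantive point to be the degree bound. Restricting $\tilde f$ to the diagonal recovers $f$, but degree is not controlled under such restrictions, so $\deg(\tilde f)\le d$ cannot be read off from $\deg(f)\le d$ directly; it is precisely \cref{proppreserve} (which itself rests on the equidistribution theorem \cref{unifsubspace}) that licenses replacing the diagonal polynomials $H_i^{(0)}$ by the independent-block polynomials $\hat H_i$ without raising the degree. Everything else — matching the growth functions $r_1,r_2$ of \eqref{eq:r1r2} and \eqref{eq:r2atom} and the depth bound $\depth(H_i)\le\lfloor(d-1)/(p-1)\rfloor$ to the error parameters demanded by \cref{atomsize} and \cref{proppreserve} — is routine bookkeeping.
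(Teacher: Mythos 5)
Your proof is correct and follows the paper's route: \cref{proppreserve} gives the degree bound, and two applications of \cref{atomsize} (to the original factor on $\K^n$ and to the block-disjoint lift on $\K^{n'}$) give the probability estimate. The paper is quite terse about the degree bound, saying only that it ``follows from \cref{proppreserve}'', but the intended argument is exactly the diagonal-lift-versus-block-lift application you spelled out, with the rank of the lifted factor controlled because precomposing with a surjective linear projection cannot decrease rank.
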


\begin{proof}
The bound $\deg(\tilde{f}) \le \deg(f) \le d$ follows from Lemma~\ref{proppreserve} since $r_2(|\calH''|) \ge r^{(\ref{proppreserve})}_d(|\calH''|)$. To establish the bound on $\Pr[\tilde{f}=\tilde{g}]$, for each $s \in S$ let
$$
p_1(s) = \Pr_{x \in \F^n}[(h'_1(x),\ldots,h'_{c'}(x),h''_1(x),\ldots,h''_{c''}(x))=s].
$$
Applying Lemma~\ref{atomsize} and since our choice of $r_2$ satisfies $\rank(\calH'') \ge r_d^{(\ref{atomsize})}(\eps/4|S|)$, we have that $p_1$ is nearly uniform over $S$,
$$
p_1(s) = \frac{1 \pm \eps/4}{|S|}.
$$
Similarly, let
$$
p_2(s) = \Pr_{x^1,\ldots,x^{c'}, y^1,\ldots,y^{c''} \in \F^n}[(h'_1(x^1),\ldots,h'_{c'}(x^{c'}),h''_1(y^1),\ldots,h''_{c''}(y^{c''}))=s].
$$
Note that the rank of the collection of polynomials $\{h'_1(x^1),\ldots,h'_{c'}(x^{c'}),h''_1(y^1),\ldots,h''_{c''}(y^{c''})\}$ defined over $\F^{n'}$ cannot be lower than that of $\calH''$. Applying Lemma~\ref{atomsize} again gives
$$
p_2(s) = \frac{1 \pm \eps/4}{|S|}.
$$
For $s \in S$, let $s' \in \otimes_{i=1}^{c'}\U_{k_i+1}$ be the restriction of $s$ to first $c'$ coordinates, that is, $s'=(s_1,\ldots ,s_{c'})$. Thus
\begin{align*}
\Pr_{x \in \F^{n'}}[\tilde{f}(x)=\tilde{g}(x)] &= \sum_{s \in S} p_2(s) 1_{F(s)=G_f(s')} \\
&= \sum_{s \in S} p_1(s) 1_{F(s)=G_f(s')} \pm \eps/4 \\
&= \Pr_{x \in \F^n}[f(x)=G_f(h_1'(x),h_2'(x),\ldots , h_c'(x))] \pm \eps/4.
\end{align*}
\end{proof}

So, we obtain that
$$
\Pr_{x \in \F^{n'}}[\tilde{f}(x)=\tilde{g}(x)] \ge \Pr_{x \in \F^n} [f(x) = G_f(h'_1(x),\ldots,h'_{c'}(x))] - \eps/4 \ge 1 - \de(d)+\eps/4.
$$
Next, we need the following variant of the Schwartz-Zippel lemma from \cite{BL14}.
\begin{claim}\label{clm:sz1}
Let $d,n_1,n_2 \in \N$. Let $f_1:\K^{n_1+n_2} \rightarrow \K$ and $f_2:\K^{n_1} \rightarrow \K$ be such that $\deg(f_1)\leq d$ and $$\Pr[f_1(x_1,\ldots , x_{n_1+n_2})=f_2(x_1,\ldots , x_{n_1})]>1-\de(d)$$
Then, $f_1$ does not depend on $x_{n_1+1}, \ldots , x_{n_1+n_2}$.
\end{claim}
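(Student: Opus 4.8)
The plan is to argue by contradiction and reduce everything to a sub-multiplicativity property of the Reed--Muller distance function. First I would put $f_1$ into reduced form with respect to its last $n_2$ variables: since $z^q = z$ for all $z \in \K$, the function $f_1$ is computed by a polynomial, still of degree $\le d$, in which each of $x_{n_1+1}, \dots, x_{n_1+n_2}$ occurs to degree $< q$, and I can then group
$$f_1(x,y) = \sum_{\alpha \in \{0,\dots,q-1\}^{n_2}} f^{(\alpha)}(x)\, y^\alpha,$$
where $x = (x_1,\dots,x_{n_1})$, $y = (x_{n_1+1},\dots,x_{n_1+n_2})$, and each $f^{(\alpha)}: \K^{n_1} \to \K$ satisfies $\deg f^{(\alpha)} \le d - |\alpha|$; this degree bound survives the reduction because replacing $x_j^e$ by $x_j^{e'}$ with $e' \le e$ can only decrease the total degree of a monomial. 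Assume now, for contradiction, that $f_1$ does depend on $y$. Let $\alpha^*$ be an exponent of maximal weight $D := |\alpha^*|$ among those with $f^{(\alpha^*)}\not\equiv 0$; then $D \ge 1$ (as $f_1$ depends on $y$), and $d - D \ge 0$ (as $f^{(\alpha^*)}\not\equiv 0$ has degree $\le d - D$).

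Next I would run a fiber-wise argument over the choice of $x$. Set $T := \{x \in \K^{n_1} : f^{(\alpha^*)}(x) \ne 0\}$. Since $f^{(\alpha^*)}$ is a nonzero polynomial of degree $\le d - D$, the Schwartz--Zippel bound (equivalently, the minimum-distance bound for $\RM_\K$) gives $\mathbf{Pr}_x[x \in T] \ge \de_\K(d - D)$. For each fixed $x \in T$, the map $y \mapsto f_1(x,y) - f_2(x)$ is a polynomial in $y$ of degree $\le D$ by maximality of $D$, and it is not identically zero: its $y^{\alpha^*}$-coefficient is $f^{(\alpha^*)}(x) \ne 0$, and since $\alpha^* \ne 0$ this coefficient is unaffected by subtracting the constant $f_2(x)$. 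Hence $\mathbf{Pr}_y[f_1(x,y) \ne f_2(x)] \ge \de_\K(D)$ for every $x \in T$, and therefore
$$\mathbf{Pr}_{x,y}[f_1(x,y) \ne f_2(x)] \ \ge\ \mathbf{Pr}_x[x \in T]\cdot \de_\K(D) \ \ge\ \de_\K(d-D)\,\de_\K(D).$$

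Finally I would close the loop using sub-multiplicativity of the Reed--Muller distance: $\de_\K(d_1 + d_2) \le \de_\K(d_1)\,\de_\K(d_2)$ for all $d_1, d_2 \ge 0$. This is immediate from the definition of $\de_\K$: if $P_i$ is a minimum-weight codeword of $\RM_\K(m_i, d_i)$ on its own block of variables (with $m_i$ large), then $P_1 \cdot P_2$ has degree $\le d_1 + d_2$ and support of density exactly $\de_\K(d_1)\de_\K(d_2)$, so $\de_\K(d_1+d_2) \le \de_\K(d_1)\de_\K(d_2)$. Taking $d_1 = d - D$ and $d_2 = D$ gives $\de_\K(d-D)\de_\K(D) \ge \de_\K(d) = \de(d)$, so $\mathbf{Pr}_{x,y}[f_1(x,y) \ne f_2(x)] \ge \de(d)$, contradicting the hypothesis that the agreement probability exceeds $1 - \de(d)$. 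Hence $f_1$ cannot depend on $x_{n_1+1},\dots,x_{n_1+n_2}$.

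I expect the main subtlety to be choosing the right quantity to compare against, rather than any hard estimate: the naive move of comparing $f_1(x,y)$ with an independent copy $f_1(x,y')$ produces, through a union bound, only $\mathbf{Pr}[\ne] \le 2\,\de(d)$, which misses the threshold by a factor of two. Working fiber-by-fiber with $f_1(x,y) - f_2(x)$ directly --- even though $f_2$ is not a polynomial --- together with the observation that on $T$ the $y$-degree is capped at $D$, is exactly what produces the bound $\de_\K(d-D)\de_\K(D)$, which the sub-multiplicativity of $\de_\K$ then turns into the contradiction. The only routine points to verify carefully are the degree bookkeeping $\deg f^{(\alpha)} \le d - |\alpha|$ under the reduction $x_j^q \equiv x_j$, and that the closed-form $\de_\K(\cdot)$ is a genuine lower bound for the minimum distance of $\RM_\K(m, \cdot)$ for every number of variables $m$ (including the degenerate range where every function is a polynomial), so that the two Schwartz--Zippel applications are valid.
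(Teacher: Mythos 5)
The paper does not actually prove \cref{clm:sz1}; it is imported from \cite{BL14} as a ``variant of the Schwartz--Zippel lemma,'' so there is no in-text proof to compare against. Your argument is correct and self-contained, and it is in the spirit of what \cite{BL14} do. The three ingredients all check out: (i) after reducing $f_1$ modulo $x_j^q - x_j$, the grouping $f_1(x,y)=\sum_\alpha f^{(\alpha)}(x)\,y^\alpha$ does satisfy $\deg f^{(\alpha)}\le d-|\alpha|$ because every surviving monomial $x^\beta y^\alpha$ has $|\beta|+|\alpha|\le d$; (ii) for $x\in T$ the map $y\mapsto f_1(x,y)-f_2(x)$ has $y$-degree $\le D$ (all higher coefficients are identically zero by maximality of $D$) and is nonzero because subtracting the constant $f_2(x)$ does not touch the $y^{\alpha^*}$-coefficient; and (iii) the sub-multiplicativity $\de_\K(d_1+d_2)\le\de_\K(d_1)\de_\K(d_2)$ follows from the tensor construction exactly as you describe, and also verifies directly from the closed-form via $(q-b_1)(q-b_2)\ge (q-b_1)+(q-b_2)-1$ when $b_1+b_2\ge q-1$. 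You also correctly handle the monotonicity of $\de_\K$ needed in $\Pr_x[x\in T]\ge\de_\K(\deg f^{(\alpha^*)})\ge\de_\K(d-D)$, and the edge case $d=D$ via $\de_\K(0)=1$. Your closing remark about why the naive symmetrization only yields $2\,\de(d)$ and why fibering against the arbitrary function $f_2$ is what recovers the sharp product bound is exactly the right intuition.
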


With claim \ref{clm:sz1} applied to $f_1=\tilde{f}, f_2=\tilde{g}, n_1=n c', n_2= nc''$. We obtain that $\tilde{f}$ does not depend on $y^1,\ldots,y^{c''}$. Hence,
$$
\tilde{f}(x^1,\ldots,x^{c'},y^1,\ldots,y^{c''})=F(H_1'(x^1),\ldots , H_{c'}'(x^{c'}), C_1 ,\ldots , C_{c''})$$
where $C_j=H''_j(0)$ for $j \in [c'']$. If we substitute $x^1=\ldots=x^{c'}=x$ we get that
$$
f(x)=F(H'_1(x),\ldots,H'_{c'}(x),H''_1(x),\ldots,H''_{c''}(x)) =
F(H_1'(x),\ldots , H_{c'}'(x), C_1, \ldots , C_{c''}),
$$
which shows that $f$ is measurable with respect to $\calH'$, as claimed.

\end{proof}

\section{Polynomial decomposition}

\begin{define}Given $k \in \N$ and $\De=(\De_1,\ldots , \De_k) \in \N^k$ and a function $\Gamma:\K^k \to \K$, a function $P:\K^n \to \K$ is $(k,\De,\Gamma)$-structured if there exist polynomials $P_1,\ldots , P_k:\K^n \to \K$ with $\deg(P_i) \le \De_i$ such that for $x \in \K^n$, we have $$P(x)=\Gamma(P_1(x),\ldots , P_k(x)).$$ The polynomials $P_1,\ldots , P_k$ form a $(k,\De,\Gamma)$-decomposition.
\end{define}

The main result we prove is the following.
\begin{thm}\label{thm:polydeco}Let $k \in \N$. For every $\De = (\De_1, \ldots , \De_k) \in \N^k$ and every function $\Gamma: \K^k  \to \K$, there is a randomized algorithm $A$ that on input $P: \K^n \to \K$ of degree $d$, runs in time $\poly_{q,k,\De}(n^{d+1})$ and outputs a $(k, \De, \Gamma)$-decomposition of $P$ if one exists while otherwise returning $NO$.
\end{thm}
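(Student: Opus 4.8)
The plan is to prove \cref{thm:polydeco} by induction on the number of variables $n$, following the recursive strategy of \cite{B14, BHT15} but substituting \cref{proppreserve} for the characteristic-limited degree-preserving argument used there. For $n$ below a threshold $n_0 = n_0(q,k,\bm{\Delta},d)$, the algorithm simply enumerates every tuple $(P_1,\dots,P_k)$ with $\deg(P_i)\le\Delta_i$ (a number of tuples bounded in terms of $q,k,\bm{\Delta},n_0$ only) and returns one forming a $(k,\bm{\Delta},\Gamma)$-decomposition of $P$ if it exists, else $\mathsf{NO}$; this costs $O_{q,k,\bm{\Delta}}(1)$.

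For $n\ge n_0$ the inductive step runs as follows. Fix a basis $\alpha_1,\dots,\alpha_r$ of $\K$ over $\F$; the additive polynomials $\Tr(\alpha_1 P),\dots,\Tr(\alpha_r P)$ have additive degree $\le d$, and by \cref{dual} $P$ is measurable with respect to the factor they generate. Apply \cref{factorreg} with a sufficiently fast-growing regularity function $r(\cdot)$, chosen so that the output rank exceeds $r_{\ref{proppreserve}}(d,K)$ (where $K$ is the locality parameter of the degree-$\le\max_i\Delta_i$ characterization from \cref{kaufmanron}) and also the rank \cref{atomsize} requires for error $1/(2\|\B\|^{K})$. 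We obtain a regular factor $\B=(Q_1,\dots,Q_C):\K^n\to\T^C$ of bounded complexity $C$, degree $\le d$, with $P=\Gamma_P\circ\B$ for an explicit $\Gamma_P$; by the remark after \cref{factorreg} this (randomized) step runs in time $O(n^{d+1})$. Each $Q_i$ has at most $d$ coordinates whose zero-restriction lowers its additive degree or depth (the bad coordinates all occur in any single top monomial of the representation in \cref{explicit}), and $C$ is bounded, so for $n\ge n_0$ there is a coordinate $x_j$ whose restriction to $0$ preserves the additive degree and depth of every $Q_i$. Set $\B_0=(Q_1|_{x_j=0},\dots,Q_C|_{x_j=0})$ on $\K^{n-1}$ and $P_0=P|_{x_j=0}=\Gamma_P\circ\B_0$; for $n_0$ large, restriction to this hyperplane keeps the rank above the regularity threshold, so $\B_0$ is still regular.

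The central claim is that $P$ admits a $(k,\bm{\Delta},\Gamma)$-decomposition if and only if $P_0$ does, which licenses recursing on $P_0$ over $\K^{n-1}$. The forward direction is immediate by restricting a decomposition of $P$ to $x_j=0$. For the converse, the recursive call returns $P_0=\Gamma(\bar P_1,\dots,\bar P_k)$ with $\deg(\bar P_i)\le\Delta_i$, and — threading the auxiliary factor through the recursion — we may take each $\bar P_i$ measurable with respect to $\B_0$, say $\bar P_i=\gamma_i\circ\B_0$. We then \emph{lift} by setting $P_i:=\gamma_i\circ\B:\K^n\to\K$. Two checks remain. First, $\Gamma(P_1,\dots,P_k)=P$: since corresponding defining polynomials of $\B$ and $\B_0$ have equal additive degree and depth, $\B$ and $\B_0$ have the same set of atoms, and \cref{atomsize} (using the rank bound) shows $\B_0$ attains every atom; from $\Gamma(\gamma_1,\dots,\gamma_k)\circ\B_0=P_0=\Gamma_P\circ\B_0$ we deduce $\Gamma(\gamma_1,\dots,\gamma_k)=\Gamma_P$ as functions on atoms, hence $\Gamma(P_1,\dots,P_k)=\Gamma_P\circ\B=P$. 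Second, $\deg(P_i)\le\Delta_i$: this is exactly \cref{proppreserve} with $\P$ the property ``degree $\le\Delta_i$'' — $(K,W)$-lightly locally characterized by \cref{kaufmanron} — taking $F=\bar P_i=\gamma_i\circ\B_0$ (which satisfies $\P$), the pseudorandom collection being the defining polynomials of $\B_0$ (rank $>r_{\ref{proppreserve}}(d,K)$ by our choice of $r(\cdot)$), and the comparison collection being the $Q_i$ themselves, each of the same additive degree and depth as the corresponding polynomial of $\B_0$; the conclusion is that $\gamma_i\circ\B$ satisfies $\P$, i.e.\ $\deg(P_i)\le\Delta_i$. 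The lift is computable in time $\poly(n^d)$ (a composition with a bounded factor), and there are only boundedly many candidate atom-functions $\gamma_i$ since $\|\B_0\|\le p^{C(\lfloor(d-1)/(p-1)\rfloor+1)}$ is bounded.

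The recursion has depth $n$, each level costing $O(n^{d+1})$ for \cref{factorreg} plus $O_{q,k,\bm{\Delta}}(1)$ for the bounded book-keeping and lifting, so the total running time is $\poly_{q,k,\bm{\Delta}}(n^{d+1})$; overall correctness (including correctness of the answer $\mathsf{NO}$) follows from the central claim by induction. The step I expect to be the main obstacle — and the one that forces the new machinery of this paper — is the degree bound $\deg(P_i)\le\Delta_i$ in the lift: over a non-prime field one cannot argue as in \cite{B14, BHT15} that all $(d{+}1)$-st iterated derivatives of $P_i$ vanish, since phenomena like $x^p$ vanish after two derivatives without having low degree. \cref{proppreserve}, proved via the equidistribution theorem \cref{unifsubspace} on affine subspaces of $\K^n$ rather than via derivatives, is precisely what makes the lift go through for arbitrary $d$ and $q$. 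A secondary technical point is the book-keeping ensuring the recursively produced pieces are measurable with respect to $\B_0$ (or a bounded regular refinement of it, lifted back to $\K^n$ by extending the extra defining polynomials so as to ignore $x_j$ and re-regularizing), which must be carried through all levels of the recursion.
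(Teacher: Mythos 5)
Your proposal follows the paper's proof very closely: regularize the trace factor $\{\Tr(\alpha_i P)\}$, pick a coordinate whose zero-restriction preserves the additive degrees and depths of the regular factor's polynomials (using \cref{lem:hyperplanerank} to control the rank drop under restriction), recurse on $P|_{x_j=0}$, and lift back, using \cref{atomsize} for surjectivity and \cref{proppreserve} for the degree bound $\deg(P_i)\le\Delta_i$; you correctly single out \cref{proppreserve} as the new ingredient that replaces the derivative-based argument of \cite{B14,BHT15}, which breaks over non-prime fields. The one place where your proposal is looser than the paper is the treatment of the recursive output: you posit, by ``threading the auxiliary factor through the recursion,'' that the returned pieces $\bar P_i$ may be taken measurable with respect to $\B_0$, which is not automatic and would require carrying a nontrivial invariant through all recursion levels. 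The paper avoids this entirely: after the recursive call returns $S_1,\dots,S_k$, it forms the factor $\B_1=\{\Tr(\alpha_j S_i)\}$ and applies \cref{factorreg} to $\B'\cup\B_1$ with the syntactic-refinement clause (the rank threshold $R_1$ is chosen up front precisely so that the regularized output syntactically retains $\B'$), after which the $S_i$ are measurable with respect to $\B'$ plus finitely many extra polynomials $s'_1,\dots,s'_D$, and the lift simply replaces each $g_i'$ by $g_i$ and each extra $s'_j$ by the constant $0$. This post-hoc regularization needs no invariant across levels and is the cleaner way to make your ``secondary technical point'' rigorous; your parenthetical fix (refine, extend, re-regularize) is in the right spirit but the constant-substitution lift dovetails more directly with the surjectivity argument. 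The remaining small imprecisions — asserting the bad coordinates lie in a ``single top monomial'' (one should take a max-degree monomial and a max-depth term separately, giving $\le 2Cd$ bad coordinates), and attributing the preserved regularity after hyperplane restriction to ``$n_0$ large'' rather than to the $O(q)$ rank-loss bound of \cref{lem:hyperplanerank} — do not affect the argument's soundness.
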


We first show that the notion of rank is robust to hyperplane restrictions over nonprime fields. More precisely, we have the following.
\begin{lem}\label{lem:hyperplanerank}Let $P:\K^n \to \T$ be an additive polynomial such that $\rank(P) \geq r$. Let $H$ be a hyperplane in $\K^n$. Then the restriction of $P$ to $H$ has rank at least $r-q$.
\end{lem}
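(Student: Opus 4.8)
The plan is to argue by contrapositive: I would assume that the restriction $P|_H$ has rank strictly less than $r-q$ and deduce that $P$ itself has rank less than $r$. First I would set up coordinates so that the hyperplane $H = \{x \in \K^n : x_n = 0\}$ (after an invertible affine change of variables, which preserves rank since rank is defined via decompositions into lower-degree polynomials and affine substitutions do not raise degree). Write $P(x_1,\dots,x_n)$ and consider the $q$ ``slices'' $P_c(x_1,\dots,x_{n-1}) := P(x_1,\dots,x_{n-1},c)$ for each $c \in \K$. Each $P_c$ is an additive polynomial on $\K^{n-1}$ with additive degree and depth no larger than those of $P$. The slice $P_0$ is exactly $P|_H$, which by assumption has rank at most $r-q-1$, so $P_0 = \Gamma_0(R_1,\dots,R_{s})$ for some $s \le r-q-1$ polynomials $R_i$ of additive degree $\le d-1$.

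The key step is to express $P$ globally in terms of its slices plus a bounded number of extra low-degree polynomials. Using the explicit form from \cref{explicit}, the variable $x_n$ enters $P$ only through the quantities $|\Tr(\alpha_j x_n)|$ for $j \in [r]$, each of which takes values in $\{0,\dots,p-1\}$; equivalently, $P$ is a polynomial of bounded additive degree in the $r$ ``digit'' functions of $x_n$. I would write $P(x) = \sum_{c \in \K} P_c(x_1,\dots,x_{n-1}) \cdot \mathbf{1}[x_n = c]$, and observe that each indicator $\mathbf{1}[x_n = c]$ is a function only of $x_n$ (a constant with respect to $x_1,\dots,x_{n-1}$), hence has rank $0$ as a polynomial in the remaining variables — more precisely, $P$ as a function of $x_1,\dots,x_n$ is a function of the $\le q$ polynomials $\{P_c : c \in \K\}$ together with the constantly-many polynomials depending only on $x_n$. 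Therefore $\rank(P) \le \rank(P_0) + (q-1) \cdot(\text{something bounded})$; I would need to count carefully, but the point is that the contribution from the $q-1$ slices $P_c$ with $c \neq 0$ and from the $x_n$-dependent pieces is at most $q$ (or some explicit bound linear in $q$), giving $\rank(P) \le \rank(P_0) + q \le r-1$, a contradiction.

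The main obstacle I anticipate is bookkeeping the exact additive-rank bound: I must verify that ``gluing'' the $q$ slices together costs at most $q$ in rank, which requires that the $x_n$-only functions genuinely contribute rank independent of $x_1,\dots,x_{n-1}$ and that no slice other than counting is needed — in characteristic $p$ with non-classical polynomials one must be slightly careful that depth does not interact badly. A cleaner alternative, which I would try first, is the direct (non-contrapositive) route used in analogous prime-field lemmas: given a low-rank witness decomposition $P = \Gamma(P_1,\dots,P_k)$ with $k < r-q$, one restricts each $P_i$ to $H$; the restricted $P_i|_H$ have additive degree $\le d-1$, so $P|_H = \Gamma(P_1|_H,\dots,P_k|_H)$ exhibits $\rank(P|_H) \le k$ — but this gives the wrong direction. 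So the contrapositive/slicing argument above is the right one, and the whole content is the claim that passing from $P|_H$ to $P$ across the $q$ cosets of $H$ (in the $x_n$ direction) increases rank by at most $q$.
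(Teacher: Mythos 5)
Your slicing picture is the same as the paper's geometric picture, and the contrapositive framing is right, but there is a real gap at exactly the spot you flag as needing ``careful counting'': you cannot feed the slices $P_c$ for $c\neq 0$ into a rank witness. The definition of $\rank_d$ requires the decomposing functions $P_1,\dots,P_k$ to have additive degree $\leq d-1$, but each slice $P_c(x_1,\dots,x_{n-1})=P(x_1,\dots,x_{n-1},c)$ generally has full additive degree $d$, the same as $P$. So ``$P$ is a function of $\{P_c:c\in\K\}$ and an $x_n$-only function'' does not yield any bound on $\rank(P)$, no matter how you count. Your claimed inequality $\rank(P)\leq\rank(P_0)+q$ therefore does not follow from what you wrote.

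The missing idea is to subtract $P_0$ from each slice and observe that the differences are derivatives. Writing $h_c=(0,\dots,0,c)$, one has $P_c(y)=P(y+h_c)=P_0(y)+\bigl(D_{h_c}P\bigr)(y)$ for $y\in H$, and $D_{h_c}P$ has additive degree $\leq d-1$ by definition of additive degree. The paper packages this by setting $P''=P-P_0\circ\pi$ (which vanishes on $H$) and defining $R_c=D_{h_c}P''|_H=D_{h_c}P|_H$; then $P=\Gamma\bigl(P_0\circ\pi,\ x_n,\ \{R_c\}_{c\neq 0}\bigr)$, where the $R_c$ are $q-1$ polynomials of degree $\leq d-1$, $x_n$ is a low-degree input (realized via the $r$ trace functionals $\iota(\Tr(\alpha_j x_n))$, each of additive degree $1$), and $P_0\circ\pi$ pushes a rank-$s$ witness of $P_0$ forward to $s$ degree-$\leq d-1$ polynomials on $\K^n$. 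This yields $\rank(P)\leq\rank(P_0)+(q-1)+O(1)$ and the lemma follows by contrapositive. In short: your decomposition is the right skeleton, but the witness must come from the \emph{derivatives} $D_{h_c}P|_H$ rather than the slices themselves, since only the derivatives drop in degree.
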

\begin{proof}
Without loss of generality, let $H$ be defined by $x_1=0$. Let $P':\K^{n-1}\rightarrow \T$ be the restriction of $P$ defined by $P'(y)=P(0y)$. Let $\pi:\K^{n}\rightarrow \K^{n-1}$ be the map $\pi(x_1x_2\ldots x_n)=x_2\ldots x_n$. Let $P'':\K^n \rightarrow \T$ be defined by $P''(x)=P(x)-P'\circ\pi$. Then $P''(x)=0$ for $x \in H$. For $i \in \K\setminus\{0\}$, let $h_i=(i,0,\ldots ,0)$. Then, for $y \in H$, define $R_j:\K^n \to \T$ by $$R_j(y)=P''(y+h_j)=(D_{h_j}P'')(y).$$
Note that $\deg(R_j) \leq d-1$. Now, since $P(x)=P''(x)+P'\circ\pi(x)$, we have $$P(x)=\Gamma(P'\circ \pi, x_1, \{R_y(x):y \in \F\}).$$ 
Now, if $\rank(P') \leq r$, then $\rank(P'\circ \pi) \leq r$ and hence $\rank(P) \leq r+q$. This finishes the proof.
\end{proof}

We now start with the proof of Theorem~\ref{thm:polydeco}.
\begin{proof}
Let $R_1:\N \to \N$ be defined as $R_1(m)=R_2(c_{\ref{factorreg}}^{(R_1,d)}(m+k))+c_{\ref{factorreg}}^{(R_1,d)}(m+k)+q$ where $R_2:\N \to \N$ will be specified later.

Let $\alpha_1, \dots, \alpha_r$ be an arbitrary basis for $\K$ over $\F$.
By Proposition~\ref{dual}, $P(x)=\sum_i \beta_i \Tr(\alpha_iP(x))$ for the dual basis $\beta_1, \dots, \beta_r$. Set $f_i(x)=\Tr(\alpha_iP(x))$. Identifying $\F$ with $\U_1$ we treat $f_i:\K^n \to \T$. Regularize $\{f_1,\ldots , f_r\}$ using the algorithm of \cite{BHT15} to find $R_1$-regular $\B=\{g_1,\ldots , g_C:\K^n \to \T\}$ where $C \le c_{\ref{factorreg}}^{(R_1,d)}(r)$. So, $f_i(x)=G_{i}(g_1(x),\ldots , g_C(x))$ and $P(x)=\sum_i \alpha_i G_{i}(g_1(x),\ldots , g_C(x)).$ Thus, if $n \le Cd$, then we are done by a brute force search.

Else, $n >Cd$. For each $g_i$, pick a monomial $m_i$ with degree $\deg(P_i)$. Then there is $i_0 \in [n]$ such that $x_{i_0}$ does not appear in any $g_i$. Set $g_i':=g_i|x_{i_0}=0$. Let $\B'$ be the factor defined by the $g_i's$. Note that $\deg(g_i')=\deg(g_i)$ and $\depth(g_i')=\depth(g_i)$. Also, by Lemma~\ref{lem:hyperplanerank}, $\B'$ is $R_1-q$-regular.

Now, using recursion, we solve the problem on $n-1$ variables. That is, decide if for $P':=P|x_{i_0}=0$ is $(k,\De,\Gamma)$-structured. If $P'$ is not, then $P$ is not either, so we are done. Else, suppose the algorithm does not output NO.

Say $$P'(x)=\Gamma(S_1(x),\ldots , S_k(x))=\Gamma'(\Tr(\alpha_jS_i(x)):i \in [k], j \in [r]),$$ where $$\Gamma'(a_{ij}:i \in [k],j\in [r])=\Gamma(\sum_j \beta_i a_{ij}: i \in [k]).$$ Note that while $\Gamma:\K^k \to \K$, we have $\Gamma':\F^{kr}\to \K$. Let $\B_1$ be the factor formed by $\{\Tr(\alpha_j S_i)\}$. Via the algorithm of \cite{BHT15}, regularize $\B' \cup \B_1$ using $R_2:\N \to \N$ and we get a syntactic refinement $\B' \cup \B_1'$ by the choice of $R_1$. Let $\B_1'=\{s_1',\ldots , s_D'\}.$ where $$\Tr(\alpha_j S_i)=G_{ij}(g_i',s_j': i \in [C], j \in [D]).$$ Choose $R_2$ large enough such that the map induced by $\B' \cup \B_1'$ is surjective. Now, fix any $\ell \in [r]$. Then, $$\Tr(\alpha_{\ell}P')=G_{\ell}(g_1',\ldots , g_C')=F_{\ell}(G_{ij}(g_i',s_j')),$$
where $F_{\ell}=\Tr(\alpha_{\ell}\Gamma')$. Thus, for $a_1,\ldots , a_C, b_1,\ldots  b_D \in \F$, $$G_{\ell}(a_1,\ldots , a_C)=F_{\ell}(G_{ij}(a_1,\ldots , b_D): i \in [C], j \in [D]).$$  Substituting, $a_i=g_i(x)$ and $b_j=0$ we have $$\Tr(\alpha_{\ell}P)=G_{\ell}(g_1,\ldots , g_C)=F_{\ell}(G_{ij}(g_i,0)).$$ Now, $$\Tr(\alpha_{\ell}P)=\Tr(\alpha_{\ell}\Gamma(Q_i:i \in [k])),$$ where $Q_i(x)=\sum_{j=1}^r \alpha_j G_{ij}(g_i',\ldots , 0)$.

Since, this is true for all $\ell \in [r]$, we have $$P(x)=\Gamma(Q_1(x),\ldots, Q_k(x)).$$ where $Q_i$ is defined as above.
This finishes the proof.
\end{proof}

\remove{
\section{Testing}
We need the following tools. 
\begin{lem}[Counting lemma, \cite{GT10}]Let $f_1,\ldots , f_m :\K^n \to [-1,1]$. Let $\calL=\{L_1,\ldots , L_m\}$ be a system of $m$ linear forms in $\ell$ variables of complexity $d$. Then,
$$\l|\E\l[\prod_{i=1}^m f_i(L_i(x_1,\ldots , x_{\ell}))\r]\r| \le \min_{i \in [m]}||f_i||_{U^{d+1}}.$$
\end{lem}

We next require the orthogonality lemma which we have proved.

Finally, we state the subatom selection theorem.
\begin{thm}[Subatom selection \cite{BFL}]Let $\zeta>0$ and $d,R \in \N$. Let $\eta, \de:\N \to \N$ be decreasing functions and let $r:\N \to \N$ be an increasing function. Then there is a constant $C$ such that the following is true.

Let $f^{(1)},\ldots , f^{(R)}:\K^n \to \{0,1\}$. There exist $f^{(i)}_1,f^{(i)}_2,f^{(i)}_3:\K^n \to \R$, a polynomial factor $\B$ of degree $d$, a syntactic refinement $\B'$ of $\B$ of degree $d$ with complexity $C$ and $s \in \T^{|\B'|-|\B|}$ such that
\begin{enumerate}
\item $f^{(i)}=f^{(i)}_1+f^{(i)}_2+f^{(i)}_3,$
\item $f^{(i)}_1=\E[f^{(i)}|\B'],$
\item $\l|\l|f^{(i)}_2\r|\r|_{U^{d+1}} \le \eta(|\B'|),$
\item $f^{(i)}_1, f^{(i)}_1+f^{(i)}_3 \in[0,1]$, $f^{(i)}_2, f^{(i)}_3 \in [-1,1],$
\item $\B$ and $\B'$ are both $r$-regular,
\item For every atom $c \in \T^{|\B|}$, the subatom $c'=(c,s) \in \T^{|\B'|}$ satisfies $$\E\l[|f^{(i)}_3|^2|(c,s)\r]  \le  \de(|\B|)^2,$$
\item If $c$ is a random atom of $\B$, $$\Pr_c  \l[\max_{i \in [R]}\l(\l|\E[f^{(i)}|c]-\E[f^{(i)}|(c,s)]\r|\r)  \ge  \zeta\r]  \le \zeta.$$
\end{enumerate}
\end{thm}

We now state the main testing theorem.
\begin{thm}[Main theorem] For $d \in \N$ and (possibly infinite) fixed collection $$\calA = \{(A_1,\sigma_1), (A_2,\sigma_2), \ldots , (A_i,\sigma_i),\ldots \}$$ of induced affine constraints, each of complexity $\le d$, there are functions $q_{\calA} : (0,1) \to \Z^+$ $\de_{\calA} : (0,1)  \to (0,1)$ and a tester $T$ which, for every $\eps > 0$, makes $q_{\calA}(\eps)$ queries, accepts $\calA$-free functions and rejects functions $\eps$-far from $\calA$-free with probability at least $\de_{\calA}(\eps)$. Moreover, $q_{\calA}$ is a constant if $\calA$ is of finite size.
\end{thm}
}

\bibliographystyle{alpha}
\bibliography{testing}

\end{document}